\definecolor{mygray}{gray}{.9}
\newcommand{\ket}[1]{| #1 \rangle} 
\newcommand{\bb}{\boldsymbol}
\def \d {\mathrm{d}}
\def \i {\mathrm{i}}
\newcounter{parentalgorithm}
\newtheorem{theorem}{Theorem}[section]
\newtheorem{lemma}{Lemma}[section]
\theoremstyle{remark}
\newtheorem{remark}{\bf Remark}[section]
\numberwithin{equation}{section}
\begin{document}

\title{Quantum simulation of multiscale linear transport equations via Schr\"odingerization and exponential integrators}
\author[1,2,3]{Xiaoyang He\thanks{hexiaoyang@sjtu.edu.cn}}
\author[1,2,3]{Shi Jin\thanks{shijin-m@sjtu.edu.cn}\thanks{Corresponding Author.}}
\affil[1]{School of Mathematical Sciences, Shanghai Jiao Tong University, Shanghai, 200240, China}
\affil[2]{Shanghai Center for Applied Mathematics (SJTU Center), Shanghai Jiao Tong University, Shanghai, 200240, China}
\affil[3]{Institute of Natural Sciences, Shanghai Jiao Tong University, Shanghai, 200240, China}
\date{}
\maketitle

\begin{abstract}
\par In this paper, we present two Hamiltonian simulation algorithms for multiscale linear transport equations, combining the Schr\"odingerization method [S. Jin, N. Liu and Y. Yu, Phys. Rev. Lett, 133 (2024), 230602][S. Jin, N. Liu and Y. Yu, Phys. Rev. A, 108 (2023), 032603] and exponential integrator while incorporating incoming boundary conditions. These two algorithms each have advantages in terms of design easiness  and scalability, and the query complexity of both algorithms, $\mathcal{O}(N_vN_x^2\log N_x)$, outperforms existing quantum and classical algorithms for solving this equation. In terms of the theoretical framework, these are the first quantum Hamiltonian simulation algorithms for multiscale linear transport equation to combine the Schr\"odingerization method with an effective asymptotic-preserving schemes, which are efficient for handling multiscale problems with stiff terms. 
\end{abstract}

\textbf{Keywords}: Quantum simulation, linear transport equations, Schr\"odingerization method, exponential integrators, AP-scheme.

\tableofcontents

\maketitle
\section{Introduction}
\par The linear transport equation is a type of integro-partial differential equation defined in phase space, used to simulate the probability distribution of particles in a background medium. It has extensive applications in important fields such as astrophysics, nuclear reactions, and medical imaging\cite{CaseZweifel,Chander,MRS,Ryzhik}. However, the classical computation of transport equation is challenging due to the curse of dimensionality and possible multiple time and spatial scales which needs to be resolved numerically, leading to significant computational costs. 
\par The advantages of quantum computing in scientific computation are increasingly being recognized by scientists. Over the years, this field has continuously evolved, yielding a series of outstanding achievements. Harrow, Hassidim, and  Lloyd\cite{Harrow2009QuantumAF} proposed the HHL algorithm, which provides a fundamental method for solving linear equations using quantum computing. The quantum linear algebra algorithms can be extended to solve linear ordinary and partial differential equations after their spatial and/or temporal discretizations
\cite{JLY2022multiscale,BerryChilds2017ODE,Childs2017QLSA,Costa2021QLSA,Subasi2019AQC,Harrow2009QuantumAF}, including linear partial differential equations (PDEs) with multiple scales \cite{Jin2022TimeCA}.\cite{Berry2015Hamiltonian} introduced a method for Hamiltonian simulation using quantum walk algorithms. Hamiltonian simulation is one of the most promising directions for demonstrating quantum advantage early on. When performing simulations on a quantum computer, the Hamiltonian $H$ can be decomposed into a sum of tensor products of Pauli matrices, and then the product formula can be used to approximate the Hamiltonian operator.
\par One difficulty in solving linear PDEs using Hamiltonian simulation is how to transform a non-Hermitian system into a Hermitian Hamiltonian. Recently, two approaches  have been proposed to address this issue. An et al.\cite{An2023QuantumAF} converted non-unitary dynamics into a linear combination of Hamiltonian simulation problems (LCHS) and demonstrated the advantages of this method in terms of state preparation costs.  Jin et al.\cite{Jin2024Schrodingerization,Jin2023Detailes} proposed an algorithm that maps linear differential equations to a higher-dimensional unitary dynamics, known as the Schr\"dingerization method. Recently, the Schr\"odingerization method has demonstrated its powerful universality in its  field of computational science, for both qubit and continuous platforms. It applications include discrete iterative schemes in linear algebra or dynamical systems \cite{Jin2024QuantumSO}, time-dependent PDEs\cite{Cao2023QuantumSF}, physical boundary problems\cite{Jin2024PhysicalBoundary,Jin2024QuantumSF}, and multiscale equations\cite{Hu2024QuantumMultiscale}, but also in its capability to solve specific differential equation problems, such as the Maxwell equations\cite{Jin2023QuantumSO}, the Fokker-Planck equations \cite{Jin2024FokkerPlanck}, the stochastic differential equations \cite{Jinweiwei2025}, the highly oscillatory transport equations \cite{Gu2025HighlyOscillatory}, the Black-Scholes equations\cite{Jin2025BlackScholes}, and the elastic wave equations\cite{Jin2025ElasticWave}, etc. Due to the aforementioned computational challenges,  exploring how to use the Schr\"odingerization method and Hamiltonian simulation to handle linear transport equations is an important yet feasible problem.

\par In this paper, we consider the following  linear multiscale transport equation under diffusive scaling\cite{Jin2000DiffusiveRelaxation}:
\begin{equation}
    \label{equ:transport}
    \varepsilon\partial_t f(\bb{v}) + \bb{v} \cdot \nabla_x f(\bb{v}) = \frac{1}{\varepsilon}\left(\frac{\sigma_S}{S} \int_{\Omega} f(\bb{v}') d \bb{v}' - \sigma f(\bb{v})\right) + \varepsilon Q,
\end{equation}
where $f(t,\bb{x},\bb{v})$ represents the particle probability density distribution. Here, $\bb{x}\in\mathbb{R}^d$ denotes position, $t$ is time, and $\bb{v} \in \Omega\subset\mathbb{R}^d$ corresponds to the cosine of the angle between the velocity and position vectors, with $\int_{\Omega} d\bb{v}=S$ and $\Omega$ being symmetric in $\bb{v}$. The coefficients $\sigma=\sigma(\bb{x})$ and $\sigma_S=\sigma_S(\bb{x})$ are the total transport and scattering coefficients, respectively, while $Q=Q(\bb{x})$ is the source term, and $\varepsilon$ represents the mean free path. Typically, $\sigma_S=\sigma-\varepsilon^2\sigma_A$, where $\sigma_A=\sigma_A(\bb{x})$ is the absorption coefficient. Note that, the scaling in \eqref{equ:transport} is standard and yields a diffusion equation in the limit $\varepsilon \to 0$.
\par Earlier, quantum algorithms for multiscale linear transport equations have already been proposed \cite{Xiaoyang2023TimeCA}. He et al. transformed the one-dimensional linear transport equation into a linear system using a finite difference scheme and solved it using the HHL algorithm. It achieved a query complexity of $\mathcal{O}(N_v^2N_x^2\ln N_x)$, where $N_x$ is the number of spatial discretizations. This represents an exponential improvement over the classical algorithm's time complexity of $\mathcal{O}(N_v^2N_x^3)$, highlighting the advantage of quantum algorithms in computing this equation. However, the deep circuits and noise sensitivity of HHL algorithm make it impractical for NISQ devices, the currently typical quantum hardware platforms, while Hamiltonian simulation is more feasible \cite{Preskill2018}. Therefore, designing quantum algorithms for multiscale linear transport equations based on Hamiltonian simulation remains worthwhile to explore. 
\par However,  applying the Schr\"odingerization method straightforwardly  does not lead to efficient solvers for multiscale linear transport equations. This is because traditional Schr\"odingerization algorithms often employ spatial discretization schemes similar to forward schemes, which cannot adequately handle stiff terms and thus fail to solve multiscale problems efficiently with severe time-step restrictions and consequantly,   the query complexity of such algorithms may depend on the small physical scales $\varepsilon$, thus inefficient even for quantum algorithms.  Our goal to addressing the multiscale problems aiming to achieve an computational  efficiency {\it independent of} $\varepsilon$.  This requires eliminating the impact of stiff terms. The AP (Asymptotic Preserving) scheme-a numerical method designed to preserve the asymptotic limit from microscopic to macroscopic models within a discretized framework\cite{Jin2000DiffusiveRelaxation,Hu2017AP}-can effectively resolve this issue, thus will be applied together with Schr\"odingerization. 

We present two new quantum simulation algorithms for solving multiscale transport equations, each offering distinct advantages. Building upon exponential integrators and Schrödingerization techniques \cite{Jin2024QuantumSO}, our first approach (Section \ref{sec:The iterative methods}) provides an efficient iterative solution to the discrete system 
$x^{n+1} = Ax^n+b$. Our second algorithm (Section \ref{sec:Steady-state solution}) reformulates the problem as a linear system $Hu=b$, solved through Hamiltonian simulation of the corresponding dynamical system $du/dt=b-Hu$.In this section, a key technical breakthrough involves developing a Laplace transform-based method for accurate matrix exponential evaluation, which overcomes the limitations of conventional norm estimates and enables tighter error control. Remarkably, both methods achieve an optimized complexity of $\mathcal{O}(N_vN_x^2logN_x)$, outperform any existing classical and quantum algorithms for the same problem.  

The paper unfolds as follows: Section \ref{sec:Exponential integrators} lays the theoretical foundation by developing the exponential integrator framework and presenting the core discretization scheme. Section \ref{sec:The iterative methods} details the first quantum algorithm, including the iterative solver architecture and its complexity analysis. Section \ref{sec:Steady-state solution} derives the enhanced linear system formulation and demonstrate its complexity analysis. Section \ref{sec:Numerical examples} then provides comprehensive numerical validation and Section \ref{sec:Conclusions and Discussions} concludes with broader perspectives.

\section{Exponential integrators for the linear transport equation}
\label{sec:Exponential integrators}

\par In this section, we will provide a detailed explanation of how to use the exponential integrators method to handle the linear transport equations, transforming them into an iterative scheme of the form $x^{n+1}=Ax^n+b$.

\subsection{Diffusive relaxation system for the linear transport equation}

\par We first consider one-dimensional linear transport equations with constant coefficient by choosing $Q=0$, $\sigma_A=0$, and $\sigma=\sigma_S=1$. The equation is given by
\begin{equation}\label{equ:transport1}
    \varepsilon \partial_{t} f(t,x,v)+v\cdot\nabla_x f(t,x,v)=\frac{1}{\varepsilon}\left(\frac{1}{2} \int_{-1}^1f(t,x,v') dv'-f(t,x,v)\right),
\end{equation}
with the following incoming boundary condition
\begin{equation}\label{incoming boundary}
    \begin{aligned}
    &f(t,x_L,v)=F_L(v),\quad v>0,\\
    &f(t,x_R,v)=F_R(v),\quad v<0.
    \end{aligned}
\end{equation}
\par One of the computational challenges  of solving this equation stems from handling the stiff terms in the convection and collision terms. To address this issue, many AP formulation have been proposed for handling time-dependent kinetic and hyperbolic equations, such as the diffusive relaxation scheme \cite{Jin2000DiffusiveRelaxation} and the micro-macro decomposition \cite{Lemou2008APMM}.

In this article, we  use the diffusive relaxation method to handle Eq. (\ref{equ:transport1}). Note that one can equivalently write Eq. (\ref{equ:transport1}) for the cases $v>0$ and $v<0$ respectively:
\begin{equation}\label{equ:2}
\begin{aligned}
\varepsilon \partial_{t} f(t,x,v)+v\partial_x f(t,x,v) &=\frac{1}{\varepsilon}\left(\frac{1}{2}\int_{-1}^1 f(t,x,v)dv-f(t,x,v)\right), \\
\varepsilon \partial_{t} f(t,x-v)-v\partial_x f(t,x,-v) &=\frac{1}{\varepsilon}\left(\frac{1}{2}\int_{-1}^1 f(t,x,v)dv-f(t,x,-v)\right) .
\end{aligned}
\end{equation}
Introducing the even-parity $r(t,x,v)=[f(t,x,v)+f(t,x,-v)]/2$ and odd-parity $j(t,x,v)=[f(t,x,v)-f(t,x,-v)]/(2\varepsilon)$, then one has the following system:
\begin{equation*}
\begin{aligned}\label{equ:5}
&\partial_{t} r(t,x,v)+v \partial_x j(t,x,v) =\frac{1}{\varepsilon^2}(\rho(t,x)-r(t,x,v)), \\
&\partial_{t} j(t,x,v)+\frac{1}{\varepsilon^2} v\cdot\partial_x r(t,x,v) =-\frac{1}{\varepsilon^2} j(t,x,v),
\end{aligned}
\end{equation*}
where $\rho(t, x)$, the mass density function, is defined as $\rho(t, x)=\int_0^1r(t,x,v)dv$. The method proposed by Jin et al.\cite{Jin2000DiffusiveRelaxation} is a system of non-stiff hyperbolic equations with stiff terms on the right-hand side, referred to as the {\it diffusive relaxation system}
\begin{equation}
\begin{aligned}
\label{equ:6}
&\partial_{t} r(t,x,v)+v \partial_x j(t,x,v) =-\frac{1}{\varepsilon^2}(r(t,x,v)-\rho(t,x)) ,\\
&\partial_{t} j(t,x,v)+\phi v \partial_x r(t,x,v) =-\frac{1}{\varepsilon^2}\left(j(t,x,v)+\left(1-\varepsilon^2 \phi\right) v \partial_x r(t,x,v)\right),
\end{aligned}
\end{equation}
where $\phi=\phi(\varepsilon)$ with $0 \le \phi \le 1 / \varepsilon^{2}$. This condition ensures that both $\phi(\varepsilon)$ and $1 - \varepsilon^2 \phi(\varepsilon)$ remain positive, guaranteeing uniform stability of the problem for small $\varepsilon$. Generally, $\phi(\varepsilon)$ is chosen as $\min\{1,1/\varepsilon\}$.  Since our focus is primarily on the regime where $\varepsilon \ll 1$, we choose $\phi=1$. Subsequently, one further decomposes Eq. (\ref{equ:6}) into the following relaxation step and transport step:

\begin{itemize}
    \item \textit{Relaxation step.}
    \begin{equation}\label{equ:the relaxation step}
    \begin{aligned}
    &\partial_{t} r(t,x,v) =-\frac{1}{\varepsilon^2}(r(t,x,v)-\rho(t,x)),
    \quad\quad\quad\quad\,\,(a1)\\
    &\partial_{t} j(t,x,v) =-\frac{1}{\varepsilon^2}\left(j+\left(1-\varepsilon^2\right) v \partial_x r(t,x,v)\right);
    \quad(a2)
    \end{aligned}
    \end{equation}
    \item  \textit{Convection step.}
    \begin{equation}\label{equ:the transport step}
    \begin{aligned}
    &\partial_{t} r(t,x,v)+v \partial_x j(t,x,v)=0,\quad\, (b1)\\
    &\partial_{t} j(t,x,v)+v \partial_x r(t,x,v)=0.\quad\,\, (b2)
    \end{aligned}
    \end{equation}
\end{itemize}

\begin{remark}
Although we do not directly solve the general form mentioned in Eq. (\ref{equ:transport}) here, the methods we use and the algorithms we propose remain applicable to higher-dimensional cases with variable cofficients. For details, please refer to section \ref{section:the general case of transport equations}.
\end{remark}

\subsection{Exponential integrators of the linear transport equation}

\par To solve the diffusion relaxation scheme mentioned in Eqs. (\ref{equ:the relaxation step}) and (\ref{equ:the transport step}) using discrete numerical methods, we need to adopt an AP (Asymptotic Preserving) discretization scheme. This ensures that the numerical scheme can maintain the asymptotic limit from microscopic models to macroscopic models. Currently, many methods have been proposed for multiscale linear transport equations \cite{Jin2000DiffusiveRelaxation,
Hu2017AP}. Here, we use the exponential integration method proposed by \cite{Pareschi_Russel1999}, where the relaxation step is rearranged into an equivalent form as follows
\begin{equation}\label{equ:exp rela step}
\begin{aligned}
&\partial_{t}[(r(t,x,v)-\rho(t,x))e^{\frac{t}{\varepsilon^2}}] =0,\\
&\partial_t[(j(t,x,v)+v\partial_x\rho(t,x))e^{\frac{t}{\varepsilon^2}}] = 
-\frac{1-\varepsilon^2}{\varepsilon^2}e^{\frac{t}{\varepsilon^2}}v\partial_xr(t,x,v)
+\frac{1}{\varepsilon^2}e^{\frac{t}{\varepsilon^2}}v\partial_x\rho(t,x), 
\end{aligned}
\end{equation}
in which the first equation can be directly derived from Eq. (\ref{equ:the relaxation step})(a1), while the second equation utilizes the combination of Eq. (\ref{equ:the relaxation step})(a2), Eq. (\ref{equ:exp rela step})(a1), and $\partial_t\rho(t,x)=0$ ($\rho$ is independent of $t$ during the collision). This formula is obtained by integrating Eq. (\ref{equ:the relaxation step})(a1) over the interval $[0,1]$ with respect to $v$ and using the definition of $\rho(t,x)=\int_0^1r(t,x,v)dv$.
\par Next, we discretize the equivalent form of the relaxation step proposed in Eq. (\ref{equ:exp rela step}) and the convection step in Eq. (\ref{equ:the transport step}). We denote $n$, $m$, and $k$ as the discretization indices of $t$, $x$, and $v$, respectively, and let $r_{k,m}^*$ and $j_{k,m}^*$ be the intermediate results of the relaxation step. For the discretization of velocity, we employ the discrete-ordinate method, selecting the discrete velocity $v_k$ as Legendre-Gauss quadrature points. Consequently, $\rho(t,x)$ can be integrated using Gauss's quadrature rule. Thus, we obtain the exponential integral discretization schemes for the relaxation step and the convection step:
\begin{enumerate}
\item \textit{Relaxation step.}
By employing the central difference approximation for spatial discretization and the explicit Euler scheme for time integration, we obtain
\begin{equation}
\begin{aligned}\label{equ:the discreted relaxation step}
&r_{k,m}^* = 
e^{-\frac{\tau}{\varepsilon^2}}r_{k,m}^n
+(1-e^{-\frac{\tau}{\varepsilon^2}})
\rho_m^n,\\
&j_{k,m}^* = 
e^{-\frac{\tau}{\varepsilon^2}}
j_{k,m}^n
-(1-e^{-\frac{\tau}{\varepsilon^2}}-\frac{\tau}{\varepsilon^2}e^{-\frac{\tau}{\varepsilon^2}})
v_k
\frac{\rho_{m+1}^n-\rho_{m-1}^n}{2h}
-
\frac{\tau}{\varepsilon^2}
e^{-\frac{\tau}{\varepsilon^2}}
(1-\varepsilon^2)
v_k
\frac{r_{m+1}^n-r_{m-1}^n}{2h}.
\end{aligned}
\end{equation}
\item \textit{Convection step.}
By employing the first order upwind method, we obtain
\begin{equation}
\begin{aligned}\label{equ:the discreted convection step}
&r_{k,m}^{n+1} = r_{k,m}^{*}
-\frac{\tau v_k}{2h}
(j_{k,m+1}^{*}-j_{k,m-1}^{*})
+\frac{\tau v_k}{2h}
(r_{k,m+1}^{*}-2r_{k,m}^{*}+r_{k,m-1}^{*}),\\
&j_{k,m}^{n+1} = j_{k,m}^{*}
-\frac{\tau v_k}{2h}
(r_{k,m+1}^{*}-r_{k,m-1}^{*})
+\frac{\tau v_k}{2h}
(j_{k,m+1}^{*}-2j_{k,m}^{*}+j_{k,m-1}^{*}).
\end{aligned}
\end{equation}
The detailed procedure can be found in \cite{Jin2000DiffusiveRelaxation}. Note that the boundary treatment is not considered here and will be covered in the next subsection.
\end{enumerate}

\subsection{Incoming boundary condition}

\par Previous studies on multiscale linear transport equations\cite{Xiaoyang2023TimeCA} have considered zero boundary conditions, but this does not fully represent real-world application cases. In this section, we will discuss how to handle boundary conditions as in \cite{Jin2000DiffusiveRelaxation}. First, we provide the boundary conditions for $r(t,x,v)$ and $j(t,x,v)$ as follows
\begin{equation}\label{boundary:rj}
(r(t,x,v)+\varepsilon j(t,x,v))|_{x= x_L} = F_L(v),\qquad
(r(t,x,v)-\varepsilon j(t,x,v))|_{x= x_R} = F_R(v).
\end{equation}
In particular, when $\varepsilon \ll 1$, we use the approximation $j(t,x,v) \approx -v\partial_x r(t,x,v)$ for steady-state solutions. Substituting this into \eqref{boundary:rj} yields
\begin{equation}\label{boundary:r}
(r(t,x,v)-\varepsilon v\partial_x r(t,x,v))|_{x= x_L} = F_L(v),\qquad
(r(t,x,v)+\varepsilon v\partial_x r(t,x,v))|_{x= x_R} = F_R(v).
\end{equation}
We then discretize Eq. (\ref{boundary:r}) and obtain the boundary conditions for $r(t,x,v)$ as
\begin{align*}
&r_{k,0} = \frac{\varepsilon v_k}{\varepsilon v_k+h}r_{k,1}+\frac{h}{\varepsilon v_k+h}F_L(v_k),\quad r_{k,N_x+1} = \frac{\varepsilon v_k}{\varepsilon v_k+h}r_{k,N_x}+\frac{h}{\varepsilon v_k+h}F_R(v_k),
\end{align*}
while the boundary conditions for $j$ is
\begin{align*}
&j_{k,0}
=
-\frac{ v_k}{\varepsilon v_k+h}r_{k,1}
+\frac{ v_k}{\varepsilon v_k+h}F_L(v_k),\quad j_{k,N_x+1}
=\frac{ v_k}{\varepsilon v_k+h}r_{k,N_x}
-\frac{ v_k}{\varepsilon v_k+h}F_R(v_k).
\end{align*}
Then one can get the boundary conditions for $\rho$.
Next, we incorporate the aforementioned incoming boundary conditions to derive the matrix form of the relaxation step for the exponential integrators in Eq. (\ref{equ:the discreted relaxation step}). Specifically, for a fixed velocity $v_k$, we define the vectors  
$\mathbf{r}_k^n=[r_{k,1}^n,\cdots,r_{k,N_x}^n]$ \text{and} $\mathbf{j}_k^n=[j_{k,1}^n,\cdots, j_{k,N_x}^n]$ (same to $\mathbf{r}_k^*$ and $\mathbf{j}_k^*$). Subsequently, we assemble the global matrices  
$\mathbf{r}^n=[\mathbf{r}_1^n;\cdots;\mathbf{r}_{N_v}^n]$ and $\mathbf{j}^n=[\mathbf{j}_1^n;\cdots;\mathbf{j}_{N_v}^n]$ (same to $\mathbf{r}^*$ and $\mathbf{j}^*$), yielding the following compact matrix representation:
\begin{equation}\label{equ:r*j*}
\begin{aligned}
&\bb{r}^* = \left[\beta_1+(1-\beta_1)(W\otimes I)\right]\bb{r}^n,\\
&\bb{j}^* = \beta_1\bb{j}^n-
\left[\beta_2
(VW\otimes I)
\nabla_{\varepsilon,x}+
\beta_3(V\otimes I)\nabla_{\varepsilon,x}\right]
\bb{r}^n
-
\left[\beta_2
(VW\otimes I)+\beta_3(V\otimes I)\right]B_{v}^n,
\end{aligned}
\end{equation}
where the parameters are defined as
\begin{equation*}
\beta_1 = e^{-\frac{\tau}{\varepsilon^2}},\quad \beta_2 = 1-\left(1+\frac{\tau}{\varepsilon^2}\right)e^{-\frac{\tau}{\varepsilon^2}},\quad
\beta_3 = \frac{\tau}{\varepsilon^2}e^{-\frac{\tau}{\varepsilon^2}}(1-\varepsilon^2),
\end{equation*}
and $V = \text{diag}([v_1,v_2,\cdots,v_{N_v}])$, $W = \bb{1}_{N_v}\otimes [w_1,w_2,\cdots,w_{N_v} ]$, $\nabla_{\varepsilon,x}=\text{diag}[\nabla_{\varepsilon,x}^1,\cdots,\nabla_{\varepsilon,x}^{N_v}]$, $B^{n} = 
[B_1^{n};\cdots;B_{N_v}^{n}]$, with $\nabla_{\varepsilon,x}^k$ and $B_k^{n}$ defined as
\begin{equation*}
\nabla_{\varepsilon,x}^k =
\frac{1}{2h}
\begin{bmatrix}
\frac{-\varepsilon v_k}{\varepsilon v_k+h}&1&& & \\
 -1  &  0 &  1&    & \\
& \ddots&\ddots&\ddots & \\
&& -1& 0 &1 \\
& & & -1 &\frac{\varepsilon v_k}{\varepsilon v_k+h} 
\end{bmatrix},
\quad
B_k^{n} = \frac{1}{2(\varepsilon v_k+h)}
\begin{bmatrix}
-F_L(v_k)\\
0\\
\vdots\\
0\\
F_R(v_k)
\end{bmatrix}.
\end{equation*}
Furthermore, we can simplify the form of $\nabla_{\varepsilon,x}$ as $\nabla_{\varepsilon,x}=\frac{1}{2h}(I\otimes D_h+E_\nabla\otimes I_1)$ by defining it through the following matrix:
\begin{equation*}
 D_h = 
\begin{bmatrix}
0  &  1       &                &    &       \\
-1  &  0        &  1        &    &        \\
& \ddots    &   \ddots       & \ddots  &   \\
&           &    -1      & 0  &    1    \\
&           &                &  -1  &0
\end{bmatrix},
I_1=\begin{bmatrix}
-1 &        &                &    &       \\
  &  0        &         &    &        \\
&     &   \ddots       &  &   \\
&           &         &   0 &       \\
&           &                &  &1
\end{bmatrix},
E_{\nabla} =
\begin{bmatrix}
\frac{\varepsilon v_1}{\varepsilon v_1+h} &       &                &    &       \\
 &  \frac{\varepsilon v_2}{\varepsilon v_2+h}        &         &    &        \\
&     &   \ddots       &   &   \\
&           &                 &\frac{\varepsilon v_{N_v}}{\varepsilon v_{N_v}+h}
\end{bmatrix},
\end{equation*}
where $D_h$ is the central difference matrix. Note that the discretization error of this step in the $x$ direction is $\mathcal{O}(h)$. 

We verify whether the numerical scheme for the relaxation step is AP, that is,  whether the discrete scheme can reproduce the discrete limit of the original problem when $\varepsilon \to 0$. Under this condition, the three parameter set satisfy: $\beta_1 \rightarrow 0$,
$\beta_2 \rightarrow 1$ and
$\beta_3 \rightarrow 0$, yielding
\eqref{equ:r*j*} gives 
\begin{equation} \label{equ:r*j*_limit}
\begin{aligned}
\bb{r}^* &= \bb{\rho}^n + o(\varepsilon^2),\\
\bb{j}^* &= -(V\otimes 
\bb{\mathcal{D}_h})\bb{\rho}^n
+o(\varepsilon^2),
\end{aligned}
\end{equation}
where $\bb{\mathcal{D}_h}\bb{\rho}^n$ denotes the central difference approximation of $\bb{\rho}$. Substituting \eqref{equ:r*j*_limit} into the explicit convection step ensures the correct discrete  Euler limit for $r$ and $j$, demonstrating that \eqref{equ:r*j*} is strongly asymptotically preserving (AP) in time.
\par Similarly, under the condition of the incoming boundary condition, we can obtain the matrix form of the discretization scheme for the convection step presented in Eq. (\ref{equ:the discreted convection step}) as follows
\begin{equation}\label{equ:r_n+1j_n+1}
\begin{aligned}
&\bb{r}^{n+1} =  (I+\frac{\lambda}{2} V\otimes L_{h}^{\mathrm{bdry}})\bb{r}^* - \frac{\lambda}{2}V\otimes D_h\bb{j}^*
+ \frac{\lambda}{2}\bb{f}^*, \\
&\bb{j}^{n+1} = (I+\frac{\lambda}{2}V\otimes L_h) \bb{j}^* - \frac{\lambda}{2} V\otimes D_{h}^{\mathrm{bdry}}\bb{r}^* + \frac{\lambda}{2} \bb{g}^*,
\end{aligned}
\end{equation}
in which $\lambda=\frac{\tau}{h}$, $L_{h}^{\mathrm{bdry}}=\text{diag}[L_{1,h}^{\mathrm{bdry}};\cdots;L_{N_v,h}^{\mathrm{bdry}}]$ and $D_{h}^{\mathrm{bdry}}=[D_{1,h}^{\mathrm{bdry}};\cdots;D_{N_v,h}^{\mathrm{bdry}}]$, which incorporate boundary effects, are defined as
\begin{equation*}
L_{k,h}^{\mathrm{bdry}} = L_h+\text{diag}\left[\frac{\varepsilon v_k-v_k}{\varepsilon v_k+h};0;\cdots;0;\frac{\varepsilon v_k-v_k}{\varepsilon v_k+h}\right],\quad
D_{k,h}^{\mathrm{bdry}} = D_h+\text{diag}\left[\frac{v_k-\varepsilon v_k}{\varepsilon v_k+h} ;0;\cdots;0;-\frac{v_k-\varepsilon v_k}{\varepsilon v_k+h}\right],
\end{equation*}
Similarly, we can simplify the form of $L_{h}^{\mathrm{bdry}}$ as $L_{h}^{\mathrm{bdry}}=I\otimes L_h+\frac{\varepsilon-1}{\varepsilon}E_\nabla\otimes I_2$ and $D_{h}^{\mathrm{bdry}}$ as $D_{h}^{\mathrm{bdry}}=I\otimes D_h+\frac{\varepsilon-1}{\varepsilon}E_\nabla\otimes I_1$, by defining it through the following matrix:
\begin{equation*}
L_h =
\begin{bmatrix}
-2 &  1       &                &    &       \\
1  &  -2        &  1        &    &        \\
& \ddots    &   \ddots       & \ddots  &   \\
&           &    1      & -2  &    1    \\
&           &                &  1 &-2
\end{bmatrix},
\quad
I_2=\begin{bmatrix}
1 &        &                &    &       \\
  &  0        &         &    &        \\
&     &   \ddots       &  &   \\
&           &         &   0 &       \\
&           &                &  &1
\end{bmatrix},
\end{equation*}
where $L_h$ is the second derivative matrix. The boundary conditions are given by $\bb{f}^{*} = [\bb{f}_{1}^{*};\cdots;\bb{f}_{N_v}^{*}]$, $\bb{g}^{*} = [\bb{g}_{1}^{*};\cdots;\bb{g}_{N_v}^{*}]$
with
\begin{equation*}
\bb{f}_{k}^{*} = 
\begin{bmatrix}
\frac{hv_k+v_k^2}{\varepsilon v_k+h}F_L(v_k);
0;\cdots;0;
\frac{hv_k+v_k^2}{\varepsilon v_k+h}F_R(v_k)
\end{bmatrix},\quad
\bb{g}_{k}^{*} = 
\begin{bmatrix}
\frac{hv_k+v_k^2}{\varepsilon v_k+h}F_L(v_k);
0;\cdots;0;
-\frac{hv_k+v_k^2}{\varepsilon v_k+h}F_R(v_k)
\end{bmatrix}.
\end{equation*}
\par Finally, we combine the discretization of the relaxation step in Eq. (\ref{equ:r*j*}) with the discretization of the convection step in Eq. (\ref{equ:r_n+1j_n+1}), eliminating the intermediate variables $r^*$ and $j^*$. Through this, we can write the discrete form of the overall exponential integration method at the discrete time $n$ as follows:
\begin{equation}\label{equ:iter_sys_1}
\begin{aligned}
\bb{r}^{n+1} = 
A_1\bb{r}^n+B_1\bb{j}^n+\bb{b}_r,\\
\bb{j}^{n+1}=
A_2\bb{r}^n+B_2\bb{j}^n+\bb{b}_j,
\end{aligned}
\end{equation}
in which the definitions of matrices $A_1$ and $A_2$ are as follows
\begin{equation*}
\begin{aligned}
A_1 
=&
(I+\frac{\lambda}{2} (V\otimes I)(I \otimes L_h+\frac{\varepsilon-1}{\varepsilon}E_\nabla\otimes I_2))
(\beta_1 I+(1-\beta_1)({W}\otimes I))\\
&\quad+
\frac{\lambda}{4h}
\beta_2
((V^2W\otimes D_h)(I\otimes D_h+E_\nabla\otimes I_1))
+
\frac{\lambda}{4h}\beta_3((V^2\otimes D_h)(I\otimes D_h+E_\nabla\otimes I_1))
,\\
A_2 
=&
-(I+\frac{\lambda}{2} V\otimes L_h)
(\frac{\beta_3}{2h}
(V\otimes I)(I\otimes D_h+E_\nabla\otimes I_1)+
\frac{\beta_2}{2h}(VW\otimes I)(I\otimes D_h+E_\nabla\otimes I_1))\\
&\quad- 
\frac{\lambda}{2}((V\otimes I)(I\otimes D_h+\frac{\varepsilon-1}{\varepsilon}E_\nabla\otimes I_1))
(\beta_1 I+(1-\beta_1)({W}\otimes I)),
\end{aligned}
\end{equation*}
and the definitions of matrices $B_1$ and $B_2$ are as follows
\begin{equation*}
\begin{aligned}
B_1=-\frac{\lambda}{2}\beta_1(V\otimes D_h),
\quad
B_2=\beta_1(I+\frac{\lambda}{2} V\otimes L_h),
\end{aligned}
\end{equation*}
while the boundary data $\bb{b}_r$ and $\bb{b}_j$ are defined as follows
\begin{align*}
\bb{b}_r 
&=
\frac{\lambda}{2}\beta_2(V^2W\otimes D_h)B^n+
\frac{\lambda}{2}\beta_3(V^2\otimes D_h)B^n+
\frac{\lambda}{2}\bb{f}^*,\\
\bb{b}_j 
&=
-\beta_2
(I+\frac{\lambda}{2} V\otimes L_h)
(VW\otimes I)B^n-
\beta_3(I+\frac{\lambda}{2} V\otimes L_h)
(V\otimes I)B^n
+\frac{\lambda}{2}\bb{g}^*.
\end{align*}
Hence, our goal is to design appropriate algorithms to solve 
the discrete time evolution presented in Eq. (\ref{equ:iter_sys_1}).

\subsection{Discretization scheme for the general case equation}
\label{section:the general case of transport equations}

\par For the general case \eqref{equ:transport} in one space dimension, 
we employ the framework of Jin et al.\cite{Jin2000DiffusiveRelaxation} to decompose this equation into the following relaxation step and convection step.
\begin{itemize}
\item \textit{Relaxation step.}
\begin{equation}\label{equ:7}
    \begin{aligned}
        \partial_{t} r &=-\frac{{\sigma_S}}{\varepsilon^2}(r-\rho), \\
        \partial_{t} j &=-\frac{1}{\varepsilon^2}\left({\sigma_S}j+\left(1-\varepsilon^2\right) v \partial_x r\right).
    \end{aligned}
\end{equation}
\item \textit{Convection step:}
\begin{equation}\label{equ:8}
    \begin{aligned}
        \partial_{t} r+v \partial_x j&={-\sigma_Ar+Q}, \\
        \partial_{t} j+v \partial_x r&={-\sigma_Aj},
    \end{aligned}
\end{equation}
\end{itemize}
where the boundary condition is $j = -{\frac{v}{\sigma}}\partial_x r$.
\par Therefore, we can directly present the discretization scheme of the relaxation-convection scheme. By setting the same $\mathbf{r}^n$, $\mathbf{j}^n$, $\mathbf{r}^*$, and $\mathbf{j}^*$, we obtain the following matrix representation for the relaxation step.
\begin{equation}
\begin{aligned}
\bb{r}^* &= 
(I\otimes \text{diag}
(\beta_1)
\bb{r}^n+
(I\otimes \text{diag}((1-\beta_1)
\bb{\rho}^n,\\
\bb{j}^* &= 
(I\otimes \text{diag}(\beta_1))
\bb{j}^n-
(I\otimes \text{diag}(\beta_2))
(VW\otimes I)
\nabla_{\varepsilon,x}\bb{r}^n
-
(I\otimes \text{diag}(\beta_3))
(V\otimes I)\nabla_{\varepsilon,x}\bb{r}^n\\
&\quad-
(I\otimes \text{diag}(\beta_2)
(VW\otimes I)
B^n-
(I\otimes \text{diag}(\beta_3))(V\otimes I)B^n,
\end{aligned}
\end{equation}
where the parameters are defined as
\begin{equation*}
\begin{aligned}
\beta_1 = e^{-\frac{{\sigma_S(x_m)}\tau}{\varepsilon^2}},\quad
\beta_2 = {\frac{1}{\sigma(x_m)}}(1-e^{-\frac{{\sigma_S}\tau}{\varepsilon^2}}-\frac{{\sigma_S}\tau}{\varepsilon^2}e^{-\frac{{\sigma_S}\tau}{\varepsilon^2}}),\quad
\beta_3 = \frac{\tau}{\varepsilon^2}
e^{-\frac{{\sigma_S}\tau}{\varepsilon^2}}
(1-\varepsilon^2)).
\end{aligned}
\end{equation*}
We maintain the definitions for \( V \), \( W \), \( \nabla_{\varepsilon,x} \), and \( B^n \), where the specific definitions of \( \nabla_{\varepsilon,x}^k \) and \( B_k^n \) are as follows:
\begin{equation*}
\nabla_{\varepsilon,x}^k =
\frac{1}{2h}
\begin{bmatrix}
\frac{-\varepsilon v_k}{\varepsilon v_k+{\sigma h}}    &  1       &                &    &       \\
 -1  &  0        &  1        &    &        \\
 &-1  &  0        &  \ddots        &    &        \\
&     & \ddots    &   \ddots       & \ddots  &   \\
&     &           &    -1      & 0  &    1    \\
 &    &           &                &  -1  &\frac{\varepsilon v_k}{\varepsilon v_k+{\sigma h}} 
\end{bmatrix},\quad
B_k^{n} = 
\begin{bmatrix}
-\frac{{\sigma }}{2(\varepsilon v_k+{\sigma h)}}F_L(v_k)\\
0\\
\vdots\\
0\\
\frac{{\sigma }}{2(\varepsilon v_k+{\sigma h)}}F_R(v_k)
\end{bmatrix}.
\end{equation*}
\par For the discrete matrix formulation of the convection step, we present it directly as follows:
\begin{equation}
\begin{aligned}
 \bb{r}^{n+1} &=  (({1-\tau \sigma_A})I+ \frac{\lambda}{2} (V\otimes I)L_{h}^{\mathrm{bdry}} ) \bb{r}^* - \frac{\lambda}{2}V\otimes D_h\bb{j}^*
+ \frac{\lambda}{2} \bb{f}_v^*+
{\tau\bb{Q}}, \\
\bb{j}^{n+1} &= (({1-\tau\sigma_A})I + \frac{\lambda}{2} V\otimes L_h) \bb{j}^* - \frac{\lambda}{2} (V\otimes I)D_{h}^{\mathrm{bdry}}\bb{r}^* + \frac{\lambda}{2} \bb{g}_v^*.
\end{aligned}
\end{equation}
We maintain the definitions of $\lambda$, $L_{h}^{\mathrm{bdry}}$, and $D_{h}^{\mathrm{bdry}}$, where the specific definitions of $L_{k,h}^{\mathrm{bdry}}$ and $D_{k,h}^{\mathrm{bdry}}$ are as follows:
\begin{equation*}
L_{k,h}^{\mathrm{bdry}} = L_h+\text{diag}\left[\frac{\varepsilon v_k-v_k}{\varepsilon v_k+\sigma h};0;\cdots;0;\frac{\varepsilon v_k-v_k}{\varepsilon v_k+\sigma h}\right],\quad
D_{k,h}^{\mathrm{bdry}} = D_h+\text{diag}\left[\frac{v_k-\varepsilon v_k}{\varepsilon v_k+\sigma h} ;0;\cdots;0;-\frac{v_k-\varepsilon v_k}{\varepsilon v_k+\sigma h}\right].
\end{equation*}
The boundary conditions are also defined in the same way for $\bb{f}^*$ and $\bb{g}^*$, and the specific definitions of $\bb{f}^*_k$ and $\bb{g}^*_k$ are as follows.
     \begin{equation*}
     \bb{f}_{k}^{*} = 
     \begin{bmatrix}
     \frac{{\sigma}hv_k+v_k^2}{\varepsilon v_k+{\sigma}h}F_L(v_k);
     0; \cdots; 0;
     \frac{{\sigma}hv_k+v_k^2}{\varepsilon v_k+{\sigma}h}F_R(v_k)
     \end{bmatrix},\quad
     \bb{g}_{k}^{*} = 
     \begin{bmatrix}
     \frac{{\sigma}hv_k+v_k^2}{\varepsilon v_k+{\sigma}h}F_L(v_k);
     0 ; \cdots ; 0 ;
     -\frac{{\sigma}hv_k+v_k^2}{\varepsilon v_k+{\sigma}h}F_R(v_k)
     \end{bmatrix}.
     \end{equation*}
We do not provide a detailed computation of the overall matrix formulation for the relaxation-convection scheme here.
\subsection{Pre-processing of the discrete evolution}
\label{section:preprocessing}
\par Before formally proceeding with algorithm design, we preprocess the discrete evolution defined in Eq. (\ref{equ:iter_sys_1}). The purpose of preprocessing is to numerically unify the magnitudes of the variables $r(t,x,v)$ and $j(t,x,v)$: in the incoming boundary condition, the relationship between the variables is $j(t,x,v) = -v \partial_x r(t,x,v)$, meaning $j$ is of the same order of magnitude as the partial derivative of $r$ with respect to $x$. This implies that after discretization, $j = \mathcal{O}(\frac{1}{h}r)$. This can also be observed from the numerical magnitudes of the matrices $A_1$, $A_2$, $B_1$, $B_2$, and the boundary data $\mathbf{b}_r$, $\mathbf{b}_j$. First, we analyze the element magnitudes and boundary data in each matrix:
\begin{equation*}
A_1:\,\mathcal{O}(\frac{\tau}{h^2})+o(\frac{\tau}{\varepsilon^2}e^{-\frac{\tau}{\varepsilon^2}}),
\quad 
B_1:\,o(e^{-\frac{\tau}{\varepsilon^2}}),
\quad
A_2:\,\mathcal{O}(\frac{1}{h}+\frac{\tau}{h^2})+o(\frac{\tau}{\varepsilon^2}e^{-\frac{\tau}{\varepsilon^2}}),
B_2:\,\mathcal{O}(e^{-\frac{\tau}{\varepsilon^2}}),
\end{equation*}
while for the element in the boundary vector
\begin{equation*}
\bb{b}_r:\,\mathcal{O}(\frac{\tau}{h^2})+o(\frac{\tau}{\varepsilon^2}e^{-\frac{\tau}{\varepsilon^2}}),
\quad
\bb{b}_j:\,\mathcal{O}(\frac{1}{h}+\frac{\tau}{h^2})+o(\frac{\tau}{\varepsilon^2}e^{-\frac{\tau}{\varepsilon^2}}),
\end{equation*}
Under the CFL condition $\frac{\tau}{h^2} \le 1$ (which will be specified in the theorem), we have $\mathcal{O}\left(\frac{\tau}{h^2}\right) = \mathcal{O}(1)$, and 
$A_1 \sim \mathcal{O}(1), A_2 \sim \mathcal{O}(h^{-1}), B_1 \sim o(1)$. This highlights the importance of preprocessing for the matrix element magnitudes, as it is closely related to the calculation of query complexity.
\par Therefore, to unify the magnitudes and reduce the maximum norm of the coefficient matrix to $\mathcal{O}(1)$, we simply need to divide j by $N_x$. This allows us to preprocess the linear system in Eq. (\ref{equ:iter_sys_1}) through the following variable substitutions. Let $W_d = \text{diag}([w_1,...,w_{N_v}])$, and define the transformed variables: $\hat{\mathbf{r}} = (W_d^{1/2}\otimes I)\mathbf{r}$, $\hat{\mathbf{j}}=\frac{1}{N_x}\mathbf{j}$
with the corresponding matrix transformations as follows:
\begin{equation*}
    \begin{aligned}
        \hat{A}_1=(W_d^{\frac{1}{2}}\otimes I){A}_1(W_d^{-\frac{1}{2}}\otimes I),\quad \hat{B_1}=(W_d^{\frac{1}{2}}\otimes I)(N_x\cdot {B}_1),\quad \hat{A}_2=\frac{A_2}{N_x}(W_d^{-\frac{1}{2}}\otimes I),
    \end{aligned}
\end{equation*}
with the corresponding boundary vector transformations as
\begin{align*}
\hat{\bb{b}}_r=(W_d^{\frac{1}{2}}\otimes I)\bb{b}_r,\quad
\hat{\bb{b}}_j=\frac{1}{N_x}\bb{b}_j.
\end{align*}
Therefore, the actual numerical solution is carried out using the following pre-processed scheme, where at discrete time $n$, the discrete evolution is as follows:
\begin{equation}\label{equ:iter_sys_2}
\begin{aligned}
&\hat{\bb{r}}^{n+1} = 
\hat{A}_1\hat{\bb{r}}^{n}+\hat{B}_1\hat{\bb{j}}^n+\hat{\bb{b}}_r,\\
&\hat{\bb{j}}^{n+1} = 
\hat{A}_2\hat{\bb{r}}^{n}+{B}_2\hat{\bb{j}}^n+\hat{\bb{b}}_j.
\end{aligned}
\end{equation}
After such a transformation, we gain two advantages: the first is that the maximum magnitude of elements in each part of the matrix and the boundary vector is of $\mathcal{O}(1)$. The second is that we performed a similarity transformation $W_d^{1/2}\otimes I$, which handles the structurally simple but highly asymmetric matrix $W\otimes I$, and this is important for the subsequent estimation of query complexity.
\par In the rest calculations, we let variables such as $\bb{r}^n$ and $\bb{j}^n$ represent the post-processed variables $\hat{\bb{r}}^n$ and $\hat{\bb{j}}^n$.

\section{Iterative method based on Schr\"odingerization}
\label{sec:The iterative methods}

\par In this section, we present an algorithm to handle the iterative marching Eq. (\ref{equ:iter_sys_2}) of the pre-processed system. This algorithm is based on the  method proposed by Jin et al.\cite{Jin2024QuantumSO}, which employs the Schr\"odingerization approach. Therefore, we will first introduce the basic procedure of Schr\"odingerization, then present the algorithm within the framework of Jin et al.\cite{Jin2024QuantumSO}, and finally provide a complexity analysis of the algorithm.

\subsection{Schr\"odingerization method for linear ODE}
\label{section:schro}

\par Hamiltonian simulation has stringent requirements on the Hermiticity of the Hamiltonian, which poses a challenge in solving ODEs (or PDEs) using Hamiltonian simulation. The Schr\"dingerization method \cite{Jin2024Schrodingerization,Jin2023Detailes} provides a framework for transforming classical linear ordinary differential equation (ODE) and PDE systems into a form compatible with quantum mechanics. Consider the following constant-coefficient linear ODE system:
\begin{equation}\label{equ:ode1}
\frac{d\mathbf{u}(t)}{dt} = A\mathbf{u}(t) + \mathbf{b},
\end{equation}
where $\mathbf{u}, \mathbf{b} \in \mathbb{C}^d$ and $A \in \mathbb{C}^{d \times d}$ is a time-independent matrix. We homogenize the system by extending the vector $\mathbf{u}$ as follows \cite{Jin2024Schrodingerization}:
\begin{equation}\label{equ:ode2}
\frac{d}{dt}
\begin{bmatrix}
\mathbf{u}(t)\\
\mathbf{b}
\end{bmatrix}
=
\begin{bmatrix}
A & I \\
O & O
\end{bmatrix}
\begin{bmatrix}
\mathbf{u}(t)\\
\mathbf{b}
\end{bmatrix}.
\end{equation}
By adopting the notation $\dot{\mathbf{u}} = A\mathbf{u}$ for the homogenized system \eqref{equ:ode2}, we implicitly set the non-homogeneous term 
$\bb{b}$ to zero in the original system \eqref{equ:ode1}.

\par Next, one  applies the warped transformation $\hat{\mathbf{u}}(t,p) = e^{-p}\mathbf{u}(t)$, initially defined for $p \ge 0$ but naturally extendable to $p < 0$. Leveraging the derivative property of the exponential function, $\partial_t(e^{-p}) = -e^{-p}$, one can transform the system \eqref{equ:ode1} into an equivalent equation for $\hat{\mathbf{u}}$:
\begin{equation}\label{equ:sch_hat_u}
\begin{aligned}
\partial_t \hat{\mathbf{u}}(t,p) &= -A_1 \partial_p
\hat{\mathbf{u}}(t,p) + iA_2 \hat{\mathbf{u}}(t,p), \\
\hat{\mathbf{u}}(0, p) &= e^{-|p|}\mathbf{u}(0),
\end{aligned}  
\end{equation}
where $A_1$ and $A_2$ are the Hermitian and anti-Hermitian parts of $A$, respectively, denoted as:
\begin{equation*}
A_1 = \frac{A + A^{\dagger}}{2} = A_1^{\dagger}, \quad
A_2 = \frac{A - A^{\dagger}}{2i} = A_2^{\dagger}.    
\end{equation*}
The current method's accuracy in $p$ is affected to the limited
regularity of $e^{-|p|}$. According to Jin et al.\cite{Jin2025LinearNonUnitary}, this issue can be improved through various smooth initial data in $p$,  such as cut-off functions, higher-order interpolation, and the Fourier transform, to achieve the near or even optimal complexity. 

\subsubsection{The discrete Schr\"odingerization method}

\par To numerically solve the PDE in Eq. (\ref{equ:sch_hat_u}), we employ the discrete Schrödingerization method here. First, to discretize the $p$ domain, we choose a finite domain $[-L, R]$ with $L, R > 0$ large enough to satisfy $L>\lambda^-_{\max}(A_1) T$ and $R>\lambda^+_{\max}(A_1)T$
where two notations associated with $A_1$ are defined as
\[\lambda_{\max}^-(A_1) :=\sup\{|\lambda|:\lambda\in \sigma(A_1), \lambda<0\},\]
\[\lambda_{\max}^+(A_1) :=\sup\{|\lambda|:\lambda\in \sigma(A_1), \lambda>0\},\]
in which $\delta$ is the desired accuracy. Then, we can impose periodic boundary conditions in the $p$ direction and employ the Fourier spectral method by discretizing the $p$ domain. For this purpose, we define a uniform mesh size $\Delta p = (R + L)/N_p$, where $N_p = 2^{n_p}$ is an even integer. The grid points are given by $-L = p_0 < p_1 < \cdots < p_{N_p} = R$. To compute $\bb{w}(t, p)$, we represent the function values at these grid points as a vector $\hat{\mathbf{u}}_h(t)$. More precisely, we define $\mathbf{u}_h(t) = \sum_{k,i} \hat{\mathbf{u}}_i(t, p_k) \ket{k, i}$, where $\hat{\mathbf{u}}_i$ denotes the $i$-th component of $\hat{\mathbf{u}}$. 

\par By performing a discrete Fourier transform along the $p-$direction, one can obtain
\begin{equation}
\begin{aligned}
\label{heatww}
\frac{\d}{\d t}\mathbf{u}_h(t) &= -\i (P_\mu \otimes  A_1 ) \mathbf{u}_h(t) + \i (I\otimes A_2 ) \mathbf{u}_h(t) ,\\
\mathbf{u}_h(0) &= [
e^{-|p_0|},\cdots,e^{-|p_{N_p-1}|}]^{\top} \otimes \bb{u}_0,
\end{aligned}  
\end{equation}
in which $P_\mu$ represents the matrix form of the momentum operator $-\mathrm{i}\partial_p$. The diagonalization of $P_\mu$ is achieved through the transformation  $P_\mu = \Phi D_\mu \Phi^{-1}$, where $D_\mu = \text{diag}(\mu_0, \dots, \mu_{N_p-1})$ and the Fourier modes are defined as $\mu_k = \frac{2\pi}{R + L} \left( k - \frac{N_p}{2} \right)$. Here, $\Phi$ is the $N_p \times N_p$ matrix representation of the discrete Fourier transform, with elements $\Phi = (\phi_{jl})_{N_p \times N_p} = (\phi_l(x_j))_{N_p \times N_p}$, where $\phi_l(x) = \mathrm{e}^{\mathrm{i} \mu_l (x + L)}$. This diagonalization allows one to map the dynamics back to a Hamiltonian system. Applying the change of variables $\hat{\bb{u}}_h = (\Phi^{-1} \otimes I)\bb{u}_h(t)$, one gets
\begin{equation}\label{generalSchr}
\frac{\d}{\d t} \hat{\bb{u}}_h(t) = -\i H\hat{\bb{u}}_h(t) ,
\end{equation}
where $ H = D_\mu \otimes A_1 -  I \otimes A_2 $.

\subsubsection{Reconstruction of the solution}

\par To reconstruct the state $\bb{u}(t)$ from $\mathbf{w}(t,p)$, one can apply the inverse Fourier transform and project it onto either the positive $p$ domain or at a specific momentum $p = p^*$. Since $A_1$ is Hermitian, it admits $n$ real eigenvalues, ordered as 
$\lambda_1(A_1)\leq \lambda_2(A_1)\leq \cdots\lambda_{n}(A_1)$. A key observation is that if $A_1$ is negative definite (i.e. $\lambda_{n}(A_1)\le 0$), $\bb{u}$ can be recovered from $\bb{w}$ directly at $p = 0$. Additional, from the idea of \cite{Jin2025LinearSystems}, if $A_1$ is not negative definite, the recovery of $\bb{u}$ can instead be achieved using 
\begin{equation*}
\bb{u}(t) = e^{p} \hat{\bb{u}}(t,p),\quad \text{for}\;\text{any}\quad  p\geq p^{\Diamond},
\end{equation*}
where $p^{\Diamond}\geq \lambda_{\max}^+(A_1) t$, 
or by using the integration,
\begin{equation*}
\bb{u}(t) = e^{p}\int_{p}^{\infty} \hat{\bb{u}}(t,q)dq ,\quad \text{for}\;\text{any}\quad  p\geq p^{\Diamond}.
\end{equation*}

\begin{remark}
From the idea of \cite{Jin2025Precondition}, if one smoothens the initial data $\psi(p) \in H^r((-L,R))$, where $r$ is an arbitrary positive integer, then one has
$\Delta p = \mathscr{O}(\sqrt[r]{\delta})$. Therefore, by requiring $(1/\delta)^{1/r} \sim \log(1/\delta)$, one can obtain that
\begin{equation*}
\frac{1}{\Delta p} = \mathcal{O}\left(\log\left(\frac{1}{\delta}\right)\right).
\end{equation*}
\end{remark}

That is, sufficiently smooth initializations can offer nearly exponential speedup in $p$ variable for the Schr\"odingerization method in terms of precision $\delta$.  

\subsection{Configuration for the iterative method}

\subsubsection{Structure of the iterative method}

\par In this section, we will present an algorithm for solving Eq. (\ref{equ:iter_sys_2}) using an iteration method based on Schr\"odingerization. This method employs the fundamental framework proposed by \cite{JL2024}, which is designed to solve general discrete iterative schemes of the form $x_{k+1}=Cx_k$. We let $\bb{x}^n = [\bb{j}^n; \bb{r}^n; 1; 1]$ and transform Eq. (\ref{equ:iter_sys_2}) into the following form:
\begin{equation}\label{equ:x_iter}
\bb{x}^{n+1}:=
\begin{bmatrix}
\bb{j}^{n+1}\\
\bb{r}^{n+1}\\
1\\
1
\end{bmatrix}
=
\begin{bmatrix}
B_2 & A_2 &\bb{b}_j&O\\
B_1 & A_1 &O&\bb{b}_r\\
O&O&1&0\\
O&O&0&1
\end{bmatrix}
\begin{bmatrix}
\bb{j}^{n}\\
\bb{r}^{n}\\
1\\
1
\end{bmatrix}
:=
C\bb{x}^{n}.
\end{equation}
Next, rewrite the iterative format mentioned in Eq.(\ref{equ:x_iter}) as
\begin{equation}\label{equ:dt12}
\bb{x}^{n+1}-\bb{x}^{n} = (C-I)\bb{x}^{n}.
\end{equation}
Further, one can regard $\bb{x}^n$ as $\bb{x}(s)$ ($s$ is a new time variable) and $\bb{x}^{n+1}-\bb{x}^{n}$ as $d\bb{x}(s)/ds$, to obtain the following ODE system:
\begin{equation}\label{equ:dt2}
\frac{d\bb{x}(s)}{ds} = 
(C-I)\bb{x}(s).
\end{equation}
However, the problem we need to solve differs slightly from the one addressed by \cite{JL2024}. They aimed to find the steady-state formulation for an iterative problem, and estimate the evolution time $s$ based on the fidelity error. Here, however, our goal is to solve Eq. (\ref{equ:transport1}) at a specific time $t^{\text{ori}}$, rather than the steady-state solution. To address this issue, we refer to the relationship between the time scale $s$ in Eq. (\ref{equ:dt2}) and the time scale $t$ in the original equation, as discussed in the following Remark \ref{remark:time}, and derive it as $s^{\text{iter}}:=t^{\text{ori}}/\tau=N_t$.
\begin{remark}
\label{remark:time}
\par First, we analyze Eq. (\ref{equ:dt12}). To clarify the temporal relationship, we express $\bb{x}^n$ as $\bb{x}(n\tau)$. For the $n$-th discrete time step, corresponding to the continuous time interval $[n\tau, (n+1)\tau]$, we rewrite Eq. (\ref{equ:dt12}) in the following equivalent form:
\begin{equation*}
\bb{x}((n+1)\tau)-\bb{x}(n\tau)= (C-I)\bb{x}(n\tau).
\end{equation*}
Note that $\tau$ here is determined by the discrete numerical scheme for the diffusion equation. This equation can be further transformed by approximating the difference quotient with a derivative:
\begin{equation}\label{equ:C_I_1}
\frac{d\bb{x}(n\tau)}{dt} \approx 
\frac{1}{\tau}(C-I)\bb{x}(n\tau).
\end{equation}
Introducing the variable substitution $s=\frac{t}{\tau}$ and defining $\bb{x}(s):=\bb{x}(t/\tau)$, we find that equation \eqref{equ:C_I_1} becomes equivalent to Eq. (\ref{equ:dt2}). Consequently, to obtain the solution at original time $t^{\text{ori}}$, we must solve \eqref{equ:dt2} at the scaled time $s^{\text{iter}}=t^{\text{ori}}/\tau=N_t$.
\end{remark}

\subsubsection{Schr\"odingization for the iterative method}

\par To solve  Eq. (\ref{equ:dt2}), we need to convert it into a Hermitian Hamiltonian. Here, we use the Schr\"odingerization method, and the specific steps are as follows. First, we note that $C-I$ is a fixed matrix, so we can directly apply the method mentioned in Section \ref{section:schro}. By using the warped transformation $\hat{\bb{x}}(s,p) = e^{-p}\bb{x}(s)$ and extending it to $p < 0$, then $\hat{\bb{x}}(s,p)$ satisfies that
\begin{equation*}
\begin{aligned}
\partial_{s}\hat{\bb{x}}(s,p)
&= -(C_1-I)\partial_{p}\hat{\bb{x}}(s,p)+iC_2\hat{\bb{x}}(s,p), \\
\hat{\bb{x}}(s=0,p) &= e^{-|p|}\bb{x}_0,
\end{aligned}
\end{equation*}
where we define the Hermitian operators  
$C_1 = (C + C^\dagger)/2 = C_1^\dagger$ and $C_2 = (C - C^\dagger)/(2i) = C_2^\dagger$, which correspond to the real and imaginary parts of $C$, respectively. Consider now the vector $\bb{x}_h(s) = \sum\limits_{k,i} \bb{\hat{x}}_i(s,p_k) \ket{k,i}$, where $\bb{\hat{x}}_i(s,p_k)$ denotes the $i$-th component of the vector $\bb{\hat{x}}(s,p_k)$. Then, we apply the discrete Fourier transform (DFT) along the $p$ direction. This transformation yields 
\begin{equation*}
\begin{aligned}
&\frac{\d}{\d s} \bb{x}_h(s) = -\i (P_\mu \otimes  C_1) \bb{x}_h(s) + \i (I\otimes C_2 ) \bb{x}_h(s) ,\\
&\bb{x}_h(0) = [
e^{-|p_0|},\cdots,e^{-|p_{N_p-1}|}]^{\top} \otimes \bb{x}_0.
\end{aligned}
\end{equation*}
By performing the change of variables $\hat{\mathbf{x}}_h = (\Phi^{-1} \otimes I) \mathbf{\hat{x}}_h$, we obtain
\begin{equation}\label{equ:sch linear_1}
\frac{\d}{\d s} \hat{\bb{x}}_h(s) = -\i H_{\text{total}}^{\text{iter}}\hat{\bb{x}}_h(s) ,
\end{equation}
where $H_{\text{total}}^{\text{iter}}=D_\mu\otimes C_1-I \otimes C_2 $ and $\Phi$ is defined in Section \ref{section:schro}.
\subsection{Query complexity analysis}
\par In quantum computing, query complexity typically refers to the number of calls to a quantum black box (oracle), which differs from the way complexity is measured in classical computing through counting basic operations (such as addition, multiplication) or the number of program steps executed. Berry et al.\cite{Berry2015Hamiltonian}) addressed the Hamiltonian simulation problem by proposing an algorithm based on quantum signal processing and proved its tight bounds on query complexity.
\begin{lemma}\label{lemma:Berry_2015}
\cite{Berry2015Hamiltonian}  
    Given an $\mathcal{S}$-sparse Hamiltonian $H$ acting on $m(H)$ qubits, its simulation within error $\delta$ requires  
    \begin{equation*}  
        \mathcal{O}\left(\chi \log(\chi/\delta) / \log\log(\chi/\delta)\right)  
    \end{equation*}
    queries and
    \begin{equation*}  
        \mathcal{O}\left(  
        \chi\big[m(H) + \log^{2.5}(\chi/\delta)\big]  
        \frac{\log(\chi/\delta)}{\log\log(\chi/\delta)}  
        \right)  
    \end{equation*}  
    additional two-qubit gates, where $\chi(H) = \mathcal{S}(H)\|H\|_{\max} t$. Here, $\mathcal{S}(H)$ denotes the sparsity of $H$, $\|H\|_{\max}$ represents the maximum modulus of the elements in $H$, and $t$ is the evolution time.
\end{lemma}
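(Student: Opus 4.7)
The statement is cited directly from Berry et al., so a self-contained proof would essentially reconstruct their quantum-signal-processing/LCU analysis. The plan is to sketch the three technical components whose composition yields both the query and gate bounds. First, I would realize the sparse Hamiltonian as a quantum walk: using the sparse-access oracle that returns the column index of the $j$-th nonzero in a given row together with the corresponding matrix entry, I would build a unitary $U$ on an enlarged space whose two-dimensional invariant subspaces isolate each eigenvalue $\lambda$ of $H$ and carry eigenphases $\pm \arcsin\bigl(\lambda/(\mathcal{S}\|H\|_{\max})\bigr)$. This step costs $\mathcal{O}(1)$ oracle queries per walk step and $\mathcal{O}(m(H))$ ancilla gates for the index arithmetic and coherent state preparation.

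Second, I would split the evolution $e^{-\i H t}$ into $r$ segments of length $\tau = t/r$ chosen so that $\tau\,\mathcal{S}\|H\|_{\max} = \mathcal{O}(1)$, and on each segment approximate $e^{-\i H \tau}$ by a Taylor expansion truncated at order $K$. The LCU (linear combination of unitaries) technique encodes this truncated series as a block of a controlled sum of walk powers, and oblivious amplitude amplification restores unitarity of the selected block. Choosing $K = \mathcal{O}\!\bigl(\log(r/\delta)/\log\log(r/\delta)\bigr)$ is sufficient, by the factorial tail bound on the Taylor remainder, for each segment to incur error $\mathcal{O}(\delta/r)$; summing over the $r$ segments yields total error $\mathcal{O}(\delta)$. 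Aggregating query and gate counts across segments produces the stated $\mathcal{O}\!\bigl(\chi\log(\chi/\delta)/\log\log(\chi/\delta)\bigr)$ queries with $\chi = \mathcal{S}\|H\|_{\max} t$, and an extra factor of $m(H) + \log^{2.5}(\chi/\delta)$ in the gate count that arises from the select-and-prepare circuits acting on the $\mathcal{O}(K)$-sized ancilla register.

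The hard part, and the reason this result is nontrivial rather than a textbook combination of LCU with segmentation, is upgrading the dependence on precision from the naive $\log(\chi/\delta)$ to $\log(\chi/\delta)/\log\log(\chi/\delta)$. This refinement requires carefully balancing the Taylor order $K$ against the number of segments $r$ so that $K\log K \gtrsim \log(r/\delta)$, and simultaneously controlling the error from oblivious amplitude amplification when the block encoding is only approximately unitary. A second subtle point is to keep the block-encoding normalization equal to $\mathcal{S}\|H\|_{\max}$ rather than a looser spectral bound, so that sparsity and max-norm both enter optimally into $\chi$; this is what allows the final bound to be tight and matches the lower bound up to the double-logarithmic factor.
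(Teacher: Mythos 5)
The paper states this lemma as a direct citation of Berry, Childs, and Kothari and supplies no proof of its own, so there is no in-paper argument to compare against; outlining the cited construction, as you do, is the right move. Your sketch, however, conflates two distinct algorithms in the BCK family, and the conflation is precisely where the normalization $\chi=\mathcal{S}\|H\|_{\max}\,t$ would be lost. The quantum-walk step you describe --- building a unitary $U$ whose invariant $2$-dimensional blocks carry eigenphases $\pm\arcsin\bigl(\lambda/(\mathcal{S}\|H\|_{\max})\bigr)$ --- is indeed the device that makes $\mathcal{S}\|H\|_{\max}$ the correct normalization. But the series that one then realizes by LCU over powers of $U$ is \emph{not} a Taylor expansion of $e^{-\i H\tau}$: a Taylor series in $H$ needs controlled powers of $H$, not of $U$, and if one instead decomposes $H$ into unitaries to make the Taylor route work, one lands on the truncated-Taylor algorithm of Berry, Childs, Cleve, Kothari, and Somma, whose natural normalization for a $d$-sparse $H$ is $d^{2}\|H\|_{\max}t$, a factor of $d$ worse than the bound you are trying to prove. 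The cited FOCS 2015 result instead undoes the $\arcsin$ distortion by the Jacobi--Anger identity $e^{\i z\sin\theta}=\sum_{m} J_m(z)\,e^{\i m\theta}$ with $\theta=\arcsin\bigl(\lambda/(\mathcal{S}\|H\|_{\max})\bigr)$ and $z=\mathcal{S}\|H\|_{\max}\,\tau$, so the LCU coefficients are Bessel values $J_m(z)$ rather than $\tau^k/k!$, and it is the superexponential decay of $J_m(z)$ for $m>z$ --- not the factorial tail of a Taylor remainder --- that gives the truncation order $K=\mathcal{O}\bigl(\log(\chi/\delta)/\log\log(\chi/\delta)\bigr)$. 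Once you replace ``Taylor series in $H$'' by ``Bessel/generating-function series in walk powers,'' the rest of your outline --- segmentation with $z=\mathcal{O}(1)$ per segment, oblivious amplitude amplification on an approximately unitary block, and your closing point about keeping the normalization at $\mathcal{S}\|H\|_{\max}$ rather than a looser spectral bound --- lines up with the cited proof and supports the stated query and gate counts.
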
  
\par With the help of Lemma \ref{lemma:Berry_2015}, we can calculate the query complexity of the iterative method based on Schr\"odingerization, i.e., Eq. (\ref{equ:sch linear_1}), where our computational focuses on the maximum norm $\|H\|_{\max}$ and the evolution time $t$.
\begin{theorem}
\label{theorem:The iterative methods}
The Hamiltonian system \eqref{equ:sch linear_1} can be simulated within error $\delta$ with
\begin{equation}
\label{equ:querycomplexity:1}
\mathcal{Q}_{query} = 
\mathcal{\tilde{O}}(N_vN_x^2\log(N_x)),
\quad
\mathcal{Q}_{gate} = 
\mathcal{\tilde{O}}(N_vN_x^2\log(N_x)),
\end{equation}
in which $\mathcal{\tilde{O}}(\cdot)$ represents $\mathcal{O}(\cdot)$ by ignoring the $\log\log$ terms.
\end{theorem}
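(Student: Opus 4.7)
The plan is to apply Berry's quantum signal processing result (Lemma \ref{lemma:Berry_2015}) directly to the Hamiltonian
\[
H_{\text{total}}^{\text{iter}} = D_\mu \otimes C_1 - I \otimes C_2,
\]
so I need to pin down three quantities: the sparsity $\mathcal{S}(H)$, the max-norm $\|H\|_{\max}$, and the effective evolution time. The evolution time is handed to me by Remark \ref{remark:time}: $s^{\text{iter}} = N_t = t^{\text{ori}}/\tau$. Under the CFL-type restriction $\tau/h^2 \le 1$ used in Section \ref{section:preprocessing} together with $h = \mathcal{O}(1/N_x)$, this yields $s^{\text{iter}} = \mathcal{O}(N_x^2)$.

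Next, I would track the sparsity of $C$ through its blocks in \eqref{equ:x_iter}. The matrices $A_1, A_2, B_1, B_2$ are built from tensor products of the spatial operators $D_h, L_h, E_\nabla, I_1, I_2$ (each $\mathcal{O}(1)$-sparse in the $x$ direction) with the velocity operators $V, W$. The only non-sparse ingredient is $W \otimes I$, because $W = \bb{1}_{N_v} \otimes [w_1,\ldots,w_{N_v}]$ represents the discrete-ordinate quadrature and has $\mathcal{O}(N_v)$ nonzeros per row. Multiplying these tensor factors gives $\mathcal{S}(C) = \mathcal{O}(N_v)$; since $D_\mu$ is diagonal, the Kronecker structure of $H_{\text{total}}^{\text{iter}}$ preserves this bound, so $\mathcal{S}(H_{\text{total}}^{\text{iter}}) = \mathcal{O}(N_v)$.

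For the max-norm I rely crucially on the preprocessing of Section \ref{section:preprocessing}. After the rescaling $\hat{\bb{j}} = \bb{j}/N_x$ together with the similarity transform by $W_d^{1/2}\otimes I$, each block of $C$ (and hence $C_1$ and $C_2$) has entries of magnitude $\mathcal{O}(1)$; this is precisely the benefit that the preprocessing was designed to deliver, and in particular it tames the asymmetric contribution of $W$ and the $\mathcal{O}(1/h)$ scaling of $A_2$. Consequently,
\[
\|H_{\text{total}}^{\text{iter}}\|_{\max} \le \|D_\mu\|_{\max}\|C_1\|_{\max} + \|C_2\|_{\max} = \mathcal{O}\!\left(\frac{1}{\Delta p}\right).
\]
Invoking the smooth-initialization remark in Section \ref{section:schro}, I take $1/\Delta p = \mathcal{O}(\log(1/\delta))$, so $\|H_{\text{total}}^{\text{iter}}\|_{\max} = \mathcal{O}(\log(1/\delta))$.

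Finally, combining yields
\[
\chi(H_{\text{total}}^{\text{iter}}) = \mathcal{S}\cdot\|H\|_{\max}\cdot s^{\text{iter}} = \tilde{\mathcal{O}}(N_v N_x^2),
\]
and plugging into Lemma \ref{lemma:Berry_2015} produces $\mathcal{Q}_{\text{query}} = \tilde{\mathcal{O}}(N_v N_x^2 \log N_x)$, where the extra $\log N_x$ comes from the $\log(\chi/\delta)$ factor in Berry's bound; the gate count follows from the same $\chi$ together with $m(H) = \mathcal{O}(\log(N_v N_x N_p))$. I expect the main obstacle to be a rigorous verification that the preprocessing actually drives $\|C\|_{\max}$ down to $\mathcal{O}(1)$ uniformly in $\varepsilon$. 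This entails checking each block and boundary vector of \eqref{equ:iter_sys_1} under the transformation of Section \ref{section:preprocessing}, and confirming that the $o(e^{-\tau/\varepsilon^2})$ and $o((\tau/\varepsilon^2)e^{-\tau/\varepsilon^2})$ terms do not produce any hidden $\varepsilon$-dependent blow-up — this is the step on which the $\varepsilon$-independent complexity claim ultimately rests.
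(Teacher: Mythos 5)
Your proposal follows the same route as the paper: you invoke Lemma \ref{lemma:Berry_2015}, identify the sparsity $\mathcal{O}(N_v)$ coming from the quadrature matrix $W$, obtain $\|H_{\text{total}}^{\text{iter}}\|_{\max}=\mathcal{O}(1/\Delta p)=\mathcal{O}(\log(1/\delta))$ from $\|D_\mu\|_{\max}$ after the preprocessing has reduced $\|C\|_{\max}$ to $\mathcal{O}(1)$, take the evolution time $s^{\text{iter}}=N_t=\mathcal{O}(N_x^2)$ from Remark \ref{remark:time}, and multiply. The only point worth noting is that the ``main obstacle'' you flag at the end — verifying $\|C\|_{\max}=\mathcal{O}(1)$ uniformly in $\varepsilon$ — is exactly what Section \ref{section:preprocessing} supplies: the element-magnitude tally there shows the $A_2$ block is $\mathcal{O}(1/h)$ before rescaling and the $\beta_2,\beta_3$ terms are bounded by $\mathcal{O}((\tau/\varepsilon^2)e^{-\tau/\varepsilon^2})=\mathcal{O}(1)$, so the division by $N_x$ and the $W_d^{1/2}$ conjugation do close that gap; you should cite it rather than leave it open.
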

\begin{proof}
\par According to Lemma \ref{lemma:Berry_2015}, the focus of our calculation lies in the computation of $\chi(H_{\text{total}}^{\text{iter}})$ and $m(H_{\text{total}}^{\text{iter}})$, and the specific steps are as follows:
\begin{enumerate}
\item For the sparsity $\mathcal{S}(H_{\text{total}}^{\text{iter}})$. It is straightforward to verify that it is of $\mathcal{O}(N_v)$, which is introduced by the $W$ in the iterative scheme Eq. (\ref{equ:iter_sys_2}).
\item For the maximum modulus of the elements in $H_{\text{total}}^{\text{iter}}$, i.e., $\|H_{\text{total}}^{\text{iter}}\|_{\max}$. Using the calculation process of Eq. (\ref{equ:sch linear_1}), we can obtain $\|H_{\text{total}}^{\text{iter}}\|_{\max} = \|C\|_{\max}\|D_{\eta}\|_{\max}$, where $C$ is defined in Eq. (\ref{equ:x_iter}) and $\|C\|_{\max} = \mathcal{O}(1)$, and $\|D_{\eta}\|_{\max} = N_{\eta}$. Therefore, we can conclude that $\|H_{\text{total}}^{\text{iter}}\|_{\max} = \mathcal{O}(N_{\eta})$.
\item For the evolution time $s^{\text{iter}}$. Through our explanation in Remark \ref{remark:time} regarding the relationship between the original time $t^{\text{ori}}$ and the time $s^{\text{iter}}$ based on the Schrödingerized iterative method, we can determine that its value is of $\mathcal{O}(N_t)$.
\item For the matrix size of $H_{\text{total}}^{\text{iter}}$. It is evident that we can determine the matrix dimensions to be $N_\eta N_v N_x$.
\end{enumerate}
Therefore, through our calculations, we can obtain the values of $\chi(H_{\text{total}}^{\text{iter}})$ and the number of gates $m(H_{\text{total}}^{\text{iter}})$:
\begin{equation*}
\begin{aligned}
\chi(H_{\text{total}}^{\text{iter}}) &= \mathcal{O}(N_vN_tN_{\eta}\log(1/\delta)) =
\mathcal{O}(N_vN_x^2\log(N_x)),\\
m(H_{\text{total}}^{\text{iter}}) &= \log(N_{\eta}N_vN_x)
= \mathcal{O}(\log(N_{\eta})+\log(N_v)+\log(N_x)),
\end{aligned}
\end{equation*}
where we set $\delta = \mathcal{O}(1/N_x)$. For $N_{\eta}$, Ma et al.\cite{Jin2025LinearNonUnitary} demonstrates that by using a smooth initialization of the warped phase transform, the Schr\"odingerization method can achieve optimal scaling in matrix queries-specifically, $N_{\eta} = \mathcal{O}(\log(1/\delta)) = \mathcal{O}(\log(N_x))$.
\par Finally, by substituting the values of $\chi(H_{\text{total}}^{\text{iter}})$, $m(H_{\text{total}}^{\text{iter}})$, and $N_\eta$ into Lemma \ref{lemma:Berry_2015}, we can obtain the query complexity and gate complexity of using the simulated Hamiltonian $A$ as follows
\begin{equation*}
\mathcal{Q}_{query} = 
\mathcal{\tilde{O}}(N_vN_x^2\log(N_x)),
\quad
\mathcal{Q}_{gate} = 
\mathcal{\tilde{O}}(N_vN_x^2\log(N_x)).
\end{equation*}
\end{proof}
\par 
We compare our query complexity with that of \cite{Xiaoyang2023TimeCA}, which achieves a quantum query complexity of $\mathcal{O}(N_v^2N_x^2\log(N_x))$, and it is not difficult to see that one order of $N_v$ is reduced, which benefits from the pre-process step we designed in Eq. (\ref{equ:iter_sys_2}).
\section{Steady-state solution method based on Schr\"odingerization}
\label{sec:Steady-state solution}
\par In Section \ref{sec:The iterative methods}, we presented an iterative solution method for Eq. (\ref{equ:iter_sys_2}) based on the Schr\"odingerization-based iterative approach proposed by Jin et al.\cite{Jin2024QuantumSO}, and demonstrated that the query complexity of this iterative method is superior to that based on the HHL algorithm\cite{Xiaoyang2023TimeCA}. However, the iterative method can only solve the state at a specific time and cannot explicitly represent all intermediate states. This limitation makes it difficult for handling time-dependent or physical boundary/interface problems. Therefore, in this section, we propose a method called the steady-state solution method based on Schr\"odingerization, which can effectively address these issues.

\subsection{Construction for the steady-state solution method}

\par First, we present the construction  of the method, which is based on the approach proposed by Jin et al.\cite{Jin2022TimeCA}, transforming the problem into a linear algebraic problem $H\bb{y}=\bb{F}$. This steady state solution method also represents an improvement over the algorithm proposed by He et al.\cite{Xiaoyang2023TimeCA} for solving multiscale linear transport equations, specifically by replacing their choice of the HHL algorithm for solving linear equations with a Hamiltonian simulation method.

\subsubsection{Structure of the steady-state solution method}

\par We  first
concatenating all time steps of $\bb{r}^n$ to obtain $S_1=[\bb{r}^{N_t};\bb{r}^{N_t-1};\cdots;\bb{r}^1]$, and concatenating all time steps of $\bb{j}^n$ to obtain $S_2=[\bb{j}^{N_t};\bb{j}^{N_t-1};\cdots;\bb{j}^1]$, and setting $\bb{y}=[S_2;S_1]$. Then, one can represent the iterative scheme in Eq. (\ref{equ:iter_sys_2}) as a linear system, where $H$ and $\bb{F}$ are defined as
\begin{equation}\label{equ:y_iter}
H:= 
\begin{bmatrix}
L_{11} & L_{12} \\
L_{21} & L_{22}
 \end{bmatrix},\quad
\bb{F}=\begin{bmatrix}
\bb{F}_1\\
\bb{F}_2
\end{bmatrix},
\end{equation}
in which the block matrices are defined as follows
\begin{equation*}
    L_{11} =
    \begin{bmatrix}
    I    & -B_2 &          &          \\
              &  I   &   \ddots &          \\
              &           &   \ddots & -B_2 \\
              &           &          & I   \\
    \end{bmatrix},
    L_{12} =
    \begin{bmatrix}
    O    &  -A_2&           &          \\
              &  O   &  \ddots   &          \\
              &           &   \ddots  & -A_2 \\
              &           &           & O   \\
    \end{bmatrix},
    \bb{F}_{1} =
    \begin{bmatrix}
    \bb{b}_r\\
    \bb{b}_r\\
    \vdots\\
    \bb{b}_r+A_2\bb{r}^0+B_2\bb{j}^0\\
    \end{bmatrix},
\end{equation*}
\begin{equation*}
    L_{21} =
    \begin{bmatrix}
    O    &  -B_1&           &          \\
              &  O   &  \ddots   &          \\
              &           &   \ddots  & -B_1 \\
              &           &           & O   \\
    \end{bmatrix},
    L_{22} =
    \begin{bmatrix}
    I    & -A_1 &          &          \\
              &  I   &   \ddots &          \\
              &           &   \ddots & -A_1 \\
              &           &          & I   \\
    \end{bmatrix},
    \bb{F}_{2} =
    \begin{bmatrix}
    \bb{b}_j\\
    \bb{b}_j\\
    \vdots\\
    \bb{b}_j+\hat{A}_1\bb{r}^0+\hat{B}_1\bb{j}^0\\
    \end{bmatrix}.
\end{equation*}
Note that $B_1=o(e^{-\frac{\tau}{\varepsilon^2}})$, so $H$ can be approximately considered as an upper triangular matrix.
\par In fact, we only need to make a certain order adjustment to the variables involved in Eq. (\ref{equ:y_iter}) to make the theoretical analysis of stability very straightforward. We can use the approach of \cite{Hu2024QuantumMultiscale} and consider Eq. (\ref{equ:iter_sys_2}) as a two-variable system. Let $\bb{y}^n=[\bb{r}^n;\bb{j}^n]$, and we  obtain
\begin{equation*}
\bb{y}^{n+1}:=
\begin{bmatrix}
\bb{r}^{n+1}\\
\bb{j}^{n+1}
\end{bmatrix}
=
\begin{bmatrix}
A_1 & B_1\\
A_1 & B_2
\end{bmatrix}
\begin{bmatrix}
\bb{r}^{n}\\
\bb{j}^{n}
\end{bmatrix}
+
\begin{bmatrix}
\bb{b}_r\\
\bb{b}_j
\end{bmatrix}
:=
A\bb{y}^{n}+\bb{b}.
\end{equation*}
Furthermore, we let $\bb{y}=[\bb{y}^{N_t};\bb{y}^{N_t-1};\cdots;\bb{y}^1]$, and define the matrices $H$ and $\mathbf{F}$ as follows
\begin{equation}\label{equ:y_iter_new}
H:=\begin{bmatrix}
    I   & A &          &      \\
        &  I   &   \ddots &      \\
        &      &   \ddots & A \\
        &      &          & I    \\
    \end{bmatrix},\quad
    \bb{F} =
    \begin{bmatrix}
    \bb{b}\\
    \bb{b}\\
    \vdots\\
    \bb{b}+A\bb{y}^0\\
    \end{bmatrix}.
\end{equation}
Note that $H$ is an upper triangular matrix with all diagonal elements equal to 1.
\par It is evident that these two methods only differ in the arrangement of variables, which consequently modifies the structure of $H$ and $\bb{F}$. Fundamentally, however, their stability properties, i.e. the existence and uniqueness of stable solutions, are completely identical. Given that the stability of the second method is more straightforward to analyze, we will henceforth use its stability characterization to represent that of the first method. Thus, regardless of the choice of $\bb{y}$ or the specific construction of $H$ and $\bb{F}$, we can always treat $H$ as an approximation of an upper triangular matrix and derive the following linear system:
\begin{equation}\label{equ:linear system1}
H\bb{y}=\bb{F}.
\end{equation}
To solve the linear system defined by Eq. (\ref{equ:linear system1}), we can use the autonomous system method proposed by Cao et al.\cite{Cao2023QuantumSF}, transforming it into the following linear equations problem:
\begin{equation}\label{equ:linear system2}
\frac{\d \bb{y}(T)}{dT}=\bb{F}-H\bb{y}(T),
\end{equation}
in which $T$ is the newly introduced time in order to distinguish the differences at different times. First, we need to emphasize that our method is feasible. This is because the eigenvalues of $-H$ are all $-1$, which ensures that the ODE in Eq. (\ref{equ:linear system2}) must have a steady-state solution, and that this steady-state solution is unique. This is also an important prerequisite for the applicability of the steady-state solution method. Next, we also need to explain the advantages of the two structures shown in Eq. (\ref{equ:linear system1}) compared to existing methods. According to the conclusions of He et al.\cite{Xiaoyang2023TimeCA}, we can observe that the matrix $H$ of the linear system designed in \cite{Xiaoyang2023TimeCA} is not an upper triangular matrix. As a result, theoretically calculating its eigenvalues becomes challenging, which means stability cannot be guaranteed, and stable solutions may not obviously exist.

\subsubsection{Schr\"odingization for the steady-state solution method}

\par In the following, we focus on studying the structure of the linear system (\ref{equ:y_iter}) and solving the ODE (\ref{equ:linear system2}). To handle this ODE using Hamiltonian simulation, we need to employ the Schr\"odingerization method mentioned in Section \ref{section:schro}. First, we need to eliminate the non-homogeneous term $\bb{F}$ in Eq.(\ref{equ:linear system2}) as done in \cite{Jin2025LinearSystems}. Let $\tilde{\bb{y}}(t)=[\bb{y}(t);1]$ with $\tilde{\bb{y}}(0) = \tilde{\bb{y}}_0$, one obtains the following equation
\begin{equation*}
\begin{aligned}
\frac{\d\tilde{\bb{y}}(T)}{\d T}&=
\begin{bmatrix}
    -H & \bb{F}\\
    O & 1
\end{bmatrix}
\tilde{\bb{y}}(T).
\end{aligned}
\end{equation*}
One can then directly apply the Schr\"odingization method. First, decompose the Hamiltonian $H$ into a Hermitian component $H_1 = (H + H^\dagger)/2$ (with $H_1^\dagger = H_1$) and an anti-Hermitian component $H_2 = (H - H^\dagger)/(2i)$ (with $H_2^\dagger = H_2$). By introducing the warped transformation $\hat{\bb{y}}(t,p) = e^{-p}\tilde{\bb{y}}(t)$ and extending it to $p < 0$, one finds that $\hat{\bb{y}}(t,p)$ satisfies
\begin{equation*}
\begin{aligned}
&\partial_{T}\hat{\bb{y}}(T,p)
= -H_1\partial_{p}\hat{\bb{y}}(T,p)+iH_2\hat{\bb{y}}(T,p), \\
&\hat{\bb{y}}(T=0,p) = e^{-|p|}\tilde{\bb{y}}_0.
\end{aligned}
\end{equation*}
Let $\bb{y}_h(T) = \sum\limits_{k,i} \bb{\hat{y}}_i(T,p_k) \ket{k,i}$, where $\bb{y}_i$ denotes the $i$-th component of $\bb{\hat{y}}$. Applying the discrete Fourier transform (DFT) along the $p-$direction, one gets
\begin{equation*}
\begin{aligned}
&\frac{\d}{\d T} \bb{y}_h(T) = -\i (P_\mu \otimes  H_1 ) \bb{y}_h(T) + \i (I\otimes H_2 ) \bb{y}_h(T),\\
&\bb{y}_h(0) = [
e^{-|p_0|},\cdots,e^{-|p_{N_p-1}|}]^{\top} \otimes \hat{\bb{y}}_0.
\end{aligned}
\end{equation*}
By preforming the change of variables $\hat{\bb{y}}_h(T)= (\Phi^{-1} \otimes I)\bb{y}_h(T)$, one obtains
\begin{equation}\label{equ:sch linear_2}
\frac{\d}{\d T} \tilde{\bb{u}}_h(T) = -\i H_{\text{total}}^{\text{stea}}\tilde{\bb{u}}_h(T) ,
\end{equation}
where $H_{\text{total}}^{\text{stea}}= D_\mu \otimes H_1 -  I \otimes H_2 $ and $\Phi$ is defined in Section \ref{section:schro}.
\subsection{Query complexity analysis}
\subsubsection{Methods for estimating the global error in steady-state solutions}
\label{section:error}
\par Through Lemma \ref{lemma:Berry_2015}, sparsity, the maximum modulus of the elements, and the evolution time are the three most critical factors. In the steady-state solution method, because our discretization scheme includes an explicit part, it is necessary to satisfy the CFL condition $\tau/h^2<1$. This weakens the influence of the maximum modulus of the elements, making the primary task the computation of the evolution time. Additionally, since the new time $T$ we introduced is independent of the original time, we can only calculate it based on the time $T^{\text{stea}}$ when the global relative error is less than $\delta$, i.e.
\begin{equation}
    \begin{aligned}
        \label{equ:relation:error}
        \frac{\hat{\bb{y}}_h(T^{\text{stea}})-(\hat{\bb{y}}_h)_{\infty}}{\hat{\bb{y}}_h(T)-(\hat{\bb{y}}_h)_{\infty}}<\delta,
    \end{aligned}
\end{equation}
where $(\hat{\bb{y}}_h)_{\infty}$ is the steady solution of $\hat{\bb{y}}_h(T)$. Before specifically computing the time, we present an error analysis method for linear ODEs.
\par In fact, to calculate the evolution time through error analysis, we need to consider the system Eq.(\ref{equ:sch linear_2}) after Schr\"odingerization. However, computing this system is overly complex. Here, we cite the conclusion from Hu et al. \cite{Hu2024QuantumMultiscale}, which provides the relationship between the errors of system Eq. (\ref{equ:sch linear_2}) and Eq. (\ref{equ:linear system2}):
\begin{lemma}
\label{lemma:error:1}
\par The induced norms of $\hat{\bb{y}}_h(T)-(\hat{\bb{y}}_h)_{\infty}$ and $\bb{y}(T)-\bb{y}_{\infty}$ satisfy
\begin{equation*}
    \begin{aligned}
        \|\hat{\bb{y}}_h(T)-(\hat{\bb{y}}_h)_{\infty}\|_2=C\|\bb{y}(T)-\bb{y}_{\infty}\|_2,
    \end{aligned}
\end{equation*}
in which $C\le1$ is an constant independent of time $t$.
\qed
\end{lemma}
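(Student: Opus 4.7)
The plan is to exploit the linearity of the chain of transformations carrying $\bb{y}(T)$ to $\hat{\bb{y}}_h(T)$. Concretely, the Schr\"odingerization pipeline factorizes as a composition of the augmentation $\bb{y}\mapsto\tilde{\bb{y}}=[\bb{y};1]$, the pointwise warped transformation $\tilde{\bb{y}}(T)\mapsto \{e^{-|p_k|}\tilde{\bb{y}}(T)\}_{k=0}^{N_p-1}$, and the unitary Fourier conjugation $(\Phi^{-1}\otimes I)$. Every link in this chain is a fixed, time-independent linear operator, so $\hat{\bb{y}}_h(T) = M\tilde{\bb{y}}(T)$ for some $T$-independent matrix $M = (\Phi^{-1}\otimes I)(\bb{E}\otimes I)$ with $\bb{E}=[e^{-|p_0|},\ldots,e^{-|p_{N_p-1}|}]^{\top}$. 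With $(\hat{\bb{y}}_h)_\infty := M\tilde{\bb{y}}_\infty$ the desired identity becomes
\begin{equation*}
\hat{\bb{y}}_h(T)-(\hat{\bb{y}}_h)_\infty = M\bigl(\tilde{\bb{y}}(T)-\tilde{\bb{y}}_\infty\bigr),
\end{equation*}
and taking 2-norms yields the claim.

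For the norm bookkeeping I would argue that $\Phi^{-1}$ is unitary and acts only on the $p$-register, so it preserves the $\ell^2$ norm, while the tensor structure $\bb{E}\otimes I$ scales the norm by exactly $\|\bb{E}\|_2$. Hence $\|M\bb{z}\|_2 = \|\bb{E}\|_2\,\|\bb{z}\|_2$ for every $\bb{z}$. Applied with $\bb{z}=\tilde{\bb{y}}(T)-\tilde{\bb{y}}_\infty=[\bb{y}(T)-\bb{y}_\infty;\,0]$, whose last coordinate vanishes because the padding component of $\tilde{\bb{y}}$ is identically $1$ at all times, the right-hand side collapses to $\|\bb{y}(T)-\bb{y}_\infty\|_2$. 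The constant is therefore $C=\|\bb{E}\|_2$, manifestly independent of $T$, and the bound $C\le 1$ follows from the tail estimate $\sum_k e^{-2|p_k|}\,\Delta p \le \|e^{-|p|}\|_{L^2(\mathbb{R})}^2=1$ together with the Fourier normalization adopted in Section~\ref{section:schro}.

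The main obstacle is the first identity $\hat{\bb{y}}_h(T)=M\tilde{\bb{y}}(T)$. While the warped transformation $\hat{\bb{y}}(T,p)=e^{-|p|}\tilde{\bb{y}}(T)$ solves the transport PDE in $p$ only on the half-line $p\ge 0$, one has to verify that the Fourier spectral discretization on the periodic domain $[-L,R]$ preserves this pointwise structure at every grid node $p_k$, or else replace the identity by a controlled approximation. This in turn hinges on choosing $L,R$ as specified in Section~\ref{section:schro} so that information emitted at $p=0$ does not reach the periodic wrap-around within the simulation horizon $T$. The coherence of the steady states $(\hat{\bb{y}}_h)_\infty$ and $\tilde{\bb{y}}_\infty$ is a secondary point: since the coefficient matrix of \eqref{equ:linear system2} is $-H$ with $H$ upper triangular with unit diagonal and hence has the full spectrum $\{-1\}$, its unique equilibrium lifts through the warping and the unitary Fourier change of variables to a unique equilibrium of \eqref{equ:sch linear_2}, matching the $M$-image construction used above.
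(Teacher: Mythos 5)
The paper itself does not prove Lemma~\ref{lemma:error:1}; the statement is quoted from Hu et al.\ \cite{Hu2024QuantumMultiscale} and the box terminating it marks the end of the cited result, not an in-paper proof. There is therefore no paper argument to compare against, and your proposal must stand on its own. It has a genuine gap.

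Your argument hinges on the identity $\hat{\bb{y}}_h(T)=M\tilde{\bb{y}}(T)$ for a fixed, $T$-independent $M=(\Phi^{-1}\otimes I)(\bb{E}\otimes I)$. This is the correct \emph{encoding} at $T=0$, but it cannot persist for $T>0$. If it did, differentiating at $T=0$ would give
\begin{equation*}
-\i\,H_{\text{total}}^{\text{stea}}\bigl((\Phi^{-1}\bb{E})\otimes\bb{z}\bigr)=(\Phi^{-1}\bb{E})\otimes(H_1+\i H_2)\bb{z}\qquad\text{for all }\bb{z},
\end{equation*}
and matching the $H_1\bb{z}$ terms forces $-\i D_\mu(\Phi^{-1}\bb{E})=\Phi^{-1}\bb{E}$; since $D_\mu$ is real and diagonal this has no nonzero solution. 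Equivalently, $e^{-\i T H_{\text{total}}^{\text{stea}}}$ is unitary, so a fixed injective $M$ intertwining it with the non-unitary flow of \eqref{equ:linear system2} would force $\|\tilde{\bb{y}}(T)\|_2$ to be constant in $T$, contradicting the exponential convergence of $\bb{y}(T)$ to $\bb{y}_\infty$ that Lemma~\ref{lemma:error:2} provides. The root cause is that the transport in $p$ reshapes the $p$-profile away from $e^{-|p|}$ the instant $T>0$: the product form $\bb{E}\otimes\tilde{\bb{y}}(T)$ survives only on $p\ge p^{\Diamond}(T)$ in the continuous picture, and the evolved shape on the rest of the $p$-axis contaminates every Fourier coefficient of $\hat{\bb{y}}_h(T)$. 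Your final paragraph flags this as a discretization technicality, but it is already an obstruction in the exact continuous Schr\"odingerization, and it undoes the whole strategy. A secondary issue is the constant: $\|\bb{E}\|_2=\bigl(\sum_k e^{-2|p_k|}\bigr)^{1/2}$ carries no $\Delta p$ weight, so comparison with $\|e^{-|p|}\|_{L^2(\mathbb{R})}=1$ does not by itself give $C\le1$. A correct proof must track the genuine unitary $p$-dynamics of the Schr\"odingerized error state, as done in \cite{Hu2024QuantumMultiscale}.
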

\noindent Therefore, the time calculated through $(\bb{y}(T^{\text{stea}})-(\bb{y})_{\infty})/(\bb{y}(T)-(\bb{y})_{\infty})<\delta$ can ensure that Eq. (\ref{equ:relation:error}) is satisfied.
\par Through the definition of $\bb{y}(T)$ in Eq. (\ref{equ:linear system2}), we can obtain the following estimation method for relative error, which is achieved by considering the 2-norm of the matrix exponential. Compared to existing methods, it is tighter and thus yields better results.
\begin{lemma}
\label{lemma:error:2}
\par For linear equation presented in Eq. (\ref{equ:linear system2}), $\bb{y}(T)$ converges to the steady state $\bb{y}_{\infty}$ as
\begin{equation}
    \begin{aligned}
        \label{equ:error:1}
        \| \bb{y}(T)-\bb{y}_{\infty}\|_2\le \| e^{-HT}\|_2\cdot\|\bb{y}_0-\bb{y}_{\infty}\|)_2.
    \end{aligned}
\end{equation}
\begin{proof}
\par The analytical solution to the ODE in Eq.(\ref{equ:linear system2}) is
\begin{equation*}
    \begin{aligned}
        \bb{y}(T)-\bb{y}_{\infty}=e^{-HT}(\bb{y}_0-\bb{y}_{\infty}).
    \end{aligned}
\end{equation*}
Consequently, applying the $2-$norm to both sides and leveraging the norm consistency property yields:
\begin{equation*}
    \begin{aligned}
        \|\bb{y}(T)-\bb{y}_{\infty}\|_2\le \| e^{-HT}\|_2\cdot\| \bb{y}_0-\bb{y}_{\infty}\|_2.
    \end{aligned}
\end{equation*}
This completes the proof.
\end{proof}
\end{lemma}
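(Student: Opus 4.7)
\medskip
\noindent\textbf{Proof plan.} The plan is to reduce the affine ODE to a homogeneous one centered at the steady state, solve the resulting constant-coefficient system via the matrix exponential, and then pass to norms using submultiplicativity.

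First I would invoke the observation made just before the lemma: since the eigenvalues of $-H$ are all equal to $-1$, the matrix $H$ is invertible, so the steady-state equation $H\bb{y}_\infty=\bb{F}$ admits a unique solution. Subtracting this identity from Eq.~(\ref{equ:linear system2}) and setting $\bb{z}(T):=\bb{y}(T)-\bb{y}_\infty$ yields the autonomous homogeneous system
\begin{equation*}
\frac{d\bb{z}(T)}{dT} = -H\bb{z}(T),\qquad \bb{z}(0)=\bb{y}_0-\bb{y}_\infty.
\end{equation*}
Since $H$ is time-independent, the variation-of-constants formula immediately gives the closed form $\bb{z}(T)=e^{-HT}(\bb{y}_0-\bb{y}_\infty)$. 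Taking the Euclidean norm on both sides and applying the standard submultiplicative inequality $\|Mv\|_2\le \|M\|_2\,\|v\|_2$ with $M=e^{-HT}$ then delivers the claimed bound in one line.

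The derivation itself is routine; the real care lies in how the bound will be \emph{used} downstream. The block matrix $H$ from Eq.~(\ref{equ:y_iter_new}) is upper triangular with unit diagonal but highly non-normal, so $\|e^{-HT}\|_2$ is not controlled by its eigenvalues alone and can exhibit non-monotonic transient growth (a ``hump'') before eventually decaying. The hard part I expect, therefore, is not this lemma but turning it into a sharp estimate of the steady-state time $T^{\text{stea}}$ satisfying Eq.~(\ref{equ:relation:error}); this is presumably where the Laplace-transform based refinement highlighted in the introduction takes over from the crude spectral bound.
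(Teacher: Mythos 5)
Your proof is correct and follows essentially the same route as the paper: write the shifted variable $\bb{y}(T)-\bb{y}_\infty$ as $e^{-HT}(\bb{y}_0-\bb{y}_\infty)$ and apply submultiplicativity of the $2$-norm. You add a little more justification than the paper (invertibility of $H$ to guarantee $\bb{y}_\infty$ exists, and an explicit reduction to the homogeneous system), but the argument is the same.
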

\noindent Therefore, through Lemma \ref{lemma:error:1} and \ref{lemma:error:2}, we can calculate the evolution time $T^{\text{stea}}$ by determining when $\| e^{-HT^{\text{stea}}} \|_2 < \delta$.
\subsubsection{Upper bound for\texorpdfstring{ $\| e^{-HT}\|_2$ }{}via the Laplace transform}
\label{section:upperbound}
\par In the previous section, we presented a method for error control using $\|e^{-HT}\|_2$. If we simply bound the norm of this matrix exponential, the computed result will be very poor. Therefore, directly calculating $e^{-HT}$ and bounding the 2-norm of this matrix will make the inequality tighter. However, computing this matrix exponential using its definition is quite complex. Here, we employ a method based on the Laplace transform and its inverse\cite{Hu2024FundamentalPO,Hu2024TransferPathways,Hu2024Keymotifs} to compute this matrix exponential, which relies on the following property.
\begin{remark}
\par For a matrix $A$, the Laplace transform of the matrix exponential $e^{At}$ is given by:
\begin{equation*}
    \begin{aligned}
        \mathcal{L}[e^{At}](s)=(sI-A)^{-1}.
    \end{aligned}
\end{equation*}
Note that we impose no requirements on $A$, such as Hermiticity, diagonalizability, etc.
\qed
\end{remark}
\par Through the Laplace transform, we convert the computation of the matrix exponential $e^{-HT}$ into solving for the inverse matrix. Based on the matrix $H$ defined in Eq. (\ref{equ:y_iter}), it is a near-upper triangular matrix but not strictly upper triangular, which makes solving for the inverse matrix highly complex. Therefore, here we choose to decompose $H$ into the sum of two matrices, i.e., $H=\bar{H}+E$, where $E$ consists of all elements of $H$ that contain the term $\mathcal{O}(\frac{\tau}{\varepsilon^2}e^{-\frac{\tau}{\varepsilon^2}})$ (excluding other terms containing $\varepsilon$), and $\bar{H}$ represents the remaining part, defined as $\bar{H}=\begin{bmatrix} 
\bar{L}_{11} & \bar{L}_{12} \\ 
\bar{L}_{21} & \bar{L}_{22} 
\end{bmatrix}$. The specific block matrices are simplified as follows:
\begin{equation*}
    \bar{L}_{11} = I,\quad
    \bar{L}_{12} =
    \begin{bmatrix}
    O    &  -\bar{A}_2&           &          \\
              &  O   &  \ddots   &          \\
              &           &   \ddots  & -\bar{A}_2 \\
              &           &           & O   \\
    \end{bmatrix},\quad
    \bar{L}_{21} = O,\quad
    \bar{L}_{22} =
    \begin{bmatrix}
    I    & -\bar{A}_1 &          &          \\
              &  I   &   \ddots &          \\
              &           &   \ddots & -\bar{A}_1 \\
              &           &          & I   \\
    \end{bmatrix},
\end{equation*}
where the specific definitions of $\bar{A}_1$ and $\bar{A}_2$ are as follows:
\begin{align*}
\bar{A}_1 &= (W_d^{\frac{1}{2}}\otimes I)\left[
(I+\frac{\lambda}{2} (V\otimes I)(I\otimes L_h+\frac{\varepsilon-1}{\varepsilon}E_\nabla\otimes I_2))({W}\otimes I)\right.\\
&\quad\left.+\frac{\lambda}{4h}
(V\otimes D_h)({VW}\otimes I)
(I\otimes D_h+E_\nabla\otimes I_1)
\right](W_d^{-\frac{1}{2}}\otimes I),\\
\bar{A}_2 &= 
-\frac{1}{N_x}
\left[ 
(I+\frac{\lambda}{4h} V\otimes L_h)
(V\otimes I)(W\otimes I)
(I\otimes D_h+E_\nabla\otimes I_1)\right.\\
&\quad\left.+\frac{\lambda}{2}
(V\otimes I)(I\otimes D_h+\frac{\varepsilon-1}{\varepsilon}E_\nabla\otimes I_1)(W\otimes I)
\right]
(W_d^{-\frac{1}{2}}\otimes I).
\end{align*}
\par Based on the decomposition of $H$, we can transform the computation of $e^{-HT}$ into the computation of its principal term $e^{\hat{H}T}$, while the influence of $E$ is calculated in the Appendix Section \ref{section:error:analysis}. Its impact on the 2-norm magnitude is of $\mathcal{O}(\frac{\tau}{\varepsilon^2}e^{-\frac{\tau}{\varepsilon^2}})$, which can therefore be ignored when $\varepsilon=o(h/\log h^{-1})$. Next, we use the Laplace transform and its inverse transform to solve for $e^{\hat{H}T}$. Since $\bar{H}$ is an upper triangular matrix, its block inverse matrix $(sI-\bar{H})^{-1}$ can be easily solved, and its specific structure is as follows:
\begin{equation*}
\begin{aligned}
(sI+\bar{H})^{-1} = 
\left[\begin{array}{cccc:cccc}
\frac{1}{s+1}I&&&&O & \frac{\bar{A}_2}{(s+1)^2} & \cdots & \frac{\bar{A}_2\bar{A}_{1}^{N-2}}{(s+1)^N}\\
&\frac{1}{s+1}I&&&&O & \cdots & \frac{\bar{A}_2\bar{A}_1^{N-3}}{(s+1)^{N-1}}\\
&&\ddots&&&&\ddots & \vdots\\
&&&\frac{1}{s+1}I&&&&O\\
\hdashline
&&&&\frac{1}{s+1}I & \frac{\bar{A}_1}{(s+1)^2} & \cdots & \frac{\bar{A}_1^{N-1}}{(s+1)^N}\\
&&&&&\frac{1}{s+1}I & \cdots & \frac{\bar{A}_1^{N-2}}{(s+1)^{N-1}}\\
&&&&&&\ddots & \vdots\\
&&&&&&&\frac{1}{s+1}I
 \end{array}\right].
\end{aligned}
\end{equation*}
Therefore, we can use the inverse Laplace transform and apply its property that $(s+1)^{-(m+1)}A^m$ is $\frac{t^{m}}{m!}e^{-t}A^m$, thus obtaining the specific expression for $e^{-\bar{H}t}$ as:
\begin{equation}\label{equ:exp_H}
\begin{aligned}
e^{-\bar{H}t} = e^{-t}
\left[\begin{array}{cccc:cccc}
I&&&&O & \bar{A}_2t & \cdots & \bar{A}_2\hat{A}_{1}^{N-2}\frac{t^{N-1}}{(N-1)!}\\
&I&&&&O & \cdots & \bar{A}_2\bar{A}_1^{N-3}\frac{t^{N-2}}{(N-2)!}\\
&&\ddots&&&&\ddots & \vdots\\
&&&I&&&&O\\
\hdashline
&&&&I & \bar{A}_1t & \cdots & \bar{A}_1^{N-1}\frac{t^{N-1}}{(N-1)!}\\
&&&&&I & \cdots & \bar{A}_1^{N-2}\frac{t^{N-2}}{(N-2)!}\\
&&&&&&\ddots & \vdots\\
&&&&&&&I
 \end{array}\right].
 \end{aligned}
\end{equation}
\par We have provided the specific structure of $e^{-\bar{H}T}$ in Eq. (\ref{equ:exp_H}), and we can observe a certain pattern: each of the four blocks is an upper triangular matrix, and the matrix blocks along each sub-diagonal within the sub-block matrices are identical. Based on this and combined with the Appendix Section \ref{section:error:analysis}, we can derive the following upper bound estimate for $\|e^{-\bar{H}T}\|_2$. Since our calculations involve very few approximations, the upper bound obtained through this method is actually very close to the true value of $\|e^{-\bar{H}T}\|_2$:
\begin{equation}\label{equ:exp_H_esi}
\begin{aligned}
\|e^{-HT}\|_2 
&\le e^{-T}
\left(1+\sum_{k=0}^{N-1}\frac{\|\bar{A}_1\|_2^kT^k}{k!}+\|\bar{A}_2\|_2\sum_{k=1}^{N-1}\frac{\|\hat{A}_1\|_2^{k-1}T^k}{k!}\right)\\
&\le e^{-T}\left(1+
e^{\|\bar{A}_1\|_2T}
+\frac{\|\bar{A}_2\|_2}{\|\bar{A}_1\|_2}
e^{\|\bar{A}_1\|_2T}
\right)\\
&\le \left(2+\frac{\|\bar{A}_2\|_2}{\|\bar{A}_1\|_2}\right)e^{(-1+\|\bar{A}_1\|_2)T}.
\end{aligned}
\end{equation}
Based on this, we transform the calculation of evolution time into the estimation of the upper and lower bounds of $\|\bar{A}_1\|_2$ and the upper bound of $\|\bar{A}\|_2$, with their specific values calculated in the Appendix Sections \ref{section:A1:lower}, \ref{section:A1:upper} and \ref{section:A2:upper}.

\subsubsection{Estimation for the evolutionary time and the query complexity}
\par In this section, we will first provide an estimate of the evolution time $T^{\text{stea}}$, and then based on this, give an estimate of the query complexity. In Section \ref{section:error}, we mentioned using $\|e^{-HT}\|_2 < \delta$ to estimate the time. Subsequently, in Section \ref{section:upperbound}, we decomposed $H = \bar{H} + E$ and used the conclusions from the Appendix Section \ref{section:error:analysis} to show that the impact of $E$ on the matrix exponential is $\mathcal{O}\left(\frac{\tau}{\varepsilon^2}e^{-\frac{\tau}{\varepsilon^2}}\right)$ (which can be neglected under the condition $\varepsilon = o(h)$), thus allowing us to use $\|e^{-\bar{H}T}\|_2$ instead of $\|e^{-HT}\|_2$. Finally, we provided the specific process for $e^{-\bar{H}T}$ and used the upper bound for $\|e^{-\bar{H}T}\|_2$ given by Eq. (\ref{equ:exp_H_esi}) to estimate the time, i.e. we can estimate $T^{\text{stea}}$ by solving the following inequality
\begin{align*}
\left(2+\frac{\|\bar{A}_2\|_2}{\|\bar{A}_1\|_2}\right)e^{(-1+\|\bar{A}_1\|_2)T^{\text{stea}}}
\le \delta.
\end{align*}
Under the condition that $\tau/h^2 < \frac{10}{11}$, we can use the conclusions from the Appendix Sections \ref{section:A1:lower}, \ref{section:A1:upper} and \ref{section:A2:upper}, namely $1-\|\bar{A}_1\|_2\gtrsim N_t^{-1}$, $\|\bar{A}_1\|_2\gtrsim N_v^{-1}$, and $\|\bar{A}_2\|\lesssim N_v^{\frac{1}{2}}$, to obtain the lower bound estimate of $T^{\text{stea}}$ as follows: $T^{\text{stea}}$ as follows
\begin{equation*}
\begin{aligned}
T^{\text{stea}}
&>\frac{-\log\frac{\delta}{2}+\log\left(2+\frac{\|\bar{A}_2\|_2}{\|\bar{A}_1\|_2}\right)}{1-\|\bar{A}_1\|_2}
&\succsim N_t\log(\frac{1}{\delta}).
\end{aligned}
\end{equation*}
\par Based on the calculation of the evolution time $T^{\text{stea}}$, we can estimate the query complexity of the steady-state solution method based on Schr\"odingerization through Lemma \ref{lemma:Berry_2015} as follows:
\begin{theorem}
\label{theorem:Steady-state solution}
The Hamiltonian dynamics described by \eqref{equ:sch linear_2} can be numerically approximated with precision $\delta$ using
\begin{equation*}
\mathcal{Q}_{query} = 
\mathcal{\tilde{O}}(N_vN_x^2\log(N_x)),
\quad
\mathcal{Q}_{gate} = 
\mathcal{\tilde{O}}(N_vN_x^2\log(N_x)),
\end{equation*}
where $\mathcal{\tilde{O}}(\cdot)$ denotes $\mathcal{O}(\cdot)$ while omitting logarithmic factors of lower order.
\end{theorem}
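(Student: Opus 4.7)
The plan is to mirror the argument used in the proof of Theorem~\ref{theorem:The iterative methods}, invoking Lemma~\ref{lemma:Berry_2015} with $H_{\text{total}}^{\text{stea}}=D_\mu\otimes H_1-I\otimes H_2$ as the Hamiltonian to be simulated and $T^{\text{stea}}$ as the evolution time, but using the refined time estimate obtained from the Laplace-transform bound on $\|e^{-HT}\|_2$ developed in Sections~\ref{section:error} and \ref{section:upperbound}. The four ingredients that must be controlled are the sparsity $\mathcal{S}(H_{\text{total}}^{\text{stea}})$, the entry-wise norm $\|H_{\text{total}}^{\text{stea}}\|_{\max}$, the qubit count $m(H_{\text{total}}^{\text{stea}})$, and the evolution time $T^{\text{stea}}$.

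First, I would verify the simple structural bounds. The block matrix $H$ in \eqref{equ:y_iter_new} is block upper bidiagonal in the $N_t$ direction, whose off-diagonal block $A$ inherits the structure from $A_1,A_2,B_1,B_2$; the only source of filling along rows is the matrix $W=\mathbf{1}_{N_v}\otimes[w_1,\ldots,w_{N_v}]$, so each row of $H$ has at most $\mathcal{O}(N_v)$ non-zeros, and the Schr\"odingerization tensor structure preserves this, giving $\mathcal{S}(H_{\text{total}}^{\text{stea}})=\mathcal{O}(N_v)$. By the preprocessing in Section~\ref{section:preprocessing} every block of $H$ has entries of size $\mathcal{O}(1)$, hence $\|H\|_{\max}=\mathcal{O}(1)$; combined with the diagonal Fourier multiplier this yields $\|H_{\text{total}}^{\text{stea}}\|_{\max}=\|D_\mu\|_{\max}\|H\|_{\max}=\mathcal{O}(N_\eta)$. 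The total Hilbert space has dimension $N_\eta\cdot N_t\cdot N_v\cdot N_x$, so $m(H_{\text{total}}^{\text{stea}})=\mathcal{O}(\log N_\eta+\log N_t+\log N_v+\log N_x)$.

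Next, I would invoke the already-established time estimate $T^{\text{stea}}\succsim N_t\log(1/\delta)$. This is the nontrivial ingredient: the naive bound $\|e^{-HT}\|_2\le e^{\|H\|_2 T}$ would be useless, which is why Section~\ref{section:upperbound} splits $H=\bar H+E$, treats the $\mathcal{O}(\tau\varepsilon^{-2}e^{-\tau/\varepsilon^2})$ contribution of $E$ via the appendix, and uses the explicit block structure of $e^{-\bar H T}$ together with the bounds $1-\|\bar A_1\|_2\gtrsim N_t^{-1}$, $\|\bar A_1\|_2\gtrsim N_v^{-1}$, $\|\bar A_2\|_2\lesssim N_v^{1/2}$ from Appendix Sections~\ref{section:A1:lower}, \ref{section:A1:upper}, \ref{section:A2:upper} to invert \eqref{equ:exp_H_esi} for the required threshold. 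I would simply quote this result and plug it in.

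Finally, I would combine everything. Substituting into Lemma~\ref{lemma:Berry_2015},
\begin{equation*}
\chi(H_{\text{total}}^{\text{stea}})=\mathcal{S}(H_{\text{total}}^{\text{stea}})\,\|H_{\text{total}}^{\text{stea}}\|_{\max}\,T^{\text{stea}}=\mathcal{O}\bigl(N_v\cdot N_\eta\cdot N_t\log(1/\delta)\bigr).
\end{equation*}
Choosing the target precision $\delta=\mathcal{O}(1/N_x)$, using smooth $p$-initialization to get $N_\eta=\mathcal{O}(\log(1/\delta))=\mathcal{O}(\log N_x)$, and recalling the CFL-consistent choice $N_t=\mathcal{O}(N_x^2)$, one obtains $\chi=\tilde{\mathcal{O}}(N_vN_x^2\log N_x)$ and $m(H_{\text{total}}^{\text{stea}})=\mathcal{O}(\log(N_tN_vN_x))$, which when fed into the query and gate cost formulas of Lemma~\ref{lemma:Berry_2015} yield exactly the stated $\tilde{\mathcal{O}}(N_vN_x^2\log N_x)$ bounds. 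The main obstacle in this plan is the evolution-time estimate, and specifically the need to show that $1-\|\bar A_1\|_2$ scales like $1/N_t$ rather than degrading with $\varepsilon$; once that lower bound (and the compensating upper bound on $\|\bar A_2\|_2$) is in hand, the rest of the proof is bookkeeping.
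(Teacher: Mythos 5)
Your proposal is correct and follows essentially the same route as the paper: identify sparsity $\mathcal{O}(N_v)$, max-entry $\mathcal{O}(N_\eta)$, qubit count $\mathcal{O}(\log(N_\eta N_v N_x N_t))$, and, as the only substantive step, use the Laplace-transform bound on $\|e^{-HT}\|_2$ together with the appendix estimates $1-\|\bar A_1\|_2\gtrsim N_t^{-1}$, $\|\bar A_1\|_2\gtrsim N_v^{-1}$, $\|\bar A_2\|_2\lesssim N_v^{1/2}$ to obtain $T^{\text{stea}}=\mathcal{O}(N_t\log(1/\delta))$ before substituting into Lemma~\ref{lemma:Berry_2015}. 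You also correctly flag that the evolution-time estimate is where the whole argument lives, matching the paper's emphasis.
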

\begin{proof}
\par Following Lemma \ref{lemma:Berry_2015}, our analysis primarily concerns the evaluation of $\chi(H_{\text{total}}^{\text{stea}})$ and $m(H_{\text{total}}^{\text{stea}})$, with the detailed procedure outlined below:
\begin{enumerate}
\item For the sparsity parameter $\mathcal{S}(H_{\text{total}}^{\text{stea}})$. Direct examination reveals this quantity scales as $\mathcal{O}(N_v)$, originating from the operator $W$ in the iterative formulation Eq. (\ref{equ:iter_sys_2}).
\item For the maximal element magnitude $\|H_{\text{total}}^{\text{stea}}\|_{\max}$. Through analysis of Eq. (\ref{equ:sch linear_2}), we establish $\|H_{\text{total}}^{\text{stea}}\|_{\max} = \|H_1\|_{\max}\|D_{\eta}\|_{\max}$, with $\|H_1\|_{\max} = \mathcal{O}(1)$ and $\|D_{\eta}\|_{\max} = N_{\eta}$. Consequently, $\|H_{\text{total}}^{\text{stea}}\|_{\max} = \mathcal{O}(N_{\eta})$.
\item For the evolution time $T^{\text{stea}}$. Our previous derivation yields $T^{\text{stea}} = \mathcal{O}(N_t\log(1/\delta))$.
\item For the matrix dimensionality of $H_{\text{total}}^{\text{stea}}$. The system size is clearly given by $N_\eta N_v N_x N_t$.
\end{enumerate}
These computations enable us to determine $\chi(H_{\text{total}}^{\text{stea}})$ and the gate count $m(H_{\text{total}}^{\text{stea}})$:
\begin{equation*}
\begin{aligned}
\chi(H_{\text{total}}^{\text{stea}}) &= \mathcal{O}(N_vN_tN_{\eta}\log(1/\delta)) =
\mathcal{O}(N_vN_x^2\log(N_x)),\\
m(H_{\text{total}}^{\text{stea}}) &= \log(N_{\eta}N_vN_xN_t)
= \mathcal{O}(\log(N_{\eta})+\log(N_v)+\log(N_x)+\log(N_t)).
\end{aligned}
\end{equation*}
Here we take $\delta = \mathcal{O}(1/N_x)$. The parameter $N_{\eta}$ satisfies $N_{\eta} = \mathcal{O}(\log(1/\delta)) = \mathcal{O}(\log(N_x))$ as established in \cite{Jin2025LinearNonUnitary}.
\par By inserting the derived expressions for $\chi(H_{\text{total}}^{\text{stea}})$, $m(H_{\text{total}}^{\text{stea}})$, and $N_\eta$ into Lemma \ref{lemma:Berry_2015}, we arrive at the following computational complexities for simulating Hamiltonian $A$:
\begin{equation*}
\mathcal{Q}_{query} = 
\mathcal{\tilde{O}}(N_vN_x^2\log(N_x)),
\quad
\mathcal{Q}_{gate} = 
\mathcal{\tilde{O}}(N_vN_x^2\log(N_x)).
\end{equation*}
\end{proof}
It can be seen that query complexity for the steady-state solution method based on Schr\"odingerization is the same as the iterative method based on Schr\"odingerization, but this approach can give numerical solutions at all intermediate states, thus are of important value in applications.

\section{Numerical examples}
\label{sec:Numerical examples}

\par In this section, we validate our two algorithms  through numerical simulations by varying physical parameters and data.  Here, we solve the  one-dimensional problems  with the following parameters:
\begin{itemize}
    \item \textit{Problem I.} $x\in[0,1]$, $F_L(v)=1$, $F_R(v)=0$, $\sigma_S=1$, $\sigma_A=0$, $Q=0$, $\varepsilon\in[10^{-1}, 10^{-8}]$.
    \item \textit{Problem II.} $x\in[0,1]$, $F_L(v)=0$, $F_R(v)=0$, $\sigma_S=1+(10x)^2$, $\sigma_A=0$, $Q=1$, $\varepsilon\in[10^{-1}, 10^{-8}]$.
    \item \textit{Problem III.} $x\in[0,1]$, $F_L(v)=v$, $F_R(v)=0$, $\sigma_S=1$, $\sigma_A=0$, $Q=1$, $\varepsilon\in[10^{-1}, 10^{-8}]$.
\end{itemize}
In the steady-state solution method, zero initial values are used for all variables and the evolution time is $2N_t$. In the subprogram related to Schr\"odingerization, we set $L=R=N_x$ with using the single-point recovery method. For the variable $N_v$, the $S_8$ standard Gaussian quadrature set on the interval $[-1,1]$ is employed. 
\subsection{Iterative method based on Schr\"odingerization}
\par For the iterative method based on Schr\"odingerization, we adopt the following approach for algorithmic comparison. We directly solve the iterative equation in Eq. (\ref{equ:x_iter}) to represent the solution to the  original problem for comparison, whose results are plotted by dotted lines. We also employ the Schr\"odingerization method for numerical solution, marked as black dots. By examining the results in Fig. \ref{fig:result:1}, Fig. \ref{fig:result:3} and Fig. \ref{fig:result:5}, it can be observed that the numerical results obtained using Schr\"odingerization are identical to those from directly solving the iterative equation. This demonstrates that our iterative method based on Schr\"odingerization is capable of effectively solving this problem and obtain good approximations. 

\subsection{Steady-state solution method based on Schr\"odingerization}

\par For the steady-state solution method based on Schr\"odingerization, we use the solution of the linear system defined in Eq. (\ref{equ:linear system1}) as the original result for comparison, represented by a dotted line, while the numerical solution obtained via Schr\"odingerization is indicated by black dots. The results shown in Fig. \ref{fig:result:2}, Fig. \ref{fig:result:4} and Fig. \ref{fig:result:6} also demonstrate that our steady-state solution method can effectively compute these two numerical problems with good accurary. 
\begin{figure}[htbp]
    \centering
    \subfigure[the mass density $\rho$ for $\varepsilon=10^{-1}$]{\includegraphics[width=0.425\linewidth]{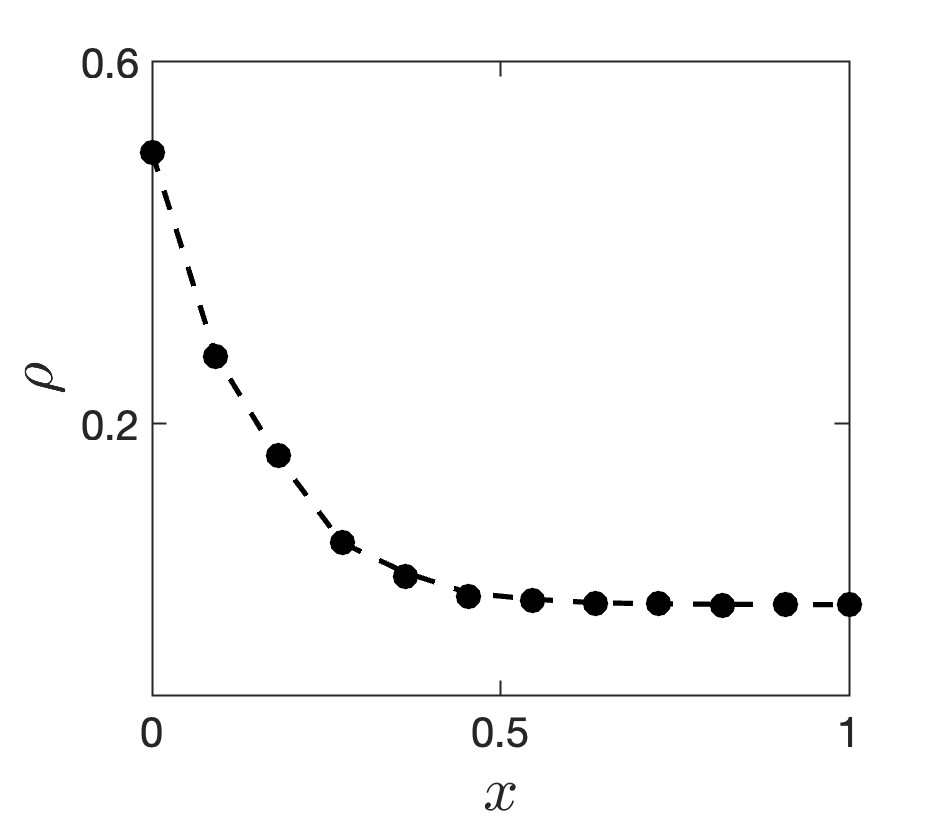}}
    \subfigure[the mass flux $j$ for $\varepsilon=10^{-1}$]{\includegraphics[width=0.425\linewidth]{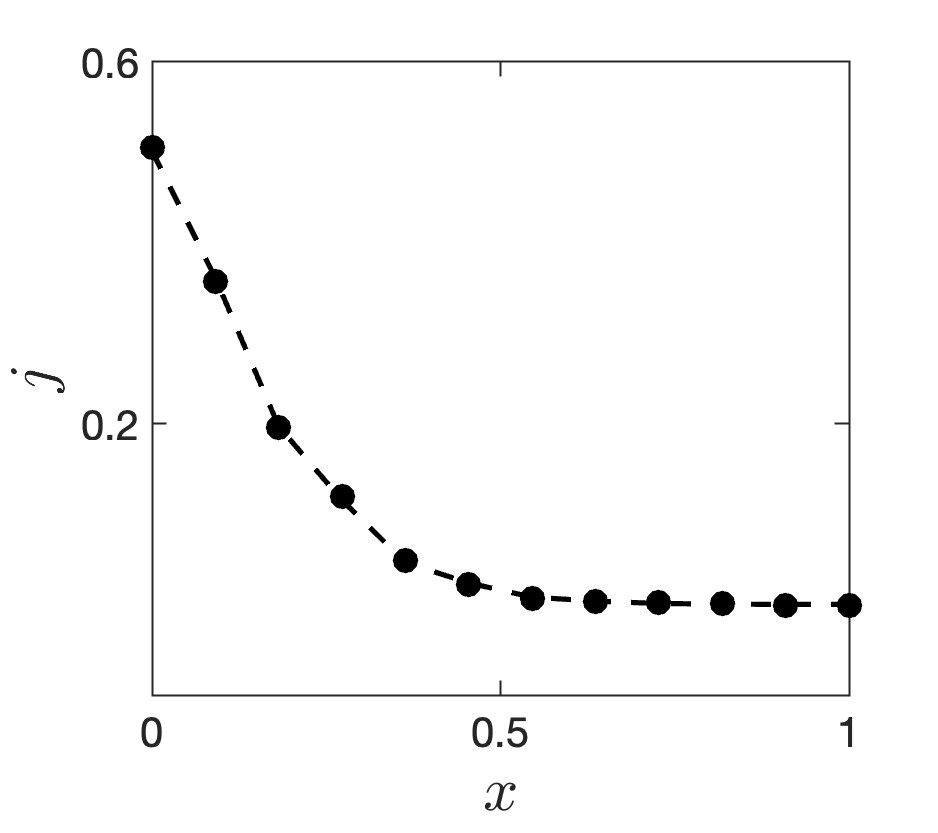}}\\
    \subfigure[the mass density $\rho$ for $\varepsilon=10^{-8}$]{\includegraphics[width=0.425\linewidth]{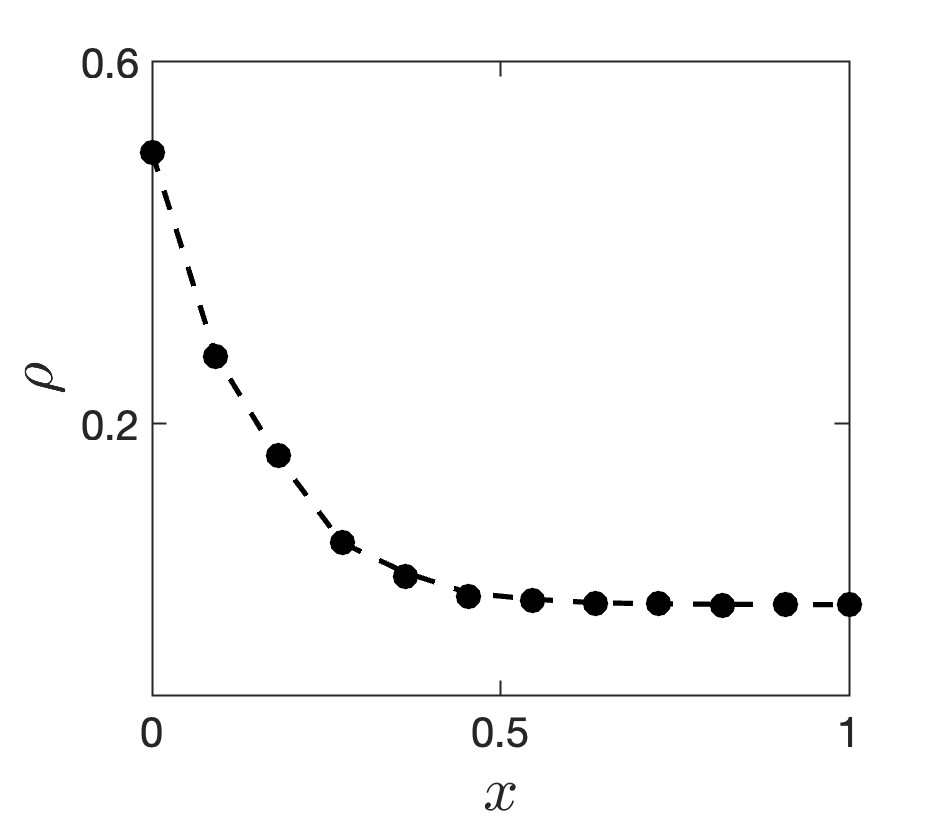}}
    \subfigure[the mass flux $j$ for $\varepsilon=10^{-8}$]{\includegraphics[width=0.425\linewidth]{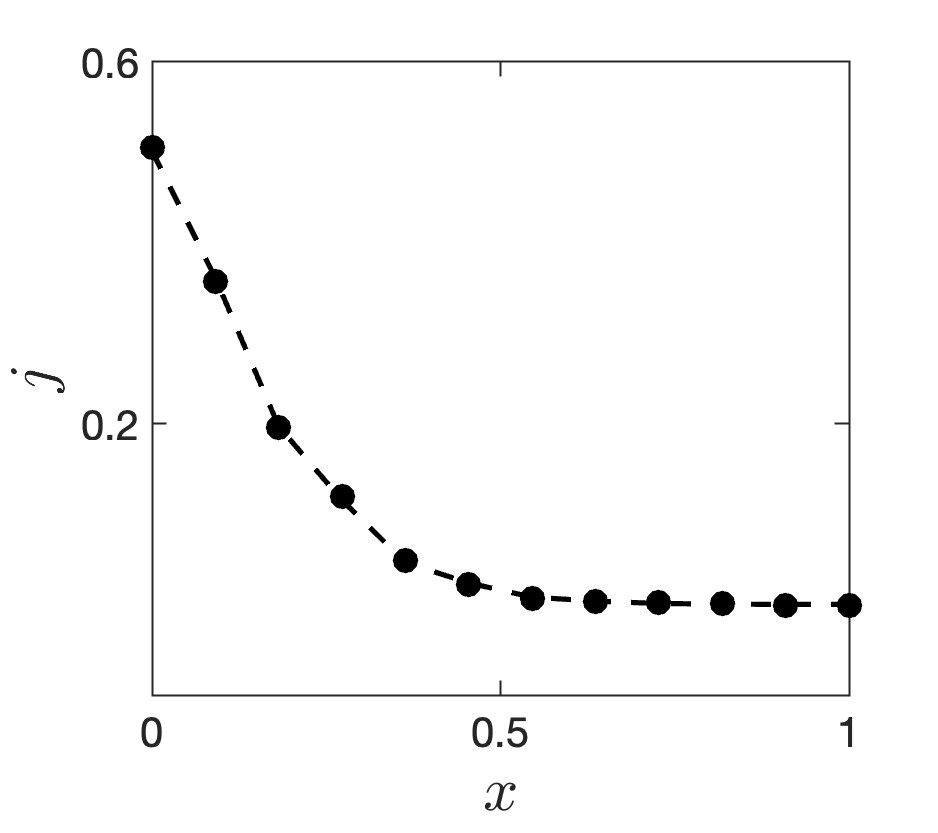}}
    \caption{\label{fig:result:1} Solution at time $t=0.05$ for Problem I using the iterative method, with parameters $\Delta x = 1/10$, $\Delta t = \Delta x^2$ and $N_p = 2^7$.}
\end{figure}
\begin{figure}[htbp]
    \centering
    \subfigure[the mass density $\rho$ for $\varepsilon=10^{-1}$]{\includegraphics[width=0.425\linewidth]{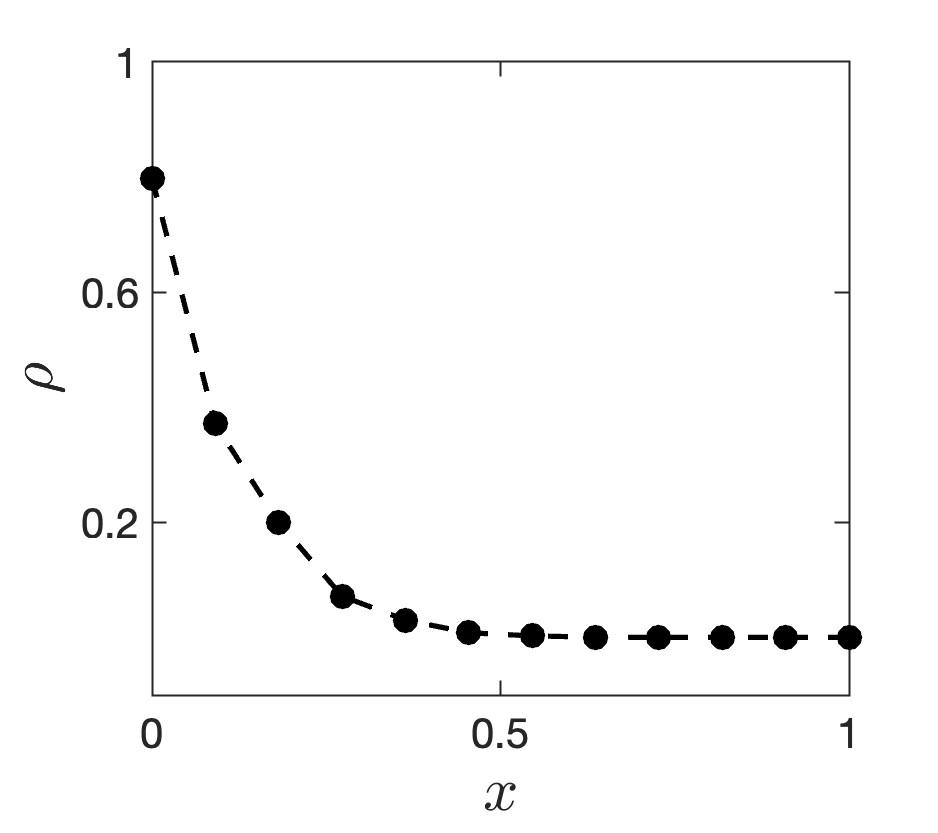}}
    \subfigure[the mass flux $j$ for $\varepsilon=10^{-1}$]{\includegraphics[width=0.425\linewidth]{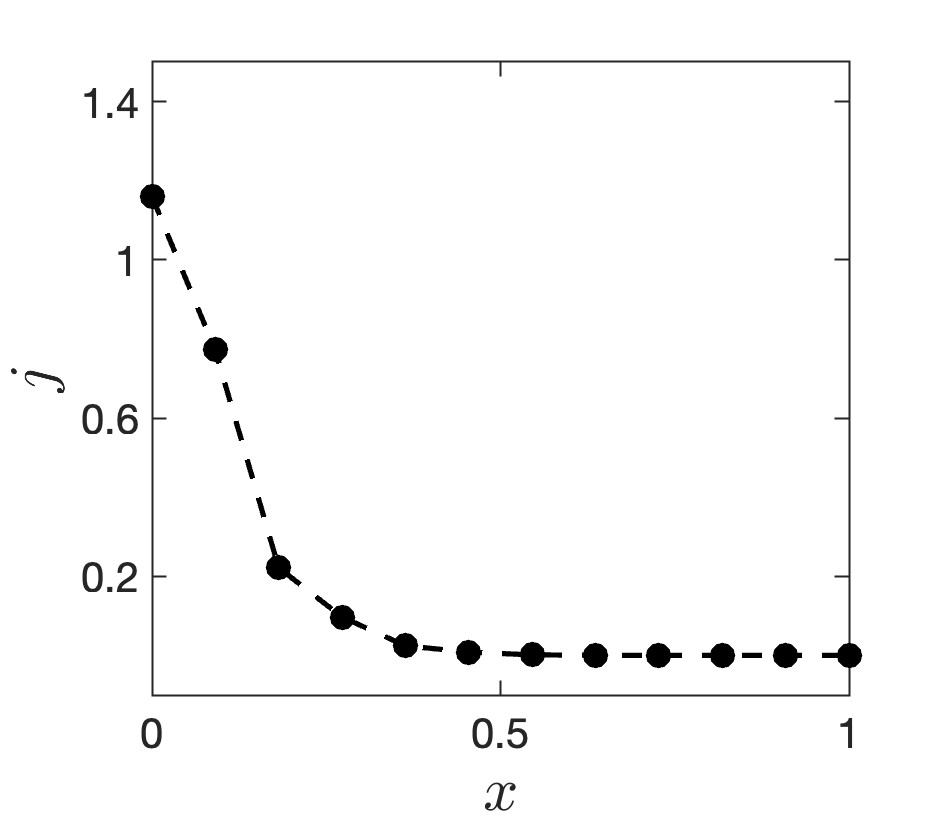}}\\
    \subfigure[the mass density $\rho$ for $\varepsilon=10^{-8}$]{\includegraphics[width=0.425\linewidth]{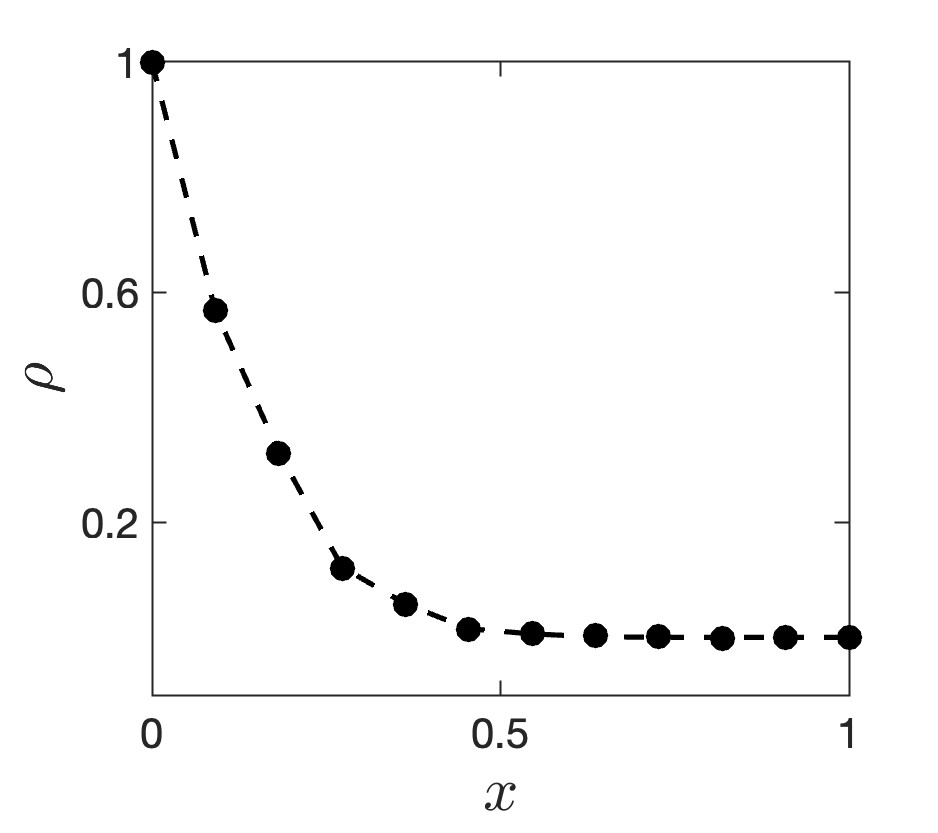}}
    \subfigure[the mass flux $j$ for $\varepsilon=10^{-8}$]{\includegraphics[width=0.425\linewidth]{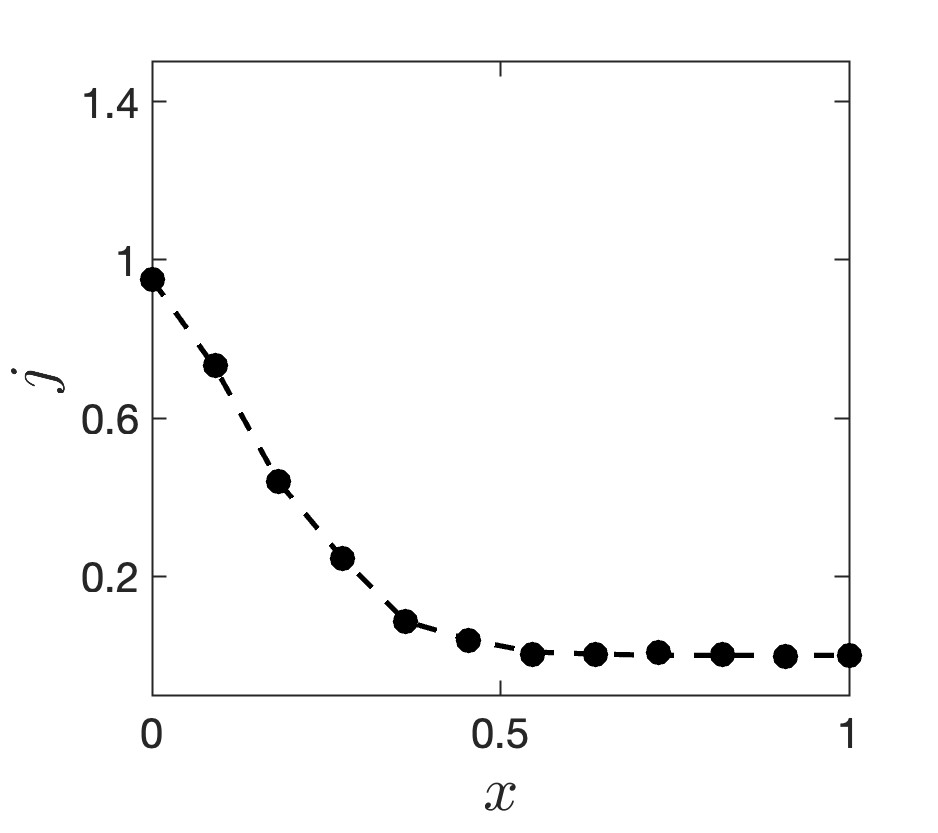}}
    \caption{\label{fig:result:2} Solution at time $t=0.05$ for Problem I using the steady-state solution method, with parameters $\Delta x = 1/10$, $\Delta t = 10/11\Delta x^2$ and $N_p = 2^7$.}
\end{figure}
\begin{figure}[htbp]
    \centering
    \subfigure[the mass density $\rho$ for $\varepsilon=10^{-1}$]{\includegraphics[width=0.425\linewidth]{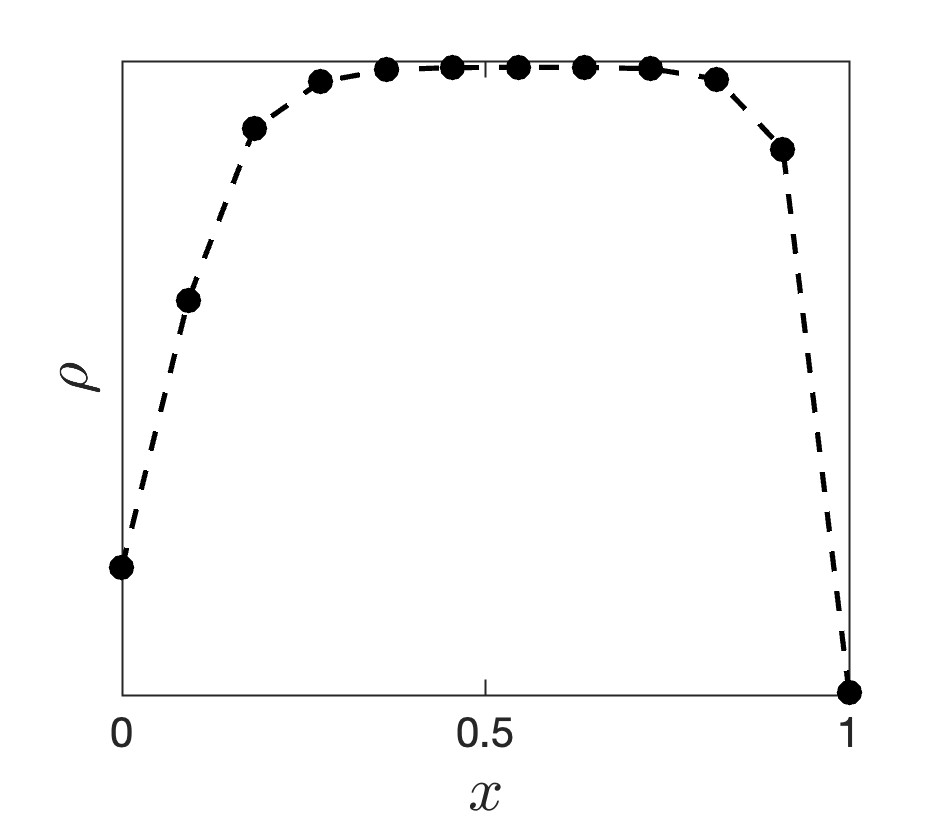}}
    \subfigure[the mass flux $j$ for $\varepsilon=10^{-1}$]{\includegraphics[width=0.425\linewidth]{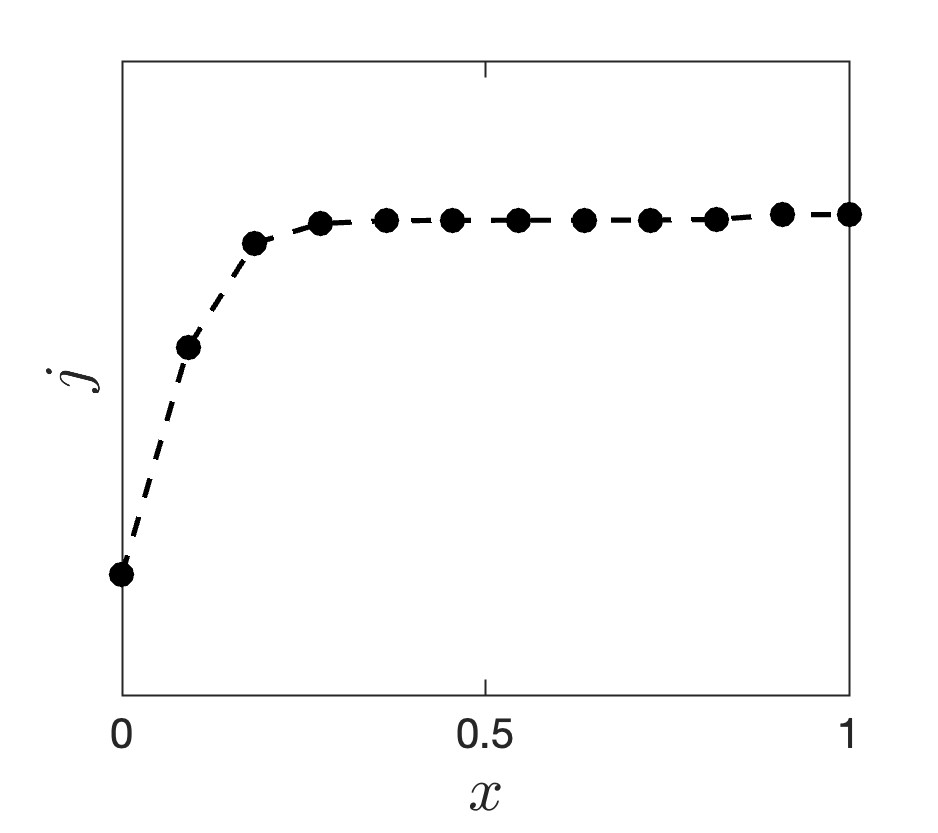}}\\
    \subfigure[the mass density $\rho$ for $\varepsilon=10^{-8}$]{\includegraphics[width=0.425\linewidth]{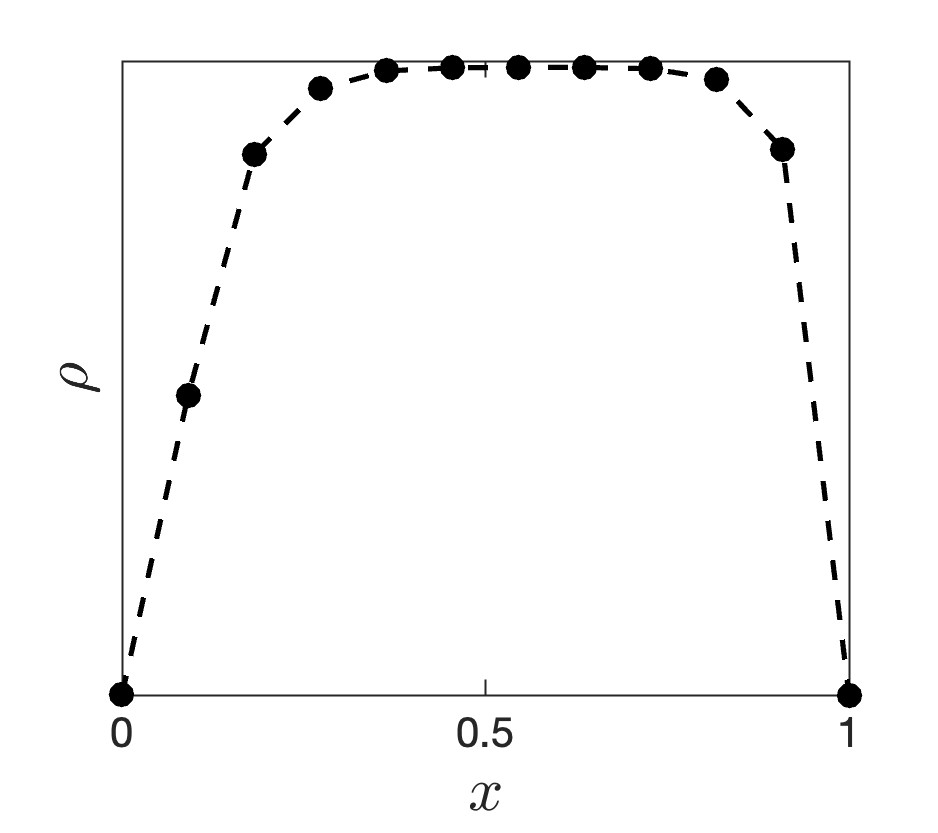}}
    \subfigure[Dthe mass flux $j$ for $\varepsilon=10^{-8}$]{\includegraphics[width=0.425\linewidth]{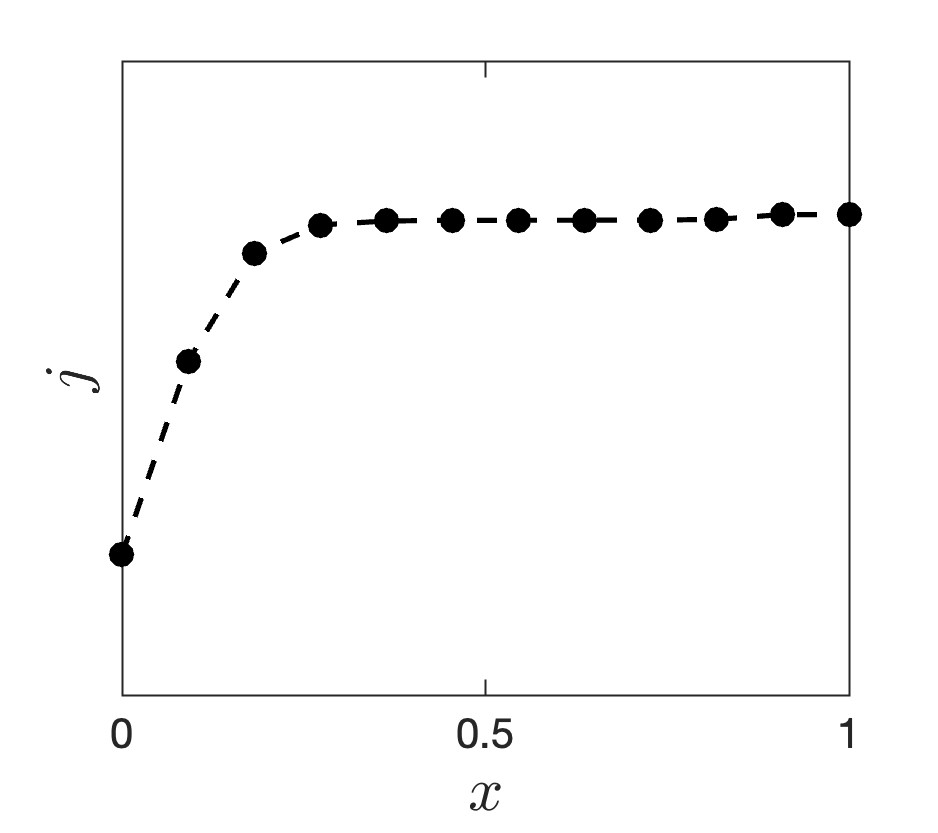}}
    \caption{\label{fig:result:3} Solution at time $t=0.1$ for Problem II using the iterative method, with parameters $\Delta x = 1/10$, $\Delta t = \Delta x^2$ and $N_p = 2^{10}$.}
\end{figure}
\begin{figure}[htbp]
    \centering
    \subfigure[the mass density $\rho$ for $\varepsilon=10^{-1}$]{\includegraphics[width=0.425\linewidth]{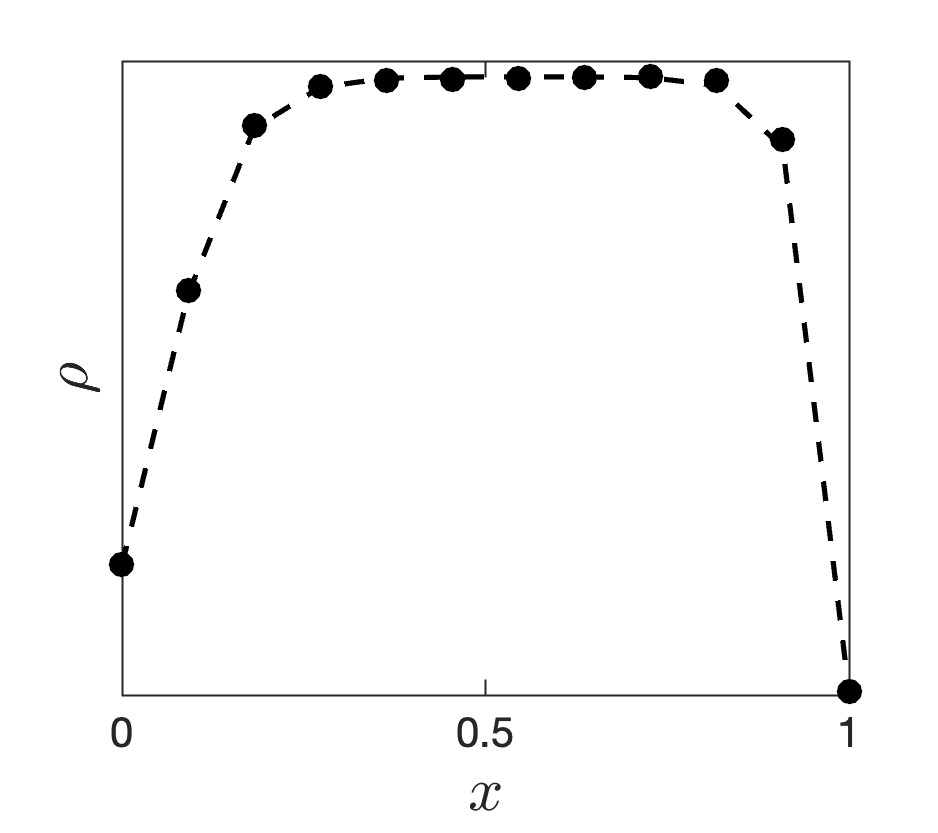}}
    \subfigure[the mass flux $j$ for $\varepsilon=10^{-1}$]{\includegraphics[width=0.425\linewidth]{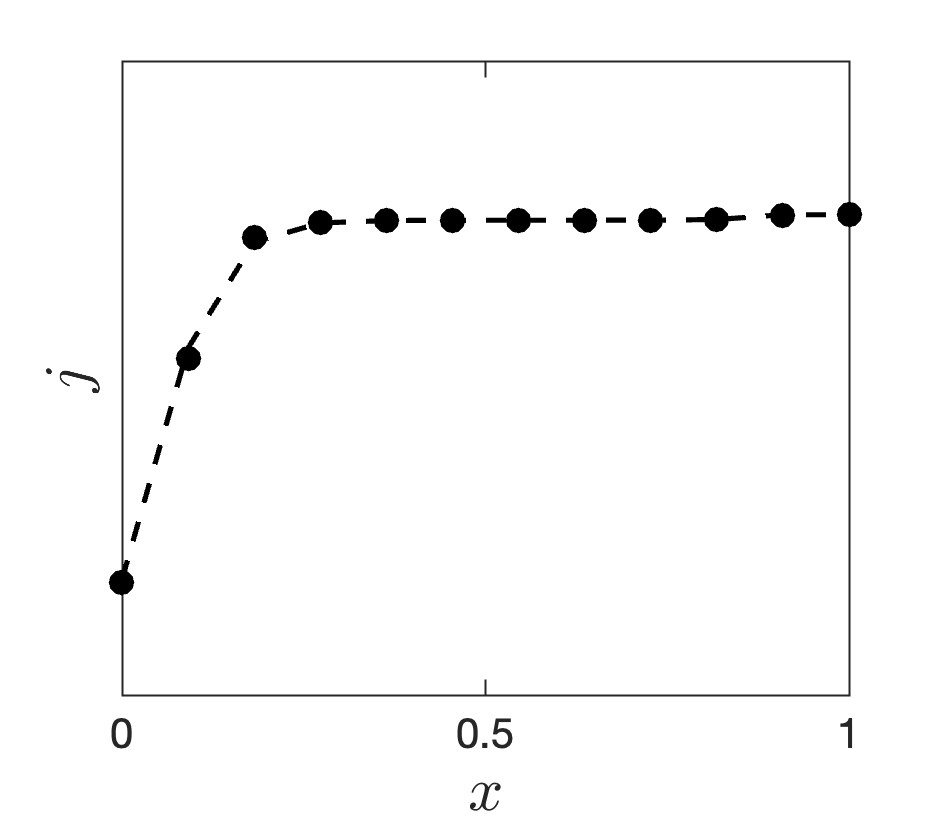}}\\
    \subfigure[the mass density $\rho$ for $\varepsilon=10^{-8}$]{\includegraphics[width=0.425\linewidth]{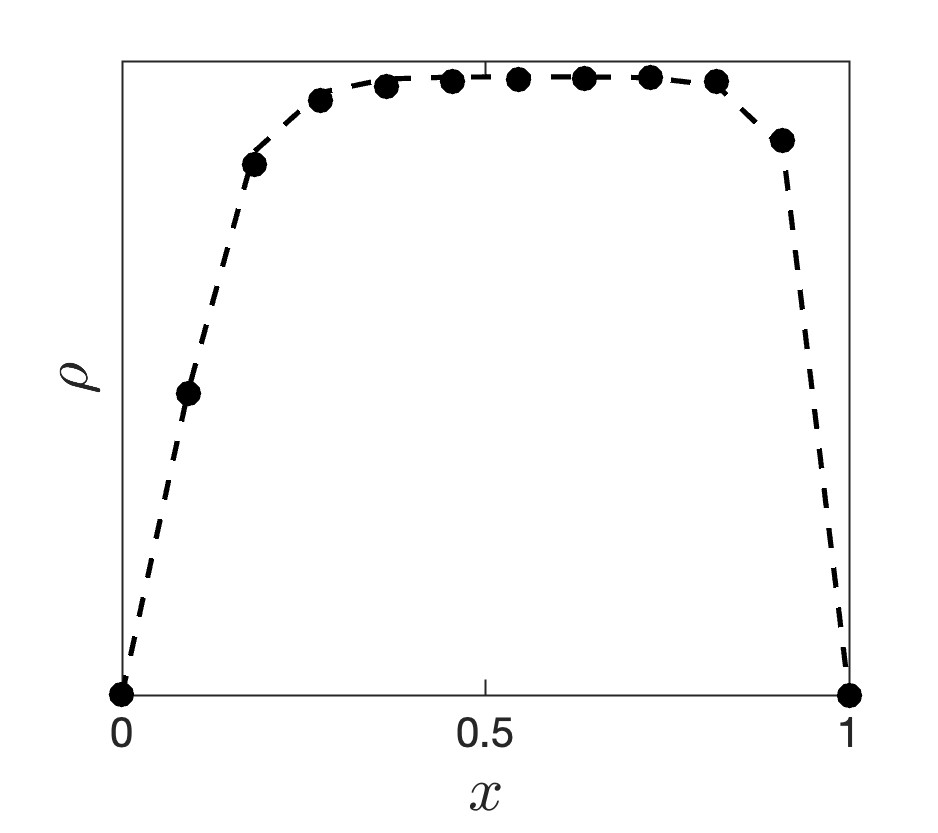}}
    \subfigure[the mass flux $j$ for $\varepsilon=10^{-8}$]{\includegraphics[width=0.425\linewidth]{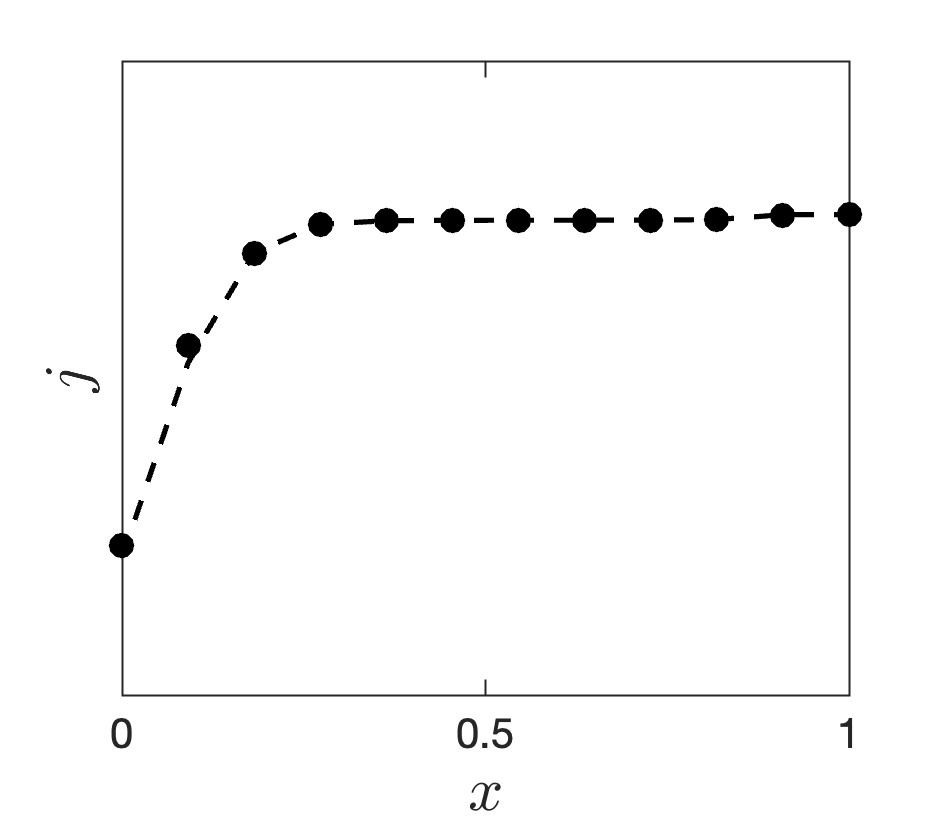}}
    \caption{\label{fig:result:4}Solution at time $t=0.1$ for Problem II using the steady-state solution method, with parameters $\Delta x = 1/10$, $\Delta t = 10/11\Delta x^2$ and $N_p = 2^{9}$.}
\end{figure}
\begin{figure}[htbp]
    \centering
    \subfigure[the mass density $\rho$ for $\varepsilon=10^{-1}$]{\includegraphics[width=0.425\linewidth]{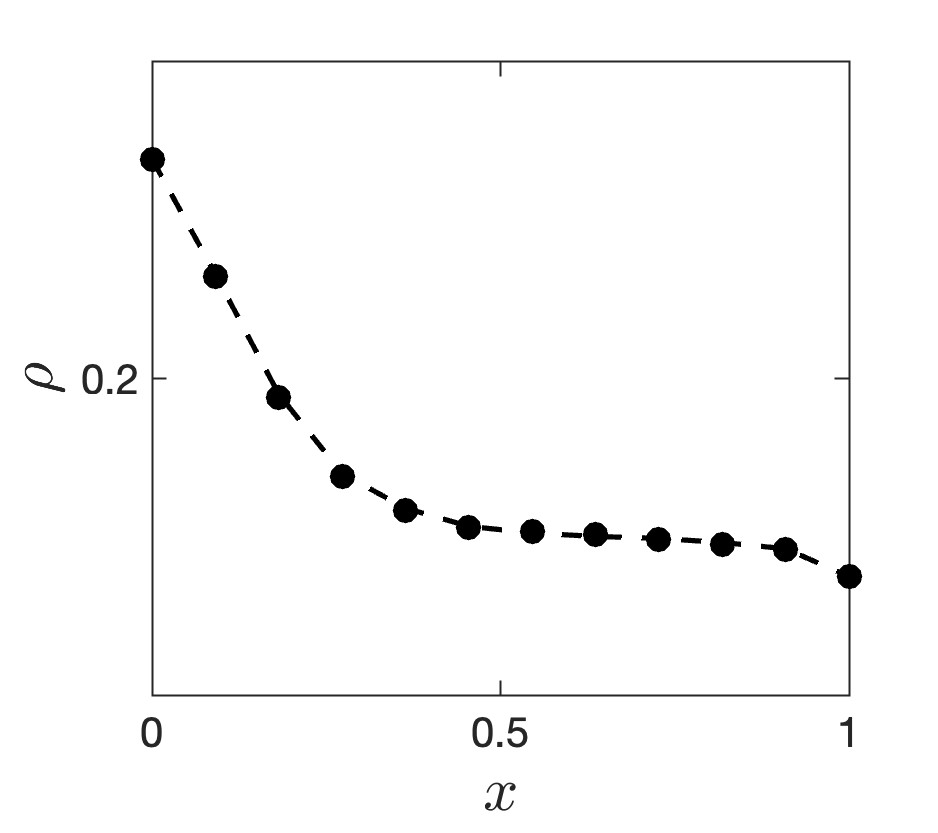}}
    \subfigure[the mass flux $j$ for $\varepsilon=10^{-1}$]{\includegraphics[width=0.425\linewidth]
    {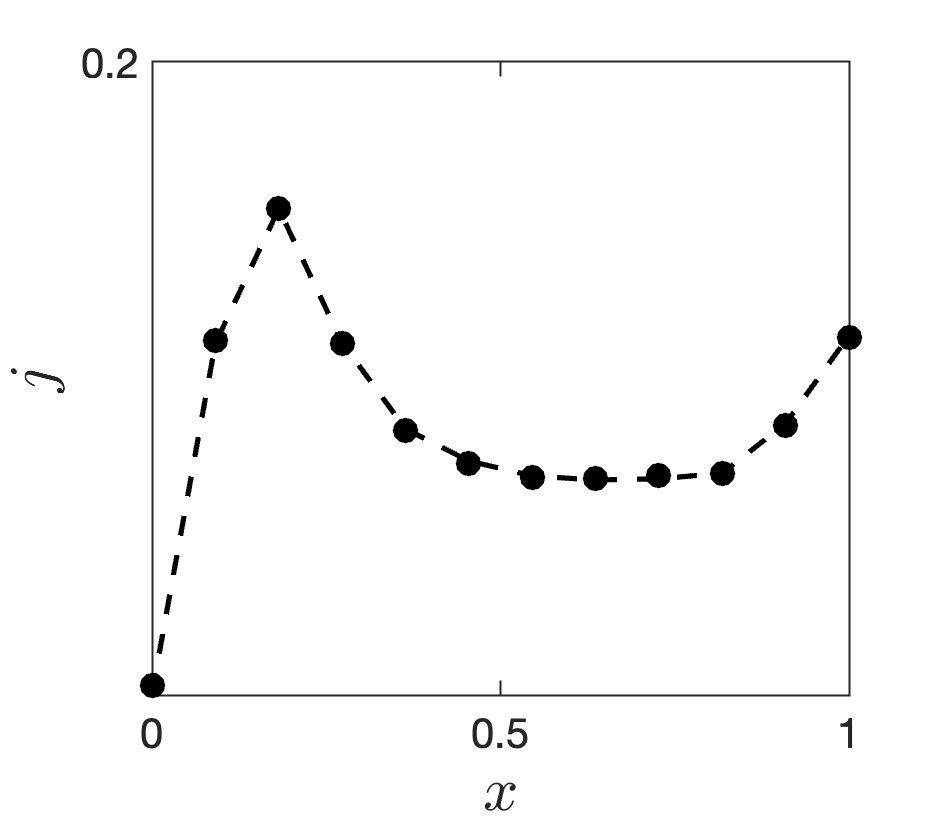}} \\
    \subfigure[the mass density $\rho$ for $\varepsilon=10^{-8}$]{\includegraphics[width=0.425\linewidth]{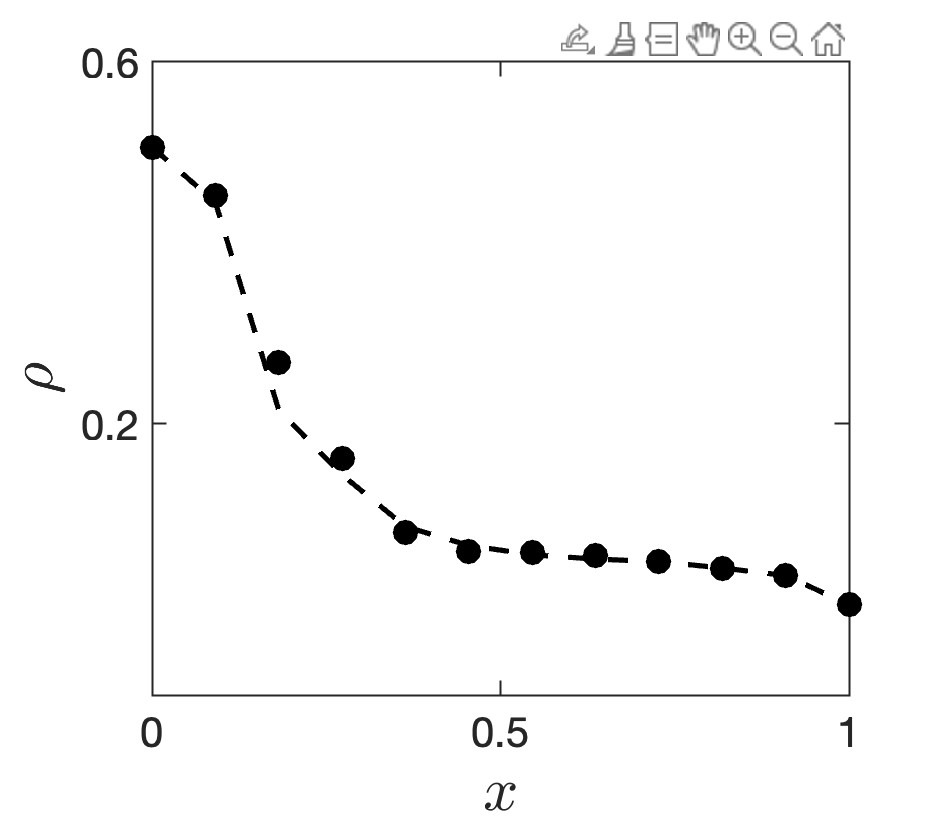}}
    \subfigure[the mass flux $j$ for $\varepsilon=10^{-8}$]{\includegraphics[width=0.425\linewidth]{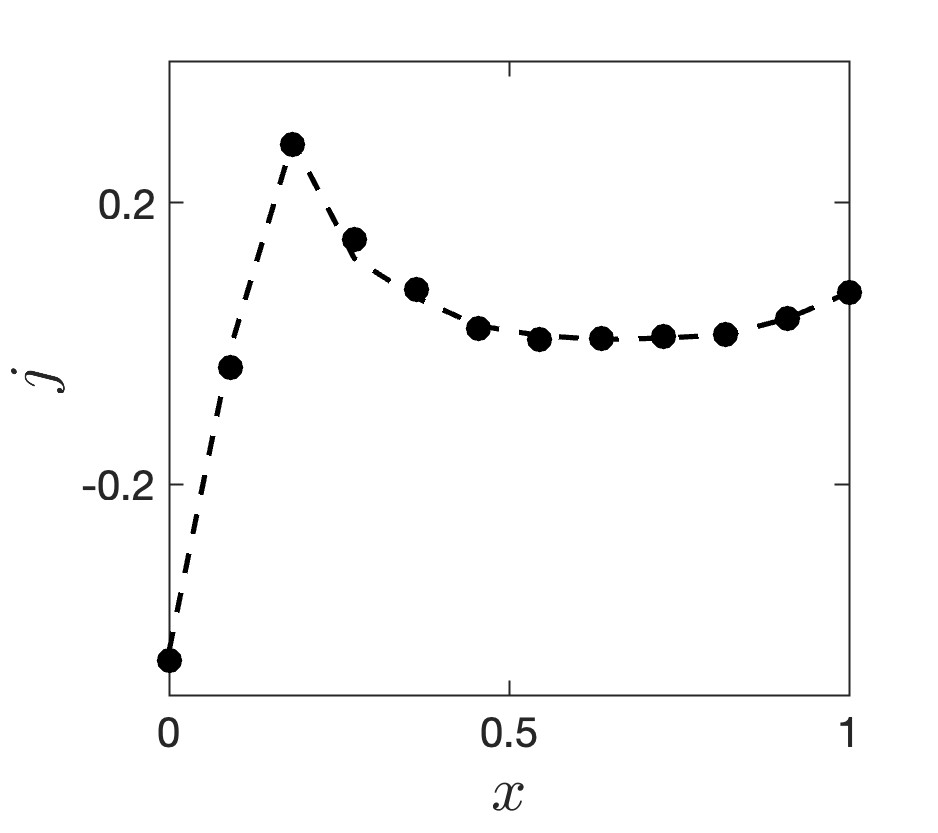}}
    \caption{\label{fig:result:5} Solution at time $t=0.05$ for Problem III using the iterative method, with parameters $\Delta x = 1/10$, $\Delta t = \Delta x^2$ and $N_p = 2^7$.}
\end{figure}
\begin{figure}[htbp]
    \centering
    \subfigure[the mass density $\rho$ for $\varepsilon=10^{-1}$]{\includegraphics[width=0.425\linewidth]{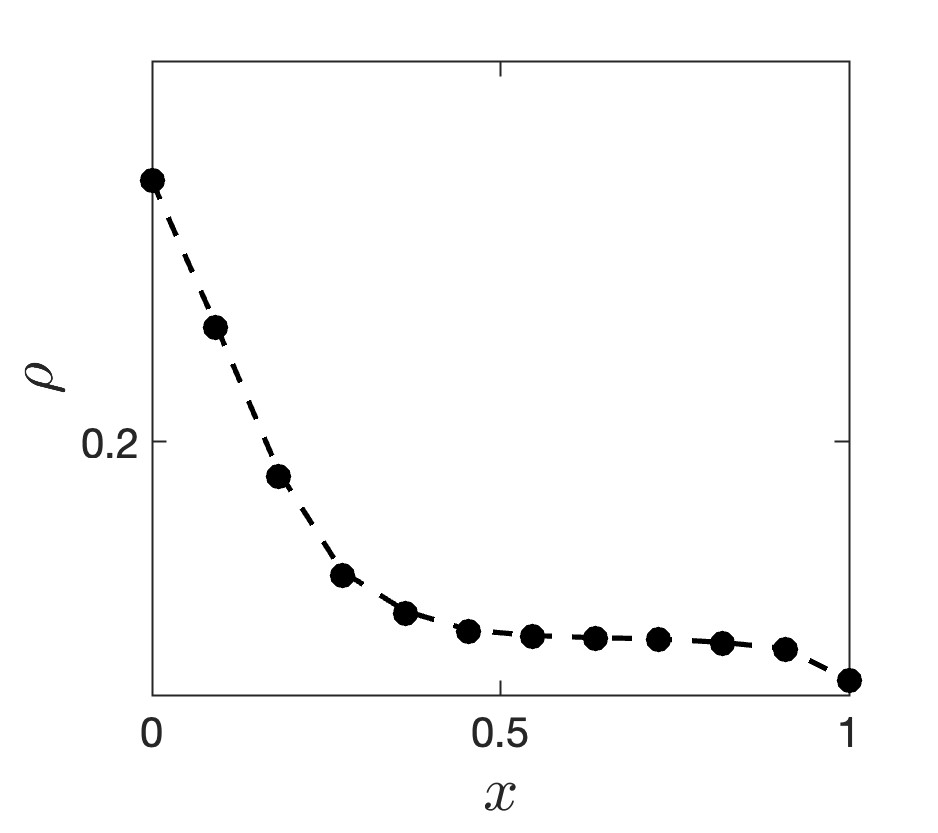}}
    \subfigure[the mass flux $j$ for $\varepsilon=10^{-1}$]{\includegraphics[width=0.425\linewidth]{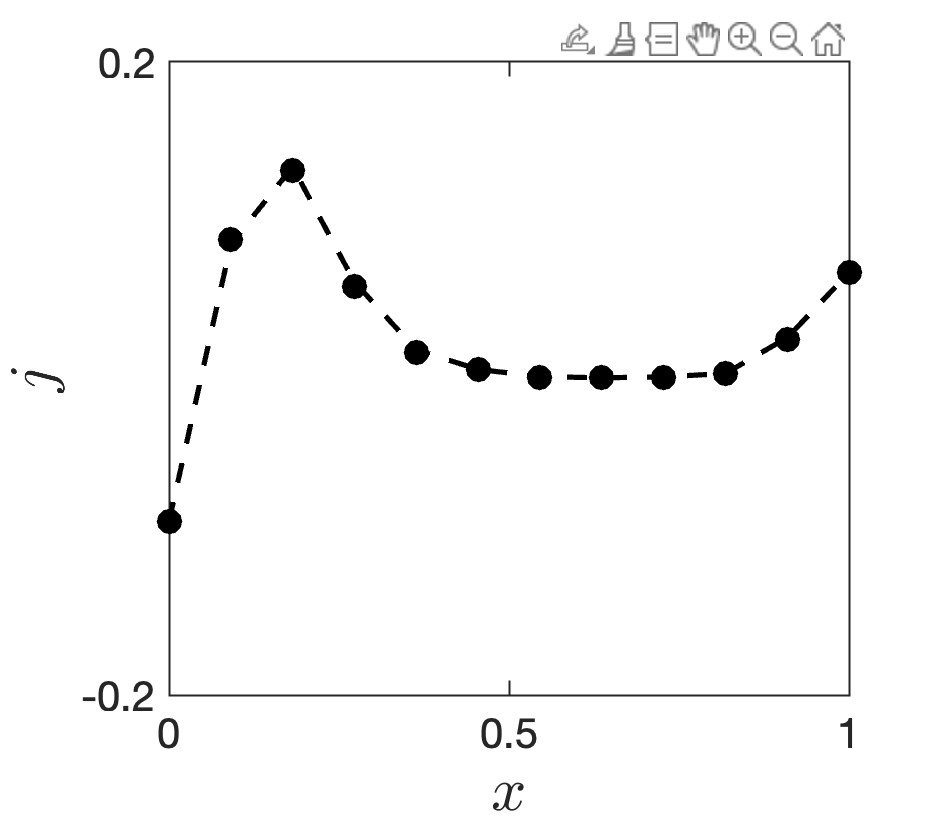}}\\
    \subfigure[the mass density $\rho$ for $\varepsilon=10^{-8}$]{\includegraphics[width=0.425\linewidth]{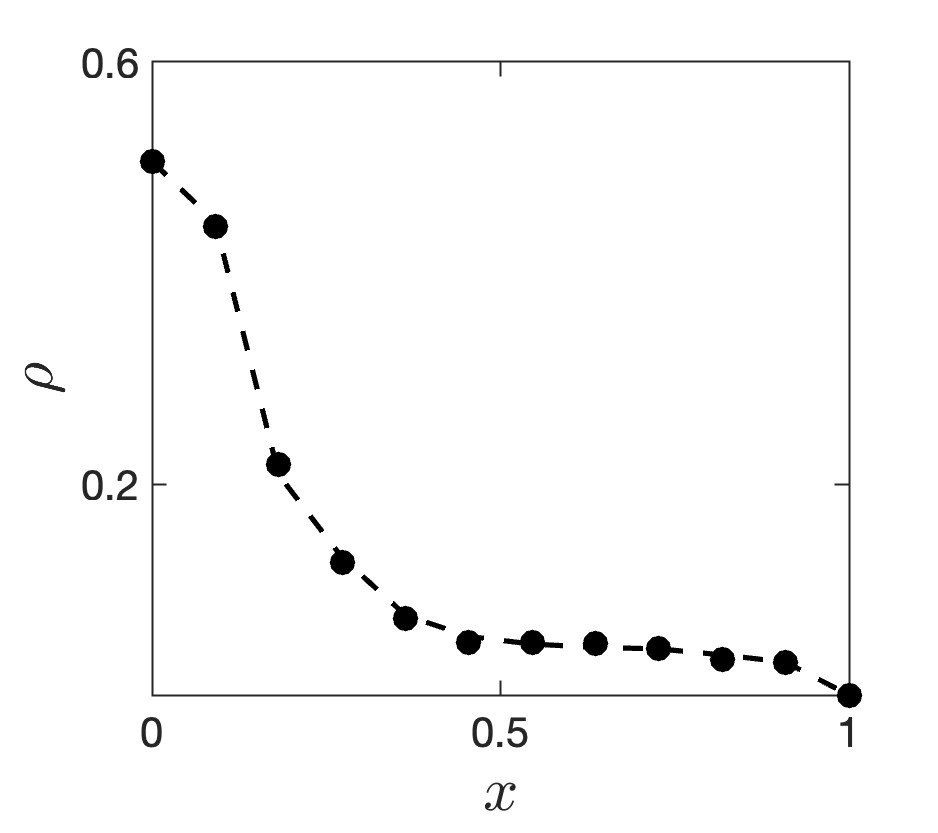}}
    \subfigure[the mass flux $j$ for $\varepsilon=10^{-8}$]{\includegraphics[width=0.425\linewidth]{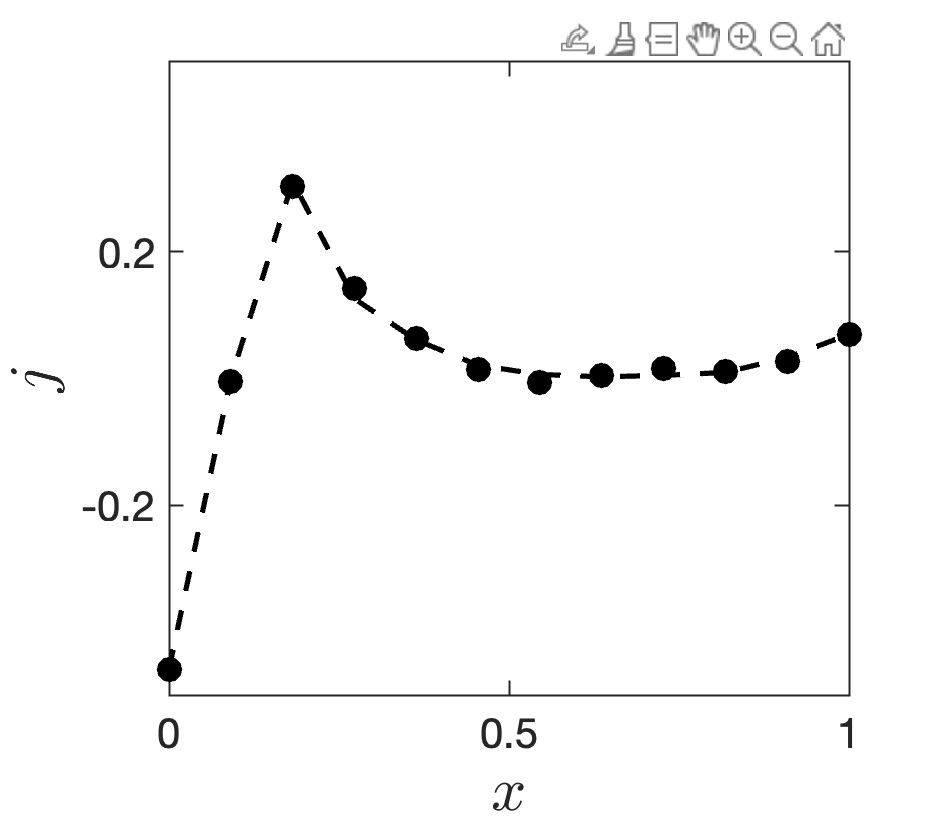}}
    \caption{\label{fig:result:6} Solution at time $t=0.05$ for Problem III using the steady-state solution method, with parameters $\Delta x = 1/10$, $\Delta t = 10/11\Delta x^2$ and $N_p = 2^7$.}
\end{figure}
\section{Conclusions and Discussions}
\label{sec:Conclusions and Discussions}
\par In this paper, we propose two Hamiltonian simulation algorithms for multiscale linear transport equations, both of which utilize Schr\"odingerization and exponential integrator methods and consider the incoming boundary condition. Our algorithmic framework differs from the existing quantum algorithm based on the HHL algorithm proposed by He et al.\cite{Xiaoyang2023TimeCA}, as it is easier to implement on hardware, thus holding greater practical significance. Moreover, the complexities of the two methods we developed: the iterative method presented in Eq. (\ref{theorem:The iterative methods}) and the steady-state solution method presented in Eq. (\ref{theorem:Steady-state solution}), are both $\mathcal{O}(N_vN_x^2\log N_x)$, which is superior than that proposed by He et al\cite{Xiaoyang2023TimeCA} $\mathcal{O}(N_v^2N_x^2\log N_x)$, and provide a polynomial improvement over the $\mathcal{O}(N_v^2N_x^3)$ of traditional algorithms.
\par Our new strategies include the following. First, the preprocessing described in Section \ref{section:preprocessing} effectively handles the term $W$, which is a matrix composed of Gaussian point coefficients, thereby reducing the influence of $N_v$ when estimating query complexity. Then, the iterative method proposed in Section \ref{sec:The iterative methods} takes into account the solution of the state at a specific moment. As a result, we have improved the iterative method based on Schr\"odingerization proposed by Jin et al.\cite{Jin2024QuantumSO} by transforming the approach of calculating evolution time based on fidelity into considering the relationship between the time in the iterative method and the time in the original system. Finally, the steady-state solution method introduced in Section \ref{sec:Steady-state solution} improves upon the Quantum IMEX algorithm proposed by Hu et al.\cite{Hu2025preprint} and incorporates the relaxation-convection scheme into their framework to provide a quantum algorithm for multiscale equations. This allows Hu et al.'s framework to better address multiscale problems or issues involving stiff terms.
\par Although we have only studied multiscale linear transport equations using two frameworks, they can be easily applied to other multiscale problems, such as the multiscale Vlasov-Poisson-Fokker-Planck system. Moreover, although our second method is more complex than the first and uses more quantum bits, its structure can be readily applied to time-dependent PDEs and physical boundary problems, offering greater scalability.

\addcontentsline{toc}{section}{Code Availability}
\section*{Code Availability}
\par The code that support the findings of the main text and the supplement information are will be publicly available upon acceptance.

\addcontentsline{toc}{section}{Declaration of competing interest}
\section*{Declaration of competing interest}
The authors declare that they have no known competing financial interests or personal relationships that could have appeared to influence the work reported in this paper.

\addcontentsline{toc}{section}{Acknowledgement}
\section*{Acknowledgement}
\par SJ was supported by NSFC grant Nos. 12341104 and 12426637, the Shanghai Science and Technology Innovation Action Plan 24LZ1401200,  
the Shanghai Jiao Tong University 2030 Initiative, and the Fundamental Research Funds for the Central Universities. XYH thanks Qitong Hu at Shanghai Jiao Tong University for his advice on eigenvalue analysis.

\addcontentsline{toc}{section}{References}
\bibliographystyle{plain} 


\appendix 
\addcontentsline{toc}{section}{Appendix}
\section{Essential properties and lemmas}
\par First, we present and prove the following eigenvalue perturbation theorem.
\begin{lemma}
\label{lemma:perturbed}
Consider a symmetric matrix $A \in \mathbb{R}^{n \times n}$ and a symmetric perturbed matrix $\varepsilon E$ with $E=\mathcal{O}(1)$, where $\varepsilon$ is a small perturbation parameter. Let $\lambda_i$ be the $i$-th eigenvalue of $A$ and $\xi_i$ be its corresponding normalized eigenvector (i.e., $\|\xi_i\|_2 = 1$), and let $\lambda_{\varepsilon,i}$ be the $i$-th eigenvalue of the perturbed matrix $A+\varepsilon E$, which admits an asymptotic expansion of the form $\lambda_{\varepsilon,i} = \lambda_i+\varepsilon \lambda_i^{(1)} + \mathcal{O}(\varepsilon^2)$ for sufficiently small $\varepsilon$.
Then, the first-order perturbation of the eigenvalue is given by
\begin{equation*}
\lambda_{\varepsilon,i}=\lambda_i+\varepsilon \frac{\xi_i^HE\xi_i}{\xi_i^H\xi_i}+\mathcal{O}(\varepsilon^2),
\end{equation*}
where $\xi_i^H$ denotes the conjugate transpose of $\xi_i$ (which reduces to $\xi_i^T$ for real vectors).
\end{lemma}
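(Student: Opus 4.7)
The plan is to carry out a standard perturbation expansion. I will start by positing that, along with the assumed asymptotic expansion of the eigenvalue $\lambda_{\varepsilon,i} = \lambda_i + \varepsilon \lambda_i^{(1)} + \mathcal{O}(\varepsilon^2)$, the corresponding perturbed eigenvector also admits a first-order expansion $\xi_{\varepsilon,i} = \xi_i + \varepsilon \xi_i^{(1)} + \mathcal{O}(\varepsilon^2)$. This is justified for a simple eigenvalue by the analytic perturbation theory of self-adjoint matrices (Rellich--Kato), and a brief remark to that effect will be sufficient.

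Next I would substitute both expansions into the perturbed eigenvalue equation
\begin{equation*}
(A + \varepsilon E)\,\xi_{\varepsilon,i} = \lambda_{\varepsilon,i}\,\xi_{\varepsilon,i},
\end{equation*}
multiply out, and collect terms by order in $\varepsilon$. The zeroth-order identity reproduces the unperturbed equation $A\xi_i = \lambda_i \xi_i$, which is already assumed. The first-order identity is
\begin{equation*}
A\,\xi_i^{(1)} + E\,\xi_i \;=\; \lambda_i\,\xi_i^{(1)} + \lambda_i^{(1)}\,\xi_i.
\end{equation*}

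The decisive step is then to test this identity against $\xi_i$ by left-multiplying with $\xi_i^H$. Using the symmetry (Hermiticity) of $A$, one has $\xi_i^H A = \lambda_i \xi_i^H$, so the contributions involving the unknown correction $\xi_i^{(1)}$ cancel:
\begin{equation*}
\lambda_i\,\xi_i^H \xi_i^{(1)} + \xi_i^H E\,\xi_i \;=\; \lambda_i\,\xi_i^H \xi_i^{(1)} + \lambda_i^{(1)}\,\xi_i^H \xi_i.
\end{equation*}
Solving for $\lambda_i^{(1)}$ yields $\lambda_i^{(1)} = \xi_i^H E\,\xi_i / \xi_i^H \xi_i$, which, when reinserted into the assumed expansion of $\lambda_{\varepsilon,i}$, gives the stated formula.

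The main obstacle is not algebraic but conceptual: one must ensure that the perturbed eigenvector varies smoothly with $\varepsilon$ so that the expansion I posited is legitimate. For a simple eigenvalue this follows from the implicit function theorem applied to the characteristic equation (or from Rellich's theorem for symmetric analytic families), and I would note this explicitly. If $\lambda_i$ has multiplicity greater than one, the perturbation generically splits it into several branches and the formula must be interpreted as a diagonalization of $E$ restricted to the unperturbed eigenspace; since the statement as written concerns a single eigenvalue with associated eigenvector, I would restrict attention to the simple case and flag the multiple-eigenvalue caveat in a short concluding remark.
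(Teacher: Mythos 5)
Your proof is correct and follows essentially the same route as the paper's: posit expansions for both eigenvalue and eigenvector, collect terms at order $\varepsilon$, left-multiply by $\xi_i^H$, and use the symmetry of $A$ to cancel the unknown $\xi_i^{(1)}$. The only difference is that you explicitly justify the eigenvector expansion via Rellich--Kato and flag the simple-eigenvalue restriction, a point of rigor the paper's proof glosses over.
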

\begin{proof}
Using the definition of eigenvalues and eigenvectors, we consider the perturbed eigenproblem:
\begin{equation*}
(A+\varepsilon E)(\xi_i+\varepsilon \xi_i^{(1)}+\cdots)=(\lambda_i+\varepsilon \lambda_i^{(1)}+\cdots)(\xi_i+\varepsilon \xi_i^{(1)}+\cdots),
\end{equation*}
where $\xi_i^{(1)}$ represents the first-order correction to the eigenvector. Expanding this equation and collecting terms of the same order in $\varepsilon$ yields:
\begin{equation}
\begin{aligned}
\label{equ:perturbed:1}
&\mathcal{O}(1): A\xi_i=\lambda_i\xi_i, \quad \text{(original eigenproblem)}\\
&\mathcal{O}(\varepsilon): A\xi_i^{(1)}+E \xi_i =\lambda_i \xi_i^{(1)}+\lambda_i^{(1)}\xi_i, \quad \text{(first-order perturbation)}\\
&\cdots \cdots 
\end{aligned}
\end{equation}
To eliminate $\xi_i^{(1)}$, we multiply $\xi_i^H$ on the left of the $\mathcal{O}(\varepsilon)$ equation and use the symmetry of $A$ ($A^T = A$) and the original eigenproblem: $\xi_i^H A \xi_i^{(1)} + \xi_i^H E \xi_i = \lambda_i \xi_i^H \xi_i^{(1)} + \lambda_i^{(1)} \xi_i^H \xi_i$.
Since $\xi_i^H A = \lambda_i \xi_i^H$, the terms involving $\xi_i^{(1)}$ cancel out, leaving: $\xi_i^H E \xi_i = \lambda_i^{(1)} \xi_i^H \xi_i$. Solving for $\lambda_i^{(1)}$ gives $\lambda_i^{(1)} = \frac{\xi_i^HE\xi_i}{\xi_i^H\xi_i}$, and thus the perturbed eigenvalue becomes $\lambda_{\varepsilon,i}=\lambda_i+\varepsilon \frac{\xi_i^HE\xi_i}{\xi_i^H\xi_i}+\mathcal{O}(\varepsilon^2)$ and this completes the proof.
\end{proof}
\begin{lemma}
\label{lemma:appendix:A:2}
\par Consider a partitioned matrix consisting of $m$ blocks arranged as:
\begin{equation*}
    \begin{aligned}
        B=\begin{pmatrix}
        B_{11} & B_{12} & \cdots & B_{1m} \\
        B_{21} & B_{22} & \cdots & B_{2m} \\
        \vdots & \vdots & \ddots & \vdots \\
        B_{m1} & B_{m2} & \cdots & B_{mm}
        \end{pmatrix},
    \end{aligned}
\end{equation*}
where each $B_{ij}$ is a square submatrix. The singular values of this matrix obey:
\begin{equation}
    \begin{aligned}
        \label{equ:appendix:A:2}
        \| B\|\leq\sum_{k=-(m-1)}^{m-1}\sup_{j-i=k}\| B_{ij}\|,
    \end{aligned}
\end{equation}
with $\| \cdot\|$ representing any operator norm.
\qed
\end{lemma}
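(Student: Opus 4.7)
\textbf{Proof proposal for Lemma \ref{lemma:appendix:A:2}.} The plan is to decompose $B$ into its $2m-1$ block diagonals and apply the triangle inequality, then bound each block-diagonal piece individually by exploiting its essentially block-diagonal structure. Concretely, for each integer $k$ with $-(m-1)\le k\le m-1$, define $B^{(k)}$ to be the block matrix obtained from $B$ by keeping the entries with $j-i=k$ and setting every other block to zero. Then $B=\sum_{k=-(m-1)}^{m-1}B^{(k)}$, and the triangle inequality for the chosen operator norm gives $\|B\|\le \sum_{k=-(m-1)}^{m-1}\|B^{(k)}\|$. So the whole lemma reduces to proving, for each $k$, the single-diagonal bound
\begin{equation*}
\|B^{(k)}\|\le \sup_{j-i=k}\|B_{ij}\|.
\end{equation*}

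First I would handle this single-diagonal bound in the case of the spectral norm, which is what is actually used in the body of the paper. Writing a block vector $x=(x_1,\dots,x_m)^\top$, the action of $B^{(k)}$ sends $x$ to the block vector whose $i$-th slot is $B_{i,i+k}x_{i+k}$ (when $1\le i,i+k\le m$) and zero otherwise. Because the different output slots involve different input slots, we get
\begin{equation*}
\|B^{(k)}x\|_2^2=\sum_i \|B_{i,i+k}x_{i+k}\|_2^2\le \Bigl(\sup_{j-i=k}\|B_{ij}\|_2\Bigr)^2 \sum_i \|x_{i+k}\|_2^2 \le \Bigl(\sup_{j-i=k}\|B_{ij}\|_2\Bigr)^2\|x\|_2^2,
\end{equation*}
which after taking the supremum over $x$ yields the desired inequality. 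The cleaner viewpoint is that there exist permutation matrices $P,Q$ (reshuffling the block rows and columns) so that $PB^{(k)}Q$ is block-diagonal with diagonal blocks $\{B_{i,i+k}\}_i$, and since the spectral norm is invariant under such permutations and equals the maximum block norm for a block-diagonal matrix, we immediately get $\|B^{(k)}\|=\max_i\|B_{i,i+k}\|\le \sup_{j-i=k}\|B_{ij}\|$.

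Summing over $k$ then delivers the claimed estimate
\begin{equation*}
\|B\|\le \sum_{k=-(m-1)}^{m-1}\sup_{j-i=k}\|B_{ij}\|.
\end{equation*}
The part that requires some care, and where I expect the main obstacle, is justifying the single-diagonal bound for \emph{any} operator norm, as the statement of the lemma advertises. The permutation argument above is not universal: it relies on the underlying vector norm being absolute and permutation-invariant (as for the $p$-norms). For a truly arbitrary operator norm one cannot reshuffle coordinates freely. I would therefore either restrict the statement to operator norms induced by absolute, permutation-invariant vector norms (which covers the spectral $2$-norm used in Section~\ref{section:upperbound} and all the $p$-norms), or, if the fully general statement is desired, argue by a direct block-wise estimate using only the submultiplicativity and triangle inequality of the operator norm together with an auxiliary projector construction that isolates each block row. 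Since only the $2$-norm is actually invoked in the application, the spectral-norm version is what needs to be emphasized in the proof.
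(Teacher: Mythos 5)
The paper states this lemma without proof (the statement ends directly with \qed), so there is no paper proof to compare against. Your argument is sound for the case that matters: the decomposition $B=\sum_k B^{(k)}$ into block diagonals, the triangle inequality, and the single-diagonal bound via the permutation/direct-sum viewpoint, are exactly the standard route, and they establish the result for the spectral $2$-norm used throughout Section~\ref{section:upperbound} (indeed for any operator norm induced by an absolute, permutation-invariant vector norm on the block components).

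Your caveat about ``any operator norm'' is not only a legitimate caution --- it is correct, and the lemma as stated is in fact too strong. A concrete counterexample with scalar blocks: take $m=2$, $1\times 1$ blocks, and
\begin{equation*}
B=\begin{pmatrix}0&1\\0&0\end{pmatrix},
\end{equation*}
so that the only nonzero diagonal is $k=1$ with $\sup_{j-i=1}\|B_{ij}\|=1$, and \eqref{equ:appendix:A:2} would assert $\|B\|\le 1$. But for the weighted vector norm $\|x\|_V=|x_1|+c\,|x_2|$ with $0<c<1$, one computes the induced operator norm $\|B\|=1/c>1$. The obstruction is precisely the one you identify: $B^{(k)}$ acts as a block shift, and a non--permutation-invariant norm is not stable under shifts. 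So the correct scope of the lemma is the permutation-invariant case, and your proof, restricted to that setting (which includes the $2$-norm actually invoked in the paper), is complete. It would be worth tightening the lemma's statement accordingly; your proof then closes the argument cleanly via either the direct sum-of-squares calculation or the block-permutation to block-diagonal form.
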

\begin{lemma}
\label{lemma:appendix:B:2}
\par The matrices $L_h$ and $D_h$ possess the following properties.
\begin{itemize}
    \item Regarding $L_h$, its eigenvalues take the form $-2+2\sin\left(\frac{\pi(2k-1)}{2(N_x+1)}\right)$ for $k=1,\dots,N_x$. 
    The dominant eigenvalue equals $-2+2\sin\left(\frac{\pi}{2(N_x+1)}\right)$ and remains negative. The smallest eigenvalue is $-2+2\sin\left(\frac{(2N_x-1)\pi}{2(N_x+1)}\right)$
    and exceeds $-4$.
    \item 
    For $D_h$, the square $D_h^2$ can be expressed as
    \begin{equation*}
        \begin{aligned}
            D_h^2 =
            \begin{pmatrix}
            -1 & 0 & 1 &&\\
            0 & -2 & 0 &&\\
            1 & 0 & -2 && \ddots \\
            &&& \ddots && 1 \\
            && \ddots & & -2 & 0\\
            &&& 1 & 0 & -1
            \end{pmatrix}.
        \end{aligned}
    \end{equation*}
    The spectrum of $D_h$ consists of $-2i\sin\left(\frac{\pi(2k-1)}{2(N_x+1)}\right)$, $k=1,\dots,N_x$. Thus, $D_h^2$ has eigenvalues $-4\sin^2\left(\frac{\pi(2k-1)}{2(N_x+1)}\right)$, $k=1,\dots,N_x$, 
    where the largest eigenvalue of $D_h^2$ is non-positive, and the smallest eigenvalue is strictly greater than $-4$.
\end{itemize}
\qed
\end{lemma}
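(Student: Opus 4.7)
The plan is to compute the spectra of the two tridiagonal Toeplitz-type matrices $L_h$ and $D_h$ by the classical sinusoidal eigenvector ansatz, then to derive the structure and spectrum of $D_h^2$ from this, and finally to read the stated bounds off elementary trigonometric estimates.

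First, for $L_h = \operatorname{tridiag}(1,-2,1)$ of order $N_x$, I would try the ansatz $\xi^{(k)}_j = \sin(j\theta_k)$, $j=1,\ldots,N_x$, and substitute into the interior rows of $L_h\xi^{(k)} = \lambda_k\xi^{(k)}$. The product-to-sum identity $\sin((j-1)\theta)+\sin((j+1)\theta)=2\cos\theta\sin(j\theta)$ immediately yields the eigenvalue relation $\lambda_k = -2+2\cos\theta_k$. The top and bottom rows of $L_h$ are equivalent to imposing the virtual Dirichlet values $\xi^{(k)}_0 = \xi^{(k)}_{N_x+1} = 0$, which quantizes the allowed angles $\theta_k$. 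Applying the half-angle identity $\cos\alpha = \sin(\pi/2-\alpha)$ and relabeling then recasts the spectrum in the stated sine form. The claim that the dominant eigenvalue is strictly negative and the smallest is strictly greater than $-4$ then reduces to the bound $0 < \sin(\,\cdot\,) < 1$ over the relevant range.

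Second, for $D_h = \operatorname{tridiag}(-1,0,1)$, I would repeat the procedure. Since $D_h$ is real skew-symmetric, its spectrum is purely imaginary. Using the same sinusoidal ansatz (with complex amplitude to reflect the skew-symmetry), the recurrence $\xi_{j+1}-\xi_{j-1}=\mu\xi_j$ together with $\sin((j+1)\theta)-\sin((j-1)\theta)=2\cos\theta\sin(j\theta)$ yields $\mu_k = \pm 2i\sin\theta_k$, and the same boundary quantization produces the claimed form $-2i\sin(\pi(2k-1)/(2(N_x+1)))$ after the same phase-shift reindexing as in the $L_h$ step. A useful cross-check is that $D_h$ and $L_h$ share the same scalar Chebyshev-type recurrence for their characteristic polynomials, so they are diagonalized by eigenvectors sharing the same phase structure.

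Third, for $D_h^2$ I would verify the displayed entrywise form by direct row-by-row multiplication, noting that the corner entries $(1,1)$ and $(N_x,N_x)$ come out to $-1$ (not $-2$) precisely because the first and last rows of $D_h$ each carry only a single off-diagonal nonzero. Once this representation is established, the spectrum of $D_h^2$ is obtained from the identity $\sigma(M^2)=\{\mu^2:\mu\in\sigma(M)\}$, giving eigenvalues $-4\sin^2(\pi(2k-1)/(2(N_x+1)))$. The two asserted inequalities then follow from $0 \le \sin^2(\,\cdot\,) < 1$ on the relevant range of arguments.

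The main obstacle I expect is the careful boundary bookkeeping, since it is exactly the first and last rows where the Toeplitz pattern of $L_h$ and $D_h$ breaks: the phase shift needed to convert the familiar $\cos(k\pi/(N_x+1))$ expression into the half-integer sine expression stated in the lemma, and the verification that the quantization of $\theta_k$ is consistent with the claimed indexing $k=1,\ldots,N_x$, is where the bulk of the computation must be executed cleanly.
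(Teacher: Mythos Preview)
The paper provides no proof of this lemma; it is stated as a standard fact and closed with \qed. Your plan via the sinusoidal ansatz and Dirichlet quantization $\theta_k=k\pi/(N_x+1)$ is exactly the textbook derivation and will correctly produce $\lambda_k(L_h)=-2+2\cos\frac{k\pi}{N_x+1}$ and $\mu_k(D_h)=2i\cos\frac{k\pi}{N_x+1}$, together with the displayed form of $D_h^2$ and the four inequalities. So your approach is sound for everything the paper actually \emph{uses} downstream (namely the bounds $\lambda_{\max}(L_h)<0$, $\lambda_{\min}(L_h)>-4$, $\lambda_{\max}(D_h^2)\le 0$, $\lambda_{\min}(D_h^2)>-4$).

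The one step you flagged as the ``main obstacle'' --- reindexing via $\cos\alpha=\sin(\tfrac{\pi}{2}-\alpha)$ to reach the half-integer sine form $-2+2\sin\frac{(2k-1)\pi}{2(N_x+1)}$ --- will in fact not go through, and this is not a defect in your argument but in the stated formula. The angle sets $\{\tfrac{k\pi}{N_x+1}\}_{k=1}^{N_x}$ and $\{\tfrac{(2k-1)\pi}{2(N_x+1)}\}_{k=1}^{N_x}$ are disjoint (even versus odd numerators over $2(N_x+1)$), and a direct check at $N_x=2$ gives true eigenvalues $\{-1,-3\}$ for $L_h$ while the lemma's formula yields $\{-1,0\}$; likewise $D_h$ has eigenvalues $\pm i$ but the stated formula gives $\{-i,-2i\}$. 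So when you carry out the computation, expect to arrive at the cosine form, verify the four bounds from that, and note the discrepancy with the printed sine expressions rather than trying to force the relabeling.
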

\begin{theorem}
\label{lemma:appendix:B:3}
\par \textbf{(Weyl's inequality).} For any $n\times n$ Hermitian matrices $X$ and $Y$, define $Z=X+Y$. Let their eigenvalues be ordered non-decreasingly as: $\lambda_1(X)\le\cdots\le\lambda_n(X)$, $\lambda_1(Y)\le\cdots\le\lambda_n(Y)$, and $\lambda_1(Z) \le\cdots\le\lambda_n(Z)$. Then, for every $k=1,\cdots,n$, these bounds apply:
\begin{equation*}
    \begin{aligned}
        \lambda_k(X)+\lambda_1(Y)\le\lambda_k(Z)\le\lambda_k(X)+\lambda_n(Y).
    \end{aligned}
\end{equation*}
\qed
\end{theorem}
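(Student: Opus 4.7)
The approach is to derive Weyl's inequality from the Courant--Fischer min--max characterization of eigenvalues of a Hermitian matrix. Recall that for any Hermitian $M\in\mathbb{C}^{n\times n}$ with ordered eigenvalues $\lambda_1(M)\le\cdots\le\lambda_n(M)$, the $k$-th eigenvalue admits the variational representation
\begin{equation*}
\lambda_k(M)=\max_{\substack{S\subseteq\mathbb{C}^n\\\dim S=k}}\ \min_{\substack{v\in S\\ v\neq 0}}\frac{\langle v,Mv\rangle}{\langle v,v\rangle}.
\end{equation*}
I would cite this fact as known (it follows from the spectral decomposition of $M$ and a standard dimension-counting argument intersecting a candidate $k$-dimensional subspace with the span of the top $n-k+1$ eigenvectors). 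The only other ingredient is the two-sided Rayleigh-quotient bound for a Hermitian $Y$, namely $\lambda_1(Y)\langle v,v\rangle\le\langle v,Yv\rangle\le\lambda_n(Y)\langle v,v\rangle$ for every $v\neq 0$, which is immediate from diagonalizing $Y$ in an orthonormal eigenbasis.

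For the upper bound $\lambda_k(Z)\le\lambda_k(X)+\lambda_n(Y)$, I would start from the pointwise identity $\langle v,Zv\rangle=\langle v,Xv\rangle+\langle v,Yv\rangle$ and apply the upper Rayleigh bound for $Y$ to obtain $\langle v,Zv\rangle\le\langle v,Xv\rangle+\lambda_n(Y)\langle v,v\rangle$ for any nonzero $v$. Dividing by $\langle v,v\rangle$, taking the minimum over nonzero $v$ in a fixed subspace $S$ of dimension $k$, and then the maximum over all such $S$ yields the claimed inequality through the min--max formula applied first to $Z$ on the left-hand side and then to $X$ on the right-hand side; the $\lambda_n(Y)$ term factors out of both min and max because it does not depend on $v$ or $S$.

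For the lower bound $\lambda_k(X)+\lambda_1(Y)\le\lambda_k(Z)$, I would use the lower Rayleigh bound $\langle v,Yv\rangle\ge\lambda_1(Y)\langle v,v\rangle$ instead, obtaining $\langle v,Zv\rangle\ge\langle v,Xv\rangle+\lambda_1(Y)\langle v,v\rangle$, and apply the same min--max argument in reverse: taking min over $v\in S$ and then max over $S$ on both sides, one arrives at $\lambda_k(Z)\ge\lambda_k(X)+\lambda_1(Y)$. Because the entire argument reduces to variational characterizations combined with pointwise Rayleigh bounds, there is no genuine obstacle to speak of — this is a textbook proof. The only care needed is to keep the direction of the inequalities consistent when min and max are interchanged, and to invoke (or briefly justify) the Courant--Fischer theorem, which is the only non-trivial ingredient.
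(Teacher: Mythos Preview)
Your proof is correct and is the standard textbook argument via the Courant--Fischer min--max principle. The paper itself does not prove this statement at all: it is stated as a known result and closed with a \qed\ immediately after the display, so there is no ``paper's own proof'' to compare against---your proposal simply supplies what the paper omits.
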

\section{Error analysis for the 2-norm of the matrix exponential}
\label{section:error:analysis}
\par In this section, we analyze the error between the two matrix exponentials $\|e^{-HT}\|_2$ and $\|e^{-\hat{H}T}\|_2$, where $\hat{H}$ comprises all terms in $H$ excluding those of order $\mathcal{O}(e^{-\tau/\varepsilon^2})$, which are instead entirely contained in the matrix $E$. The proof leverages the Laplace transform and its inverse. Specifically, we begin by evaluating the difference $\left|\|e^{-HT}\|_2 - \|e^{-\hat{H}T}\|_2\right|$ through the resolvent expression $(sI - H)^{-1} - (sI - \hat{H})^{-1}$:
\begin{equation*}
    \begin{aligned}
        \left|\|e^{-HT}\|_2 - \|e^{-\hat{H}T}\|_2\right|\le\|e^{-HT}-e^{-\bar{H}T}\|_2&=\|\mathcal{L}^{-1}\left[(sI-H)^{-1}-(sI-\bar{H})^{-1}\right]\|_2\\
        &\approx \|\mathcal{L}^{-1}\left[(sI-\bar{H})^{-1}E(sI-\bar{H})^{-1}\right]\|_2.
    \end{aligned}
\end{equation*}
Therefore, we use the inverse Laplace transform. First, we recognize that $\mathcal{L}[(sI - \bar{H})^{-1}] = e^{-\bar{H}T}$ and $\mathcal{L}[E(sI - \bar{H})^{-1}] = Ee^{-\bar{H}T}$, which is a relationship where the variables after Laplace transformation are multiplied. Thus, we can utilize the convolution property of the Laplace transform to obtain the value after the inverse Laplace transform of this term:
\begin{equation*}
    \begin{aligned}
        \mathcal{L}^{-1}\left[(sI-\bar{H})^{-1}E(sI-\bar{H})^{-1}\right]=\int_0^te^{\bar{H}\tau}Ee^{\bar{H}(T-\tau)}d\tau.
    \end{aligned}
\end{equation*}
We can use the result in Eq. (\ref{equ:exp_H_esi}) to give an upper bound for $\|e^{-HT}-e^{-\bar{H}T}\|_2$ as follows
\begin{equation*}
    \begin{aligned}
        \left|\|e^{-HT}\|_2 - \|e^{-\hat{H}T}\|_2\right|&\le\int_0^T\|e^{\bar{H}\tau}\|_2\cdot\cdot\|e^{\bar{H}(T-\tau)}\|_2d\tau\\
        &\le T\|E\|_2\left(2+\frac{\|\bar{A}_2\|_2}{\|\bar{A}_1\|_2}\right)^2e^{(-1+\|\bar{A}_1\|_2)T},
    \end{aligned}
\end{equation*}
this indicates that the influence of $E$ on the matrix exponential is of $\mathcal{O}(e^{-\frac{\tau}{\varepsilon^2}})$, and $T$ is of $\mathcal{O}(\frac{1}{\tau})$. Therefore, we only need to ensure that $\varepsilon=o(h/\log h^{-1})$.
\section{Bounds for the 2-norm of matrices}
\subsection{Upper bound for the 2-norm of\texorpdfstring{ $\bar{A}_1$}{}}
\label{section:A1:upper}
\par Through calculation, we can determine the specific structure of $\bar{A}_1$ as follows. In order to better solve its 2-norm, we decompose it into the following matrix additions and multiplications:
\begin{equation}
\begin{aligned}
\label{equ:A1:define}
\bar{A}_1 &= (W_d^{\frac{1}{2}}\otimes I)\left[
(I+\frac{\lambda}{2} (V\otimes I)(I\otimes L_h+\frac{\varepsilon-1}{\varepsilon}E_\nabla\otimes I_2))({W}\otimes I)\right.\\
&\quad\left.+\frac{\lambda}{4h}
(V\otimes D_h)({VW}\otimes I)
(I\otimes D_h+E_\nabla\otimes I_1)
\right](W_d^{-\frac{1}{2}}\otimes I)\\
&:=M_1+E_1+M_2+E_1,
\end{aligned}
\end{equation}
in which $M_1$, $E_1$, $M_2$, $E_1$ are defined as
\begin{equation*}
    \begin{aligned}
    M_1&=(W_d^{\frac{1}{2}}\otimes I)\left(\frac{5}{11}I+\frac{\lambda}{2} (VW\otimes L_h)\right)(W_d^{-\frac{1}{2}}\otimes I)=\left(\frac{5}{11}I+\frac{\lambda}{2} (V\otimes L_h)\right)M_3:=\hat{M}_1M_3,\\
    E_1&=(W_d^{\frac{1}{2}}\otimes I)\left(\frac{\lambda(\varepsilon-1)}{2\varepsilon} (VE_\nabla W\otimes I_2)\right)(W_d^{-\frac{1}{2}}\otimes I)=\left(\frac{\lambda(\varepsilon-1)}{2\varepsilon} (VE_\nabla \otimes I_2)\right)M_3:=\hat{E}_1M_3,\\
    M_2&=(W_d^{\frac{1}{2}}\otimes I)\left(\frac{6}{11}I+\frac{\lambda}{4h}(V^2W\otimes D_h^2)\right)(W_d^{-\frac{1}{2}}\otimes I)=\left(\frac{6}{11}I+\frac{\lambda}{4h}(V^2\otimes D_h^2)\right)M_3:=\hat{M}_2M_3,\\
    E_2&=(W_d^{\frac{1}{2}}\otimes I)\left(\frac{\lambda}{4h}(V^2WE_\nabla\otimes D_hI_1)\right)(W_d^{-\frac{1}{2}}\otimes I)=\frac{\lambda}{4h}(V^2\otimes D_hI_1)M_3(E_\nabla\otimes I),
    \end{aligned}
\end{equation*}
where $M_3=W_d^{\frac{1}{2}} \mathbf{1}\mathbf{1}^T W_d^{\frac{1}{2}} \otimes I $. In the simplification, we frequently use that $V$ and $E_\nabla$ are diagonal matrices. If one does not pre-process $W \otimes I$ to $A_1$, the 2-norm of $M_3$ will be extremely large, making it impossible to perform theoretical analysis and solve for the evolution time, and here we can calculate $\|M_3\|_2$ as
\begin{equation*}
    \begin{aligned}
    \|M_3\|_2&=\|W_d^{\frac{1}{2}}\mathbf{1}\mathbf{1}^TW_d^{\frac{1}{2}}\|_2=1.
    \end{aligned}
\end{equation*}
Therefore, in the following derivation of the upper bound for the 2-norm of $\bar{A}_1$, we use the following relaxation upper bound:
\begin{equation}
\label{equ:A1:main}
\|\bar{A}_1\|_2\le\|\hat{M}_1+\hat{E}_1\|_2\cdot\|M_3\|_2+\|M_2+E_2\|_2.
\end{equation}
\par First, we estimate the upper bound of $\|\hat{M}_1 + \hat{E}_1\|_2$. Clearly, both $\hat{M}_1$ and $\hat{E}_1$ are symmetric matrices, so one can use the spectral radius to compute the $2-$norm, which is the largest absolute value among the eigenvalues. Therefore, we only need to focus on computing the largest and smallest eigenvalues. Moreover, if one can ensure that the smallest eigenvalue of $\hat{M}_1 + \hat{E}_1$ is greater than $0$, one can use $\lambda_{\max}(\hat{M}_1 + \hat{E}_1) = \|\hat{M}_1 + \hat{E}_1\|_2$. Next, we use Weyl's inequality to calculate the minimum eigenvalue of $\hat{M}_1 + \hat{E}_1$ that satisfies
\begin{align*}
\lambda_{\min}(\hat{M}_1 + \hat{E}_1) 
&\ge\lambda_{\min}(\hat{M}_1) + \lambda_{\min}(\hat{E}_1)\ge \frac{5}{11}-2\lambda-\frac{\lambda}{2h}>0,
\end{align*}
where if the condition $\frac{\tau}{h^2}<\frac{10}{11(\frac{4}{h}+1)}<\frac{10}{11}$ is satisfied, one can use Weyl's inequality to obtain the following inequality:

\begin{equation}
\begin{aligned}
\label{equ:A1:part1}
&\|\hat{M}_1+\hat{E}_1\|_2=\lambda_{\max}(\hat{M}_1 + \hat{E}_1)
\le\lambda_{\max}(\hat{M}_1) + \lambda_{\max}(\hat{E}_1)\\
&\le \frac{5}{11}-\frac{\lambda}{2} V_{min}\left(1-\cos\left(\frac{\pi}{N_x+1}\right)\right)\lesssim  \frac{5}{11}-\frac{\pi^2}{2}V_{min}\frac{N_x}{N_t(N_x+1)^2},
\end{aligned}
\end{equation}
where $V_{min}=\min\limits_{i=1}^{N_v}v_i$ and we use the fact that $\hat{E}_1$ is a diagonal matrix with all negative elements, meaning its largest eigenvalue is negative.

\par Next, we estimate the upper bound of $\|M_2 + E_2\|_2$. This calculation is slightly more complex because the right-hand side of $E_2$ cannot be decomposed into $M_3$. However, one can observe that $M_2$ is obtained by multiplying the symmetric matrix $\hat{M}_2$ with another symmetric matrix $\hat{M}_3$, and the value of $E_2$ is of $\mathcal{\varepsilon}$. Therefore, one can use perturbation analysis, which involves first solving $\|\hat{M}_2M_3\|_2$ and then solving $\|M_2 + E_2\|_2$. Before this, we first analyze the properties of $M_3$. Since it is evident that $W_d^{\frac{1}{2}}\mathbf{1}$ is a rank-1 matrix, we can let $\hat{M}_3=W_d^{\frac{1}{2}}\mathbf{1}\otimes I$, thereby obtaining $M_3=\hat{M}_3\hat{M}_3^T$. Based on this, one can derive
\begin{equation*}
    \begin{aligned}
        \|\hat{M}_2M_3\|_2&=\lambda_{\max}\left(\hat{M}_2\hat{M}_3\hat{M}_3^T\hat{M}_3\hat{M}_3^T\hat{M}_2^T\right)^{\frac{1}{2}}=\left\|\hat{M}_3^T\hat{M}_2^2\hat{M}_3\right\|^{\frac{1}{2}},
    \end{aligned}
\end{equation*}
thus transforming the 2-norm problem of a non-symmetric matrix into that of a symmetric matrix, and
\begin{equation}
    \begin{aligned}
        \label{equ:A1:eigenvector}
        \|\hat{M}_2M_3\|_2&=\left\|(\mathbf{1}^TW_d^{\frac{1}{2}}\otimes {I})\left(\frac{6}{11}I+\frac{\lambda}{2}\cdot \frac{1}{2h}({V}^2\otimes D_h^2)\right)^2(W_d^{\frac{1}{2}}\mathbf{1}\otimes I)\right\|_2^{\frac{1}{2}}\\
        &=\lambda_{\max}\left(\left(\frac{2\sqrt{5}}{11}I+\frac{1}{4\sqrt{5}}\frac{\lambda}{h}D_h^2\right)^2+\frac{16}{121}I\right)^{\frac{1}{2}},
    \end{aligned}
\end{equation}
where we employ the identity $\mathbf{1}^TW_d^{\frac{1}{2}}V^kW_d^{\frac{1}{2}}\mathbf{1}=\frac{1}{k+1}$ in this step. Similarly, one can add the condition $\lambda_{\min}(A)>0$ to ensure that $\lambda_{\max}(A^2)=\lambda_{\max}^2(A)$, i.e., when the following condition holds
\begin{equation*}
\lambda_{\min}\left(\frac{2\sqrt{5}}{11}I+\frac{1}{4\sqrt{5}}\frac{\lambda}{h}D_h^2\right)
=
\frac{2\sqrt{5}}{11}+\frac{1}{4\sqrt{5}}\frac{\lambda}{h}\lambda_{\min}(D_h^2)
\ge
\frac{2\sqrt{5}}{11}-\frac{1}{\sqrt{5}}\frac{\lambda}{h},
\end{equation*}
that is, when $\frac{\lambda}{h}<\frac{10}{11}$ holds, one can estimate the upper bound of $\|\hat{M}_2M_3\|_2$ as
\begin{equation}
    \begin{aligned}
    \label{equ:A1:part2}
    \|\hat{M}_2M_3\|_2
    &\le \left(\left(\frac{2\sqrt{5}}{11}-\frac{1}{\sqrt{5}}\frac{\lambda}{h}\cos^2\left(\frac{\pi\lfloor \frac{N_x+1}{2}\rfloor}{N_x+1}\right)\right)^2+\frac{16}{121}\right)^{\frac{1}{2}}\lesssim \frac{6}{11}-\frac{\lambda}{3h}\frac{\pi^2}{(N_x+1)^2}.
    \end{aligned}
\end{equation}
Note that here we need $N_x$ to be an even number, which can be well satisfied by quantum computers.
\par Finally, we use the calculated norm $\|M_2\|_2$ combined with perturbation analysis to provide an upper bound for $\|M_2+E_2\|_2$. First, we have $\|M_2+E_2\|_2=\lambda_{\max}(M_2^TM_2+M_2^TE_2+E_2^TM_2)^{\frac{1}{2}}+o(\varepsilon^2)$. Therefore, one can give an estimate for $\|M_2+E_2\|_2$ based on the maximum eigenvalue $\zeta$ of $M_2^TM_2$ as follows:
\begin{equation*}
    \begin{aligned}
    \|M_2+E_2\|_2=\left(\|M_2\|_2+\frac{\zeta^T(M_2^TE_2+E_2^TM_2)\zeta}{\zeta^T\zeta}\right)^{\frac{1}{2}}+o(\varepsilon^2).
    \end{aligned}
\end{equation*}
Therefore, our tasks below are twofold: first, to calculate the expression for $\zeta$, and second, to compute the value of the fraction. Since $M_2^T M_2$ can be expressed as $\hat{M}_3 \hat{M}_3^T \hat{M}_2^2 \hat{M}_3 \hat{M}_3^T$, we first need to recognize that $\hat{M}_3^T \hat{M}_3 = I$. Therefore, we let the eigenvector corresponding to the largest eigenvalue of $\hat{M}_3^T \hat{M}_2^2 \hat{M}_3$ be $\xi$, and thus $\zeta = \hat{M}_3 \xi$. As for $\xi$, we can infer from the solution process of Eq. (\ref{equ:A1:eigenvector}) that it is the eigenvector corresponding to the $\lfloor \frac{N_x + 1}{2} \rfloor$-th eigenvalue of $D_h^2$, i.e., $| \xi_i | = \left| \sin \left( \frac{i \lfloor \frac{N_x + 1}{2} \rfloor \pi}{N_x + 1} \right) \right|$. Therefore, we can obtain the upper bound of the fraction as follows
\begin{align*}
&\frac{\zeta^T(M_2^TE_2+E_2^TM_2)\zeta}{\zeta^T\zeta}
=\frac{\xi^T\hat{M}_3^T(\hat{M_2}E_2+E_2^T\hat{M}_2)\hat{M}_3\xi}{\xi^T\xi}\\
&=\frac{(\mathbf{1}^TW_d^{\frac{1}{2}}\otimes \xi^T)(\hat{M_2}E_2+E_2^T\hat{M}_2)(W_d^{\frac{1}{2}}\mathbf{1}\otimes\xi)}{\xi^T\xi}:\lesssim\frac{B+B^T}{N_x+1},
\end{align*}
where $B=(\mathbf{1}^TW_d^{\frac{1}{2}}\otimes \xi^T)\hat{M_2}E_2(W_d^{\frac{1}{2}}\mathbf{1}\otimes\xi)$ and $\sum\limits_{i=1}^{N_x}\sin\left(\frac{ik\pi}{N_x+1}\right)^2=\frac{N_x+1}{2},(k=1,\cdots,N_x)$. The exact value of $B$ is difficult to solve, but we can obtain its upper bound as follows:
\begin{align*}
B
&=\frac{\lambda\cdot(\mathbf{1}^TW_d^{\frac{1}{2}}E_\nabla W_d^{\frac{1}{2}}\mathbf{1})}{4h}(\mathbf{1}^TW_d^{\frac{1}{2}}\otimes \xi^T)\left(\frac{6}{11}I+\frac{\lambda}{4h}(V^2\otimes D_h^2)\right)(V^2\otimes D_hI_1)(W_d^{\frac{1}{2}}\mathbf{1}\otimes \xi)\\
&=\frac{\lambda\cdot(\mathbf{1}^TW_d^{\frac{1}{2}}E_\nabla W_d^{\frac{1}{2}}\mathbf{1})}{4h}\left(\frac{6}{11}(\mathbf{1}^TW_d^{\frac{1}{2}}V^2W_d^{\frac{1}{2}}\mathbf{1}\otimes \xi^TD_hI_1\xi)+\frac{\lambda}{4h}(\mathbf{1}^TW_d^{\frac{1}{2}}V^4W_d^{\frac{1}{2}}\mathbf{1}\otimes \xi^TD_h^2D_hI_1\xi)\right)\\
&\lesssim \frac{\lambda\varepsilon}{h^2}\left(\frac{2}{11}+\frac{\lambda}{20h}\mu_2\right)|\xi^TD_h\cdot I_1\xi|,
\end{align*}
in which $\mu_2=-4\cos\left(\frac{\lfloor \frac{N_x+1}{2}\rfloor\pi}{N_x+1}\right)^2\sim \frac{1}{(N_x+1)^2}$ is the $\lfloor\frac{N_x+1}{2}\rfloor$-eigenvalue for $D_h^2$, so our upper bound solution for the fraction is transformed into an upper bound estimation for $|\xi^TD_h\cdot I_1\xi|$:
\begin{align*}
|\xi^TD_h\cdot I_1\xi|\le |\xi_{N_x}||\xi_{N_x-1}|+|\xi_1||\xi_2|\lesssim \frac{1}{N_x+1},
\end{align*}
where $\xi_i$ represents the $i$-th value of the vector $\xi$. By combining the above three inequalities, we obtain $\frac{\zeta^T(M_2^TE_2+E_2^TM_2)\zeta}{\zeta^T\zeta}=o(N_x^{-2})$ when the condition $\varepsilon=o(h/\log h^{-1})$ is satisfied. Therefore, one can use the dominant term $\|M_2\|_2$ to estimate $\|M_2+E_2\|_2$.
\par Combining Eqs. (\ref{equ:A1:part1}), (\ref{equ:A1:part2}) and (\ref{equ:A1:main}), one can conclude a upper bound for $\|\bar{A}_1\|_2$ under the condition $\varepsilon=o(h/\log h^{-1})$ and $\tau/h^2<\frac{10}{11}$:
\begin{align}
\label{equ:A1:result}
\|\bar{A}_1\|_2\le 1-\frac{\pi^2}{2}V_{min}\frac{N_x}{N_t(N_x+1)^2}-\frac{\lambda}{3h}\frac{\pi^2}{(N_x+1)^2},
\end{align}
and $1-\|\bar{A}_1\|_2)\succsim \frac{1}{N_t} =
\mathcal{O}(\frac{1}{N_x^2})$.

\subsection{Lower bound for the 2-norm of\texorpdfstring{ $\bar{A}_1$}{}}
\label{section:A1:lower}
\par For the lower bound of the 2-norm of $\bar{A}_1$, one can use the property that the 2-norm is greater than the maximum norm, thereby transforming this lower bound problem into finding a lower bound for a certain value in $\bar{A}_1$. One may select $\frac{5}{11}(M_3)_{ij}$ from $M_1$ presented in Eq. (\ref{equ:A1:define}), where $i, j$ are arbitrary values, which is $\frac{5}{11}(w_iw_j)^{\frac{1}{2}}$. Using the property $\mathbf{1}^TW_d\mathbf{1}=1$, one can deduce that there exists an $i = j$ such that $w_i \ge \frac{1}{N_v}$. Therefore, we obtain a lower bound for the 2-norm of $\bar{A}_1$:
\begin{align}
\label{equ:A1:result2}
\|\bar{A}_1\|_2\gtrsim \frac{1}{N_v}.
\end{align}

\subsection{Upper bound for the 2-norm of\texorpdfstring{ $\bar{A}_2$}{}}
\label{section:A2:upper}
\par For the upper bound estimation of $\|\bar{A}_2\|_2$, we can provide the following structure of $\bar{A}_2$. Similar to the treatment of $\bar{A}_1$, we decompose it into the following matrix:
\begin{equation*}
\begin{aligned}
\bar{A}_2 &= 
-\frac{1}{N_x}
\left[ 
(I+\frac{\lambda}{4h} V\otimes L_h)
(V\otimes I)(W\otimes I)
(I\otimes D_h+E_\nabla\otimes I_1)\right.\\
&\quad\left.+\frac{\lambda}{2}
(V\otimes I)(I\otimes D_h+\frac{\varepsilon-1}{\varepsilon}E_\nabla\otimes I_1)(W\otimes I)
\right]
(W_d^{-\frac{1}{2}}\otimes I)\\
&:=-\frac{1}{N_x}(L_1+F_1+L_2+F_2),
\end{aligned}
\end{equation*}
in which $L_1$, $F_1$, $L_2$, $F_2$ are defined as
\begin{equation*}
\begin{aligned}
L_1&=(I+\frac{\lambda}{4h} V\otimes L_h)
(VW\otimes D_h)(W_d^{-\frac{1}{2}}\otimes I)=(I+\frac{\lambda}{4h} V\otimes L_h)
(V\otimes D_h)(WW_d^{-\frac{1}{2}}\otimes I):=\hat{L}_1L_3,\\
F_1&=(I+\frac{\lambda}{4h} V\otimes L_h)
(VWE_\nabla\otimes I_1)(W_d^{-\frac{1}{2}}\otimes I)\\
&=(I+\frac{\lambda}{4h} V\otimes L_h)
(V\otimes I)(WW_d^{-\frac{1}{2}}\otimes I)(E_\nabla\otimes I_1):=\hat{F}_{11}L_3\hat{F}_{12},\\
L_2&=\frac{\lambda}{2}
(V\otimes D_hW)(W_d^{-\frac{1}{2}}\otimes I)=\frac{\lambda}{2}
(V\otimes D_h)(WW_d^{-\frac{1}{2}}\otimes I):=\hat{L}_2L_3,\\
F_2&=\frac{\lambda(\varepsilon-1)}{2\varepsilon}
(VE_\nabla W\otimes I_1)(W_d^{-\frac{1}{2}}\otimes I)=\frac{\lambda(\varepsilon-1)}{2\varepsilon}
(VE_\nabla\otimes I_1)(WW_d^{-\frac{1}{2}}\otimes I):=\hat{F}_2L_3,
\end{aligned}
\end{equation*}
in which $L_3=(WW_d^{-\frac{1}{2}}\otimes I)$. Therefore, we transform the upper bound estimation problem of $\bar{A}_2$ into the upper bound estimation of $L_1$, $F_1$, $L_2$, and $F_2$, where the 2-norm of $L_3$ can be directly computed:
\begin{equation*}
\|L_3\|_2=\|WW_d^{-\frac{1}{2}}\otimes I\|_2=\lambda_{\max}(\mathbf{1}\mathbf{1}^TW_d\mathbf{1}\mathbf{1}^T)^{\frac{1}{2}}=N_v^{\frac{1}{2}}.
\end{equation*}
\par As for the upper bounds on the 2-norm of the remaining items $\hat{L}_1$, $\hat{F}_{11}$, $\hat{F}_{12}$, $\hat{L}_2$, and $\hat{F}_2$, we can provide a very rough estimate, the specific process of which is as follows:
\begin{equation*}
\begin{aligned}
\|\hat{L}_1\|_2\le (1+\frac{\lambda}{h})+2=\mathcal{O}(1)&,\quad \|\hat{F}_{11}\|_2\le (1+\frac{\lambda}{h})=\mathcal{O}(1),\\
\hat{F}_{12}\le \frac{\varepsilon}{h}=o(1),\quad \hat{L}_2&\le \lambda,\quad \hat{F}_2\le \frac{\lambda}{2h}=\mathcal{O}(1).
\end{aligned}
\end{equation*}
Based on this, we can obtain the upper bound of $\bar{A}_2$ as follows
\begin{equation}
\begin{aligned}
\label{equ:A2:result}
\|\bar{B}_1\|_2\le\|L_1\|_2+\|F_1\|_2+\|L_2\|_2+\|F_2\|_2\lesssim N_v^{\frac{1}{2}}
\end{aligned}
\end{equation}
\end{document}